\documentclass{IEEEtran}

\usepackage{graphicx}
\usepackage{algorithm}
\usepackage{amsmath}
\usepackage{amssymb}
\usepackage{cancel}
\usepackage{mathdots}
\usepackage{mathtools}
\usepackage{esint}
\usepackage[version=4]{mhchem}
\usepackage{stackrel}
\usepackage{hyperref}
\hypersetup{
  colorlinks=true,
  linkcolor=blue,
  citecolor=blue,
  urlcolor=blue,
  filecolor=blue
}
\usepackage[caption=false, font=footnotesize]{subfig}
\usepackage{algorithmic}
\usepackage{cite}
\usepackage{xcolor}

\newtheorem{assumption}{Assumption}
\newtheorem{lemma}{Lemma}
\newtheorem{theorem}{Theorem}

\newtheorem{remark}{Remark}

\begin{document}

\title{Fused Constrained Policy Reuse Optimization for Wireless Resource Allocation}
\author{An Liu, Senior Member, IEEE , Zheyuan Zhou and Kexuan Wang \thanks{An Liu, Zheyuan Zhou and Kexuan Wang are with the College of Information Science and Electronic Engineering, Zhejiang University, Hangzhou 310027, China (email: \{anliu, 12331077, kexuanWang\}@zju.edu.cn). (Corresponding Author: An Liu, Zheyuan Zhou)}}
\maketitle
\begin{abstract}
Deep reinforcement learning (DRL) has been widely adopted for wireless resource allocation due to its model-free adaptability. However, online exploration is costly, as randomly initialized policies may violate long-term constraints before sufficient data are collected. Future wireless systems must cope with increasingly dynamic traffic, fluctuating channel conditions, and stringent energy efficiency requirements, demanding algorithms that can learn quickly with minimal environment interactions to reduce both energy consumption and signaling overhead. We develop Fused-CPRO, a knowledge-fused constrained policy reuse optimization method addressing these challenges. Fused-CPRO constructs the allocation policy as a mixture of a learnable target policy, source policies from related scenarios, and domain-knowledge (DK) policies from expert rules, jointly optimizing the target policy and reuse probabilities under a constrained Markov decision process (CMDP). This fusion of heterogeneous priors accelerates convergence and enhances robustness. Constrained stochastic successive convex approximation (CSSCA) handles non-convex objectives and constraints, while a critic trained from mixed offline-online data improves sample efficiency by reusing pre-collected experience. We prove almost-sure convergence to a Karush-Kuhn-Tucker (KKT) point. Simulations on delay-constrained multi-user multiple-input multiple-output (MU-MIMO) power control and Cramer-Rao bound (CRB)-constrained multiple-input multiple-output integrated sensing and communication (MIMO-ISAC) beamforming demonstrate that Fused-CPRO improves empirical performance and converges substantially faster than representative baselines.
\end{abstract}
\begin{IEEEkeywords}
Constrained reinforcement learning, policy reuse, transfer learning, domain knowledge, wireless resource allocation.
\end{IEEEkeywords}
\section{Introduction}
Wireless resource allocation is a fundamental task in communication systems. In modern wireless networks, the network controller needs to allocate radio resources according to environment information such as the current channel state information (CSI), queue state information (QSI), and random traffic arrivals. The allocated resources may include transmit power, precoding-related parameters, and user-priority weights. These decisions influence long-term performance and should satisfy certain constraints or budgets. Future wireless systems face increasingly dynamic traffic patterns, rapidly fluctuating channel conditions, and stringent energy efficiency requirements, making resource allocation more challenging than ever. In such environments, a learning-based allocator must adapt quickly with minimal environment interactions, since each exploratory action incurs energy consumption and signaling overhead that cannot be ignored in practical deployments. Therefore, wireless resource allocation is naturally a long-term constrained decision-making problem that demands fast-converging and sample-efficient learning algorithms.

Model-based and rule-based allocation methods have long been used for such problems. MaxWeight and other queue-aware allocation rules can stabilize traffic-sensitive systems by prioritizing users with large or urgent queues \cite{tassiulas1992stability,stolyar2001largestLWDF}. Lyapunov drift-plus-penalty methods balance queue stability and resource cost through per-slot optimization \cite{neely2010stochastic}. Weighted minimum mean-square error (WMMSE) and regularized zero-forcing (RZF) based designs provide structured and interpretable resource-allocation rules for multi-user MIMO transmission \cite{shi2011wmmse,RZF}. These methods are attractive because their decisions are tied to communication-domain quantities such as queues, channels, interference, and transmit power. However, their performance can be limited by fixed rule structures, modeling assumptions, or parameter choices that do not adapt well across deployment scenarios. Instead of discarding these valuable domain knowledge, a practical learning-based allocator should reuse them as domain-knowledge (DK) policies to accelerate learning and provide reliable fallback decisions.

Deep reinforcement learning (DRL) has been widely studied for radio resource management \cite{chen2020RL4RRM,zangooei2023RL4RRMsurvey}. Constrained RL further allows long-term resource and delay constraints to be included in a constrained Markov decision process (CMDP), and existing methods such as CPO-type methods, SCAOPO, and SLDAC provide principled constrained policy optimization mechanisms \cite{CPO,SCAOPO,wang2024sldac}. Nevertheless, directly training a DRL-based allocation policy in a wireless scenario remains difficult. Online interactions are costly since exploratory actions consume radio resources and may increase delay or violate long-term constraints before the policy has collected sufficient data. Moreover, a randomly initialized DNN policy without policy reuse often has slow early-stage convergence and limited transferability across related communication scenarios.

Transfer learning (TL) and policy reuse offer a method to exploit allocation experience learned in related scenarios \cite{taylor2009transferlearning,fernandez2006policyreuse,fernandez2010policyreuse}. In wireless resource allocation, two types of prior policies are especially useful: source policies trained in related communication scenarios, and DK policies derived from expert allocation rules. The former can provide a warm start for the target scenario, while the latter can provide stable and interpretable fallback decisions in unfamiliar states. However, existing policy-reuse mechanisms often select prior policies through fixed or heuristic reuse probabilities. 

To address this gap, we formulate a Fused Constrained Policy Reuse Optimization (CPRO) problem for wireless resource allocation and develop the Fused-CPRO algorithm to solve it. Fused-CPRO is designed around three key requirements of future wireless resource allocation: (i) fast convergence with minimal online interactions, achieved by adaptively fusing source policies, DK policies, and a learnable target policy into a mixed policy with learnable reuse probabilities, and by leveraging offline data for warm-starting; (ii) principled handling of non-convex stochastic constraints, provided by a constrained stochastic successive convex approximation (CSSCA) actor; and (iii) rigorous reliability guarantees, established through an almost-sure convergence proof to a KKT point. Our main contributions are summarized below.

\begin{itemize}
\item \textbf{A knowledge-fused CMDP formulation for wireless resource allocation:}
We formulate constrained wireless resource allocation as a knowledge-fused CMDP in which a mixed allocation policy is constructed from a learnable target policy, source policies trained in related wireless scenarios, and DK policies derived from model- or rule-based allocation methods. The policy reuse probabilities are optimized together with the target policy, so they serve as adaptive and interpretable weights over heterogeneous allocation priors.

\item \textbf{The Fused-CPRO actor-critic algorithm:}
We develop Fused-CPRO algorithm to solve the proposed formulation. In the actor, constrained stochastic successive convex approximation (CSSCA) jointly updates the target DNN policy and the reuse probabilities, providing principled handling of non-convex stochastic constraints. In the critic, offline data generated by source and DK policies are reused together with recent online samples from the target scenario. This design transfers related allocation experience, injects communication-domain priors, and significantly reduces costly online exploration, enabling fast convergence in dynamic wireless environments.

\item \textbf{Convergence guarantees under heterogeneous policy and data reuse:}
We prove that Fused-CPRO converges to a KKT point under standard regularity and step-size assumptions. The analysis explicitly accounts for the bias induced by policy reuse and offline data reuse, and shows that these bias terms can be controlled without destroying the asymptotic consistency of the actor-critic updates. This theoretical guarantee is essential for deploying learning-based resource allocation in practical systems where reliability is critical.
\end{itemize}
The remainder of the paper is organized as follows: Section~\ref{sec:Preliminaries} introduces a representative MU-MIMO resource allocation model and formulates the constrained policy reuse optimization problem. Sections~\ref{sec:Algorithmic-Framework} and~\ref{sec:CONVERGENCE-ANALYSIS} present the Fused-CPRO algorithm and its theoretical analysis, respectively. Section~\ref{sec:simulation-results} provides simulation results, and Section~\ref{sec:conclusion} concludes the paper.

\section{Constrained Policy Reuse Formulation for Wireless Resource Allocation}
\label{sec:Preliminaries}
This section uses two representative scenarios to illustrate the proposed formulation: delay-constrained downlink MU-MIMO resource allocation and sensing-constrained MIMO-ISAC beamforming. The Fused-CPRO formulation is not restricted to these specific physical-layer models and can be applied to wireless resource allocation tasks that can be represented as CMDPs and equipped with reusable source or domain-knowledge policies.

\subsection{Delay-Constrained Power Control for Downlink MU-MIMO}
\label{subsec:mimo-model}
We consider a downlink MU-MIMO system where a base station (BS) equipped with $N_t$ antennas serves $K$ single-antenna users. At time slot $t$, the BS observes the channel state information (CSI) $\{\mathbf{h}_{k}(t)\}_{k=1}^{K}$ and the queue state information (QSI) $\{Q_{k}(t)\}_{k=1}^{K}$. It then selects a resource-allocation action, such as the transmit-power vector $\mathbf{p}(t)=[p_{1}(t),\ldots,p_{K}(t)]^{T}$ and a precoding-control parameter $\alpha_{Z}(t)$ for regularized zero-forcing (RZF) transmission.

Let $\mathbf{v}_{k}(\alpha_{Z}(t))$ denote the normalized RZF precoder for user $k$ under $\alpha_{Z}(t)$ \cite{RZF}. The downlink rate of user $k$ is
\begin{equation}
R_{k}(t)=B\log_{2}\!\left(1+\frac{p_{k}(t)\left|\mathbf{h}_{k}^{H}(t)\mathbf{v}_{k}(\alpha_{Z}(t))\right|^{2}}{\sum_{j\neq k}p_{j}(t)\left|\mathbf{h}_{k}^{H}(t)\mathbf{v}_{j}(\alpha_{Z}(t))\right|^{2}+\sigma_{k}^{2}}\right),
\label{eq:mimo_rate_model}
\end{equation}
where $B$ is the system bandwidth and $\sigma_k^2$ is the noise power. If $A_k(t)$ denotes the random data arrival rate and $\tau_0$ denotes the slot duration, the queue evolves as
\begin{equation}
Q_{k}(t+1)=\max\!\left\{Q_{k}(t)+A_{k}(t)\tau_{0}-R_{k}(t)\tau_{0},0\right\}.
\label{eq:mimo_queue_model}
\end{equation}

A representative delay-constrained resource-allocation objective is to minimize the long-term average transmit power while satisfying user-wise average delay constraints:
\begin{align}
\underset{\pi}{\mathrm{min}}\quad &\lim_{T\rightarrow\infty}\frac{1}{T}\mathbb{E}_{p_{\pi}}\!\left[\sum_{t=0}^{T-1}\sum_{k=1}^{K}p_{k}(t)\right],\nonumber\\
\mathrm{s.t.}\quad &\lim_{T\rightarrow\infty}\frac{1}{T}\mathbb{E}_{p_{\pi}}\!\left[\sum_{t=0}^{T-1}\frac{Q_{k}(t)}{\lambda_{k}}\right]\leq c_{k},\quad k=1,\ldots,K,
\label{eq:mimo_power_delay_problem}
\end{align}
where $\lambda_k=\mathbb{E}[A_k(t)]$ and $c_k$ is the delay limit of user $k$. 

\subsection{Sum-Rate Maximization for MIMO-ISAC Beamforming with Sensing-Accuracy Constraints}
\label{subsec:isac-model}
We consider a MIMO-ISAC system where a BS equipped with $M$ transmit antennas serves $K$ single-antenna communication users while simultaneously sensing $L$ targets using the same transmit array \cite{huang2024beamformingisac}. The system operates in a target-tracking scenario, where the BS continuously updates estimates of the targets' angles and reflection coefficients over successive frames. We adopt a block-fading model where the communication channel and sensing parameters remain constant within each frame of $N_{\mathrm{symbol}}$ symbols.

At frame $t$, the BS transmits a combination of communication and sensing signals. The transmitted signal matrix $\mathbf{X}(t)\in\mathbb{C}^{M\times N_{\mathrm{symbol}}}$ is given by
\[
\mathbf{X}(t)=\mathbf{W}_{c}(t)\mathbf{S}_{c}(t)+\mathbf{x}_{s}(t)\mathbf{1}^{T},
\]
where $\mathbf{W}_{c}(t)=[\mathbf{w}_{1}(t),\ldots,\mathbf{w}_{K}(t)]\in\mathbb{C}^{M\times K}$ is the communication precoding matrix, $\mathbf{S}_{c}(t)\in\mathbb{C}^{K\times N_{\mathrm{symbol}}}$ contains the i.i.d. communication data symbols with unit energy, and $\mathbf{x}_{s}(t)\in\mathbb{C}^{M}$ is the dedicated sensing signal repeated over the frame. The sensing signal is parameterized by steering vectors toward the predicted target directions
\[
\mathbf{x}_{s}(t)=\sqrt{p_{s}(t)}\frac{\sum_{l=1}^{L}\omega_{l}(t)\mathbf{a}(\hat{\theta}_{l}(t))}{\left\Vert \sum_{l=1}^{L}\omega_{l}(t)\mathbf{a}(\hat{\theta}_{l}(t))\right\Vert },
\]
where $p_{s}(t)$ is the sensing power, $\omega_{l}(t)$ is the beam weight for target $l$, and $\mathbf{a}(\theta)=[1,e^{j\pi\sin\theta},\ldots,e^{j\pi(M-1)\sin\theta}]^{T}$ is the steering vector of a half-wavelength uniform linear array. The communication and sensing functions share a total power budget $p_{s}(t)+\sum_{k=1}^{K}p_{k}(t)\leq P_{\max}$, where $p_{k}(t)=\|\mathbf{w}_{k}(t)\|^{2}$ is the communication power allocated to user $k$.

The state comprises the communication channel matrix $\mathbf{H}(t)=[\mathbf{h}_{1}(t),\ldots,\mathbf{h}_{K}(t)]^{T}$ and the predicted sensing parameters, where $\mathbf{h}_{k}(t)$ is the channel for user $k$. For each target $l$, the BS maintains a prediction of its angle $\hat{\theta}_{l}(t)$ and complex reflection coefficient $\hat{\beta}_{l}(t)$, which can be obtained via dynamic Bayesian learning algorithms such as extended Kalman filtering or dynamic Turbo-CS \cite{huang2024beamformingisac}. The predicted parameter vector is defined as
\[
\hat{\boldsymbol{\eta}}_{l}(t)=\big[\operatorname{Re}\{\hat{\beta}_{l}(t)\},\operatorname{Im}\{\hat{\beta}_{l}(t)\},\hat{\theta}_{l}(t)\big]^{T}.
\]

The action $\boldsymbol{a}^{\mathrm{ISAC}}(t)$ allocates both sensing and communication resources, i.e.,
\[
\boldsymbol{a}^{\mathrm{ISAC}}(t)=\big[\boldsymbol{\omega}^{T}(t),p_{s}(t),p_{1}(t),\ldots,p_{K}(t)\big]^{T},
\]
where $\boldsymbol{\omega}(t)=[\omega_{1}(t),\ldots,\omega_{L}(t)]^{T}$. To reduce the action-space dimensionality, we parameterize the communication precoder as a zero-forcing (ZF) beamformer: $\mathbf{w}_{k}(t)=\sqrt{p_{k}(t)}\mathbf{w}^{\mathrm{ZF}}_{k}(t)/\|\mathbf{w}^{\mathrm{ZF}}_{k}(t)\|$, where $\mathbf{W}^{\mathrm{ZF}}(t)=\mathbf{H}^{H}(t)(\mathbf{H}(t)\mathbf{H}^{H}(t))^{-1}$ \cite{huang2024beamformingisac}. The raw actor output is projected to satisfy the total power constraint.

For communication, since the sensing signal $\mathbf{x}_{s}(t)$ is known at the users after control signaling and channel estimation, its interference can be canceled from the received signal. The signal-to-interference-plus-noise ratio (SINR) of user $k$ is
\[
\mathrm{SINR}_{k}(t)=\frac{\left|\mathbf{h}^{H}_{k}(t)\mathbf{w}_{k}(t)\right|^{2}}{\sum_{j\neq k}\left|\mathbf{h}^{H}_{k}(t)\mathbf{w}_{j}(t)\right|^{2}+\sigma^{2}_{c}},
\]
where $\sigma^{2}_{c}$ is the communication noise power. The sum-rate at frame $t$ is
\begin{equation}
R_{\mathrm{sum}}(t)
=
\sum_{k=1}^{K}
\log_{2}\!\left(
1+
\frac{\left|\mathbf{h}_{k}^{H}(t)\mathbf{w}_{k}(t)\right|^{2}}
{\sum_{j\neq k}\left|\mathbf{h}_{k}^{H}(t)\mathbf{w}_{j}(t)\right|^{2}+\sigma_c^{2}}
\right).
\label{eq:isac_sum_rate}
\end{equation}

For sensing, the targets' parameters are estimated from the echo signal received at the BS. The received sensing signal matrix is
\[
\mathbf{Y}_{s}(t)=\sum_{l=1}^{L}\beta_{l}(t)\mathbf{a}(\theta_{l}(t))\mathbf{a}^{H}(\theta_{l}(t))\mathbf{X}(t)+\mathbf{N}_{s}(t),
\]
where $\beta_{l}(t)$ and $\theta_{l}(t)$ are the true reflection coefficient and angle of target $l$, and $\mathbf{N}_{s}(t)$ is the sensing noise matrix with i.i.d. $\mathcal{CN}(0,\sigma^{2}_{s})$ entries. Let $\boldsymbol{\eta}_{l}(t)$ denote the true parameters reorganized from $\beta_{l}(t)$ and $\theta_{l}(t)$ similar to $\hat{\boldsymbol{\eta}}_{l}(t)$. The Fisher information matrix (FIM) with respect to the unknown parameters $\{\boldsymbol{\eta}_{l}(t)\}_{l=1}^{L}$ quantifies the estimation accuracy. The Cramer-Rao bound (CRB) matrix for target $l$ is given by the corresponding diagonal block of the inverse FIM \cite{huang2024beamformingisac}. Let $\Gamma_{l}(t)$ denote the CRB for $\boldsymbol{\eta}_{l}(t)$, which is a function of $\boldsymbol{\eta}_{l}(t)$ with the detailed expression given in \cite{huang2024beamformingisac}. A smaller sensing cost $\Gamma_{l}(t)$ indicates better sensing accuracy. Since the true parameters $\boldsymbol{\eta}_{l}(t)$ are unknown, we follow \cite{huang2024beamformingisac} and compute the cost using the predicted parameters $\hat{\boldsymbol{\eta}}_{l}(t)$ in place of $\boldsymbol{\eta}_{l}(t)$ during algorithm execution.

The resulting CMDP objective is to maximize the long-term communication sum-rate subject to per-target sensing-accuracy constraints:
\begin{align}
\underset{\pi}{\mathrm{max}}\quad
&\lim_{T\rightarrow\infty}\frac{1}{T}\mathbb{E}_{p_{\pi}}\!\left[
\sum_{t=0}^{T-1}R_{\mathrm{sum}}(t)
\right],\nonumber\\
\mathrm{s.t.}\quad
&\lim_{T\rightarrow\infty}\frac{1}{T}\mathbb{E}_{p_{\pi}}\!\left[
\sum_{t=0}^{T-1}\Gamma_l(t)
\right]\leq \bar{\Gamma}_{l},\quad l=1,\ldots,L.
\label{eq:isac_sumrate_crb_problem}
\end{align}

Note that the action $\boldsymbol{\omega}(t)$ affects the sensing beam directions, which in turn influence the quality of the predicted parameters $\hat{\boldsymbol{\eta}}_{l}(t+1)$ in the next frame. Therefore, this is a CMDP whose state distribution depends on the action, naturally capturing the closed-loop coupling between beamforming decisions and sensing accuracy.

\subsection{Knowledge-Fused CMDP Formulation}
\label{subsec:Problem-Formulation}

\subsubsection{CMDP Representation of Wireless Resource Allocation}
Both examples above fit the same constrained sequential-decision structure. In each slot or frame, the controller observes the scenario-specific wireless state, selects a continuous allocation action, and incurs one objective cost together with several constraint costs. We therefore represent both scenarios using the CMDP framework \cite{CMDP}. More generally, a CMDP is defined by the tuple $(\mathcal{S},\mathcal{A},P,\mathcal{C})$:

\begin{itemize}
\item $\mathcal{S}\subseteq\mathbb{R}^{n_{s}}$: The state space, which contains scenario-specific wireless observations such as CSI, QSI, or sensing-state estimates.
\item $\mathcal{A}\subseteq\mathbb{R}^{n_{a}}$: The action space, which may include transmit powers, beam weights, precoding parameters, and other continuous allocation variables.
\item $P:\mathcal{S}\times\mathcal{A}\times\mathcal{S}\rightarrow[0,1]$: The state transition probability function, where $P(\boldsymbol{s}'\mid\boldsymbol{s},\boldsymbol{a})$ denotes the probability of transitioning to state $\boldsymbol{s}'$ upon taking action $\boldsymbol{a}$ in state $\boldsymbol{s}$.
\item $\mathcal{C}=\{C_{i}\}_{i=0}^{I}$: A set of per-stage cost functions. Here, $C_{0}$ represents the primary objective cost and $\{C_{i}\}_{i=1}^{I}$ represent auxiliary costs to be constrained.
\end{itemize}

A policy $\pi:\mathcal{S}\rightarrow\mathcal{P}(\mathcal{A})$ maps states to probability distributions over actions, with $\pi(\boldsymbol{a}\mid\boldsymbol{s})$ being the probability of selecting action $\boldsymbol{a}$ in state $\boldsymbol{s}$. The policy $\pi$ and the transition dynamics $P$ together induce a probability distribution over trajectories $\tau=\{\boldsymbol{s}_{0},\boldsymbol{a}_{0},\boldsymbol{s}_{1},\boldsymbol{a}_{1},\ldots\}$, denoted by $p_{\pi}$, such that $\boldsymbol{s}_{t+1}\sim P(\cdot\mid\boldsymbol{s}_{t},\boldsymbol{a}_{t})$ and $\boldsymbol{a}_{t}\sim\pi(\cdot\mid\boldsymbol{s}_{t})$.

\subsubsection{Policy Reuse for Heterogeneous Knowledge Integration}
Policy reuse considers $N\geq1$ old policies $\pi_{1},\ldots,\pi_{N}$ and a learnable target policy $\pi_{0}$. The reuse vector $\boldsymbol{\rho}=[\rho_{0},\rho_{1},\ldots,\rho_{N}]^{T}$ lies on the probability simplex, i.e., $\rho_n\geq0$ and $\sum_{n=0}^{N}\rho_n=1$. At each time step, a policy index is sampled according to $\boldsymbol{\rho}$, and the action is then sampled from the selected policy \cite{fernandez2006policyreuse, fernandez2010policyreuse}.

The old-policy pool contains two types of reusable knowledge. The first subset $\{\pi_n\}_{n=1}^{N_1}$ contains source policies trained in related tasks, for example through transfer learning. The second subset $\{\pi_n\}_{n=N_1+1}^{N}$ contains DK policies constructed from domain knowledge, such as rule-based or iterative allocation algorithms. The old policies are frozen, while Fused-CPRO updates the target-policy parameters and the reuse probabilities.

To enable gradient-based optimization and keep a unified representation, we model the source policies and the target policy as Gaussian policies \cite{Gaussianpolicy}. For policy $n\in\{0,1,\ldots,N_{1}\}$, the mean $\boldsymbol{\mu}_{n}$ and the diagonal covariance matrix $\boldsymbol{\Sigma}_{n}$ are output by DNNs parameterized by $\boldsymbol{\psi}_{n}\in\boldsymbol{\Psi}$. Thus, the policy is defined as:
\[
\pi_{n}(\boldsymbol{a}\mid\boldsymbol{s})\propto|\boldsymbol{\Sigma}_{n}|^{-\frac{1}{2}}\exp\left(-\frac{1}{2}(\boldsymbol{\mu}_{n}-\boldsymbol{a})^{\intercal}\boldsymbol{\Sigma}_{n}^{-1}(\boldsymbol{\mu}_{n}-\boldsymbol{a})\right).
\]

DK policies are often deterministic. To incorporate them into the stochastic policy-gradient framework, we apply Gaussian smoothing: each DK policy is represented as $\mathcal{N}(\boldsymbol{\mu}_{n},\epsilon\mathbf{I})$, where $\boldsymbol{\mu}_{n}$ is the deterministic DK action and $\epsilon$ is a small fixed variance. This surrogate gives each DK policy a differentiable density while keeping sampled actions close to the original deterministic output.

\subsubsection{Constrained Policy Reuse Optimization Problem}
The policy-reuse scheme defines a mixed policy $\pi_{\boldsymbol{\theta}}$ parameterized by $\boldsymbol{\theta}=[\boldsymbol{\rho};\boldsymbol{\psi}_{0}]\in\boldsymbol{\Theta}$, where $\boldsymbol{\psi}_{0}$ parameterizes the target policy. The joint optimization of $\boldsymbol{\rho}$ and $\boldsymbol{\psi}_{0}$ is formulated as the following infinite-horizon average-cost constrained problem:
\begin{align}
\underset{\boldsymbol{\theta}\in\boldsymbol{\Theta}}{\mathrm{min}}\quad
J_{0}(\boldsymbol{\theta})
\triangleq &\lim_{T\to\infty}\frac{1}{T}
\mathbb{E}_{p_{\pi_{\boldsymbol{\theta}}}}\!\left[
\sum_{t=0}^{T-1}C_{0}(\boldsymbol{s}_{t},\boldsymbol{a}_{t})
\right]\label{eq:problem 1-1}\\
\mathrm{s.t.}\quad
J_{i}(\boldsymbol{\theta})
\triangleq &\lim_{T\to\infty}\frac{1}{T}
\mathbb{E}_{p_{\pi_{\boldsymbol{\theta}}}}\!\left[
\sum_{t=0}^{T-1}C_{i}(\boldsymbol{s}_{t},\boldsymbol{a}_{t})
\right]-c_{i}\leq0,\nonumber\\
& i=1,\ldots,I,\nonumber
\end{align}
where $p_{\pi_{\boldsymbol{\theta}}}$ denotes the trajectory distribution under the mixed policy $\pi_{\boldsymbol{\theta}}$.

The two wireless examples instantiate these costs directly. For the MU-MIMO model in \eqref{eq:mimo_power_delay_problem}, we set $C_0(\boldsymbol{s}_t,\boldsymbol{a}_t)=\sum_{k=1}^{K}p_k(t)$ and $C_k(\boldsymbol{s}_t,\boldsymbol{a}_t)=Q_k(t)/\lambda_k$ for $k=1,\ldots,K$. For the MIMO-ISAC model in \eqref{eq:isac_sumrate_crb_problem}, we set $C_0(\boldsymbol{s}_t,\boldsymbol{a}_t)=-R_{\mathrm{sum}}(t)$ and $C_l(\boldsymbol{s}_t,\boldsymbol{a}_t)=\Gamma_l(t)$ for $l=1,\ldots,L$. The paper focuses on the infinite-horizon average-cost case; the extension to discounted objectives is straightforward \cite{TwoTimescaleAC, average1, average8}.

\section{Fused-CPRO Algorithm}
\label{sec:Algorithmic-Framework}
\begin{figure}[t]
\noindent
\centering

\includegraphics[width=\columnwidth]{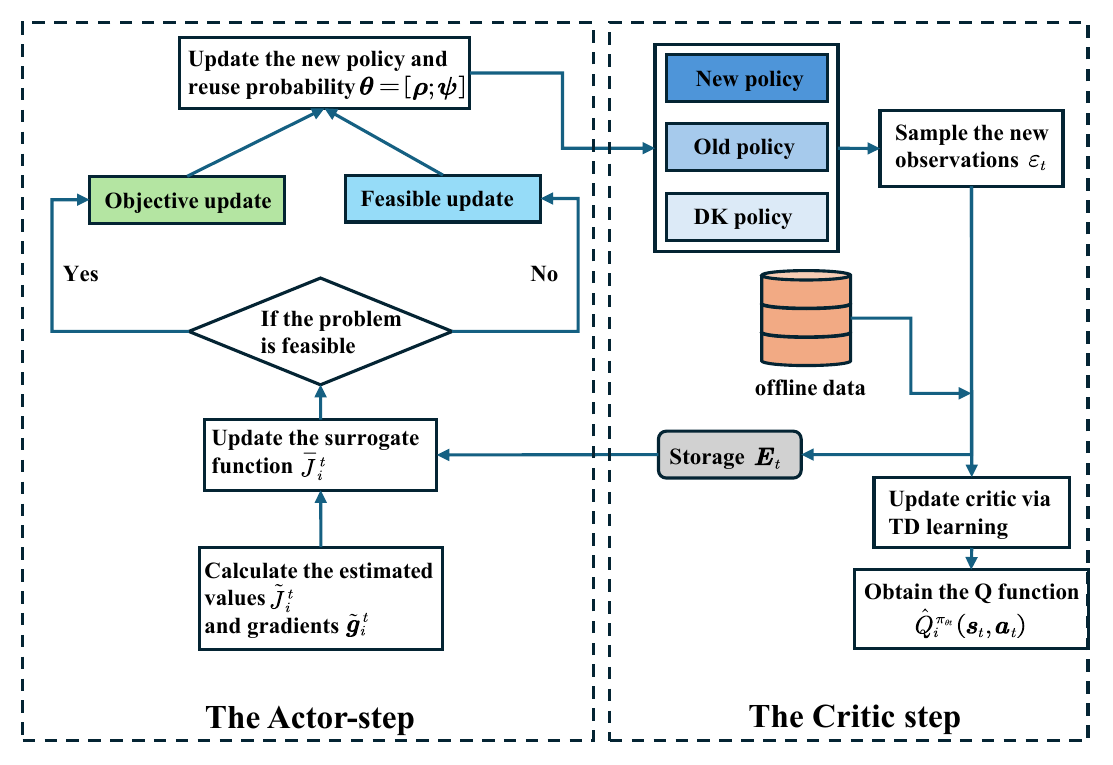}

\caption{The framework of the proposed Fused-CPRO algorithm}\label{fig:diagram of the algorithm}
\end{figure}

This section presents the proposed Fused-CPRO algorithm. As illustrated in Fig.~\ref{fig:diagram of the algorithm}, Fused-CPRO follows a single-loop actor-critic structure: the critic estimates the state-action value functions (Q-functions), while the actor updates both the target policy and the reuse probabilities. The algorithm is designed to operate efficiently in wireless resource allocation scenarios where online interactions incur non-negligible signaling overhead and energy costs. By exploiting offline datasets collected from source and DK policies, Fused-CPRO reduces the number of online samples required before reaching a feasible and high-performing allocation policy. At iteration $t$, the current mixed policy collects a new online sample $\varepsilon_t$ and stores it in the replay buffer. Both critic and actor updates then use a mixed offline-online data reuse scheme to reduce online interaction cost and improve sample efficiency. We first describe the mixed data collection procedure, then derive the policy-reuse gradient estimator, and finally present the critic and actor updates. The full procedure is summarized in Algorithm~\ref{alg:the-Single-Loop-Deep}.

\subsection{Mixed Offline-Online Data Collection}
Fused-CPRO uses a mixed offline-online data collection scheme. We first define the adjusted costs as $C_{0}^{\prime}(\boldsymbol{s},\boldsymbol{a})=C_{0}(\boldsymbol{s},\boldsymbol{a})$ and $C_{i}^{\prime}(\boldsymbol{s},\boldsymbol{a})=C_{i}(\boldsymbol{s},\boldsymbol{a})-c_{i}$ for $i=1,\ldots,I$. In wireless resource allocation, these costs correspond to physical quantities such as transmit power, queue backlog, or sensing error, and the constants $c_i$ represent the operational constraints (e.g., delay limits or CRB thresholds) that the allocation policy must satisfy. The data used by the algorithm come from two sources.

\begin{itemize}
\item Offline datasets ($\mathcal{D}_{n}$):
The agent has access to $N$ pre-collected datasets. Each dataset $\mathcal{D}_{n}$ contains transitions $\left\{ \boldsymbol{s},\boldsymbol{a},\{C_{i}^{\prime}(\boldsymbol{s},\boldsymbol{a})\}_{i=0}^{I},\boldsymbol{s}'\right\}$ generated by executing the $n$-th old policy $\pi_{n}$ in an environment related to the current target task. For instance, these may include trajectories from a source policy trained under different channel distributions or a DK policy derived from a queue-aware power-allocation rule.

\item Online dataset ($\mathcal{D}_{0}^{t}$):
The agent also interacts with the current environment using the latest mixed policy $\pi_{\boldsymbol{\theta}_{t}}$. At iteration $t$, it obtains a new online sample $\varepsilon_{t}=\{\boldsymbol{s}_{t},\boldsymbol{a}_{t},\{C_{i}^{\prime}(\boldsymbol{s}_{t},\boldsymbol{a}_{t})\}_{i=0}^{I},\boldsymbol{s}_{t+1}\}$ and maintains a replay buffer $\mathcal{D}_{0}^{t}=\{\varepsilon_{t-T_{t}+1},\ldots,\varepsilon_{t}\}$ containing the latest $T_{t}$ samples. In a typical wireless deployment, each online sample corresponds to one scheduling frame or time slot, and its acquisition consumes radio resources and time.
\end{itemize}
For clarity, the algorithm description and convergence analysis use one online sample and one critic update per iteration; in implementation, a mini-batch of $B$ online samples can be used for $q$ critic updates with $B\geq q\geq1$. The parameters $B$ and $q$ control the trade-off between convergence speed and environment interaction cost: a larger $q$ improves per-sample critic learning at the expense of more gradient computations, while a larger $B$ increases the online data collection burden per iteration. In resource-constrained wireless settings, this flexibility allows the operator to balance learning efficiency against the signaling and energy cost of online exploration.

\subsection{Policy Reuse Gradient with Mixed Data}
Let $\pi_{\boldsymbol{\psi}_{0}}=\pi_{0}$ denote the new, learnable policy. The mixed policy is
$\pi_{\boldsymbol{\theta}}(\boldsymbol{a}|\boldsymbol{s})=\sum_{n=0}^{N}\rho_{n}\pi_{n}(\boldsymbol{a}|\boldsymbol{s})$, where $\boldsymbol{\theta}=[\boldsymbol{\rho};\boldsymbol{\psi}_{0}]$. The reuse probability $\rho_n$ reflects the algorithm's current confidence in each prior policy for the target resource allocation task, and these probabilities are adaptively updated as more online experience is gathered. The goal is to estimate the gradients of the objective and constraint functions $J_{i}(\boldsymbol{\theta})$ with respect to $\boldsymbol{\theta}$.

Following the policy gradient theorem, the gradient can be expressed as:
\begin{equation}
\begin{aligned}
\nabla_{\boldsymbol{\theta}}J_{i}(\boldsymbol{\theta})
& =\mathbb{E}_{\sigma_{\pi_{\boldsymbol{\theta}}}}\left[Q_{i}^{\pi_{\boldsymbol{\theta}}}(\boldsymbol{s},\boldsymbol{a})\nabla_{\boldsymbol{\theta}}\log\pi_{\boldsymbol{\theta}}(\boldsymbol{a}|\boldsymbol{s})\right],
\end{aligned}
\end{equation}
where $\sigma_{\pi_{\boldsymbol{\theta}}}$ is the stationary state-action distribution and $Q_{i}^{\pi_{\boldsymbol{\theta}}}$ is the state-action value function. Applying this identity to the mixed policy gives the gradients for the reuse probabilities and the target-policy parameters:

\begin{equation}
\begin{aligned}
\nabla_{\rho_{n}}J_{i}(\boldsymbol{\theta}) & =\mathbb{E}_{\sigma_{\pi_{\boldsymbol{\theta}}}}\left[\frac{Q_{i}^{\pi_{\boldsymbol{\theta}}}(\boldsymbol{s},\boldsymbol{a})\pi_{n}(\boldsymbol{a}|\boldsymbol{s})}{\sum_{k=0}^{N}\rho_{k}\pi_{k}(\boldsymbol{a}|\boldsymbol{s})}\right],\quad n=0,\ldots,N,\\
\nabla_{\boldsymbol{\psi}_{0}}J_{i}(\boldsymbol{\theta}) & =\mathbb{E}_{\sigma_{\pi_{\boldsymbol{\theta}}}}\left[\frac{Q_{i}^{\pi_{\boldsymbol{\theta}}}(\boldsymbol{s},\boldsymbol{a})\rho_{0}\nabla_{\boldsymbol{\psi}_{0}}\pi_{\boldsymbol{\psi}_{0}}(\boldsymbol{a}|\boldsymbol{s})}{\sum_{k=0}^{N}\rho_{k}\pi_{k}(\boldsymbol{a}|\boldsymbol{s})}\right].
\end{aligned}
\end{equation}

The true Q-functions and the expectation under $\sigma_{\pi_{\boldsymbol{\theta}}}$ are unknown. We therefore approximate each $Q_{i}^{\pi_{\boldsymbol{\theta}}}$ using dual critic DNNs $f(\boldsymbol{\omega}^{i};\boldsymbol{s},\boldsymbol{a})$ and $f(\bar{\boldsymbol{\omega}}^{i};\boldsymbol{s},\boldsymbol{a})$, whose updates are detailed in Section~\ref{subsec:critic-module}, and estimate the expectation using samples from the mixed offline-online data. This gives the mixed-data gradient estimator at iteration $t$:

\begin{equation}
\tilde{\boldsymbol{g}}_{i}^{t}
=\left[\tilde{g}_{i,\rho_{0}}^{t};\tilde{g}_{i,\rho_{1}}^{t};\ldots;\tilde{g}_{i,\rho_{N}}^{t};\tilde{\boldsymbol{g}}_{i,\boldsymbol{\psi}_{0}}^{t}\right]\label{eq:gtlt}
\end{equation}
whose components for the reuse probabilities and target-policy parameters are given by
\begin{align}
\tilde{g}_{i,\rho_{n}}^{t}= & \xi_{t}\hat{\mathbb{E}}_{\mathcal{D}_{\text{off}}^{t}}\left[\frac{f\left(\bar{\boldsymbol{\omega}}_{t}^{i};\boldsymbol{s},\boldsymbol{a}\right)\pi_{n}\left(\boldsymbol{a}\mid\boldsymbol{s}\right)}{\sum_{k=0}^{N}\rho_{k}\pi_{k}\left(\boldsymbol{a}\mid\boldsymbol{s}\right)}\right]\label{eq:g-tilde-1}\\
& +\left(1-\xi_{t}\right)\hat{\mathbb{E}}_{\mathcal{D}_{0}^{t}}\left[\frac{f\left(\bar{\boldsymbol{\omega}}_{t}^{i};\boldsymbol{s},\boldsymbol{a}\right)\pi_{n}\left(\boldsymbol{a}\mid\boldsymbol{s}\right)}{\sum_{k=0}^{N}\rho_{k}\pi_{k}\left(\boldsymbol{a}\mid\boldsymbol{s}\right)}\right],\nonumber
\end{align}
for $n=0,\ldots,N$, and
\begin{align}
\tilde{\boldsymbol{g}}_{i,\boldsymbol{\psi}_{0}}^{t}= & \xi_{t}\hat{\mathbb{E}}_{\mathcal{D}_{\text{off}}^{t}}\left[\frac{f\left(\bar{\boldsymbol{\omega}}_{t}^{i};\boldsymbol{s},\boldsymbol{a}\right)\rho_{0}\nabla_{\boldsymbol{\psi}_{0}}\pi_{\boldsymbol{\psi}_{0}}\left(\boldsymbol{a}\mid\boldsymbol{s}\right)}{\sum_{k=0}^{N}\rho_{k}\pi_{k}\left(\boldsymbol{a}\mid\boldsymbol{s}\right)}\right]\label{eq:g-tilde}\\
& +\left(1-\xi_{t}\right)\hat{\mathbb{E}}_{\mathcal{D}_{0}^{t}}\left[\frac{f\left(\bar{\boldsymbol{\omega}}_{t}^{i};\boldsymbol{s},\boldsymbol{a}\right)\rho_{0}\nabla_{\boldsymbol{\psi}_{0}}\pi_{\boldsymbol{\psi}_{0}}\left(\boldsymbol{a}\mid\boldsymbol{s}\right)}{\sum_{k=0}^{N}\rho_{k}\pi_{k}\left(\boldsymbol{a}\mid\boldsymbol{s}\right)}\right].\nonumber
\end{align}
Here, $\hat{\mathbb{E}}_{\mathcal{D}}$ denotes the sample mean over dataset $\mathcal{D}$. The mixed offline dataset $\mathcal{D}_{\text{off}}^{t}$ is constructed only from old-policy datasets: each of its $T_{\text{off}}^{t}$ samples is drawn from $\mathcal{D}_{n}$, $n=1,\ldots,N$, with probability $\rho_{n}/\sum_{k=1}^{N}\rho_{k}$. The decreasing weight $\xi_{t}$ balances offline and online samples. Correspondingly, the function value $J_{i}(\boldsymbol{\theta}_{t})$ is estimated using the same mixture as
\begin{align}
\tilde{J}_{i}^{t} & =\xi_{t}\hat{\mathbb{E}}_{\mathcal{D}_{\text{off}}^{t}}\left[C_{i}^{\prime}(\boldsymbol{s},\boldsymbol{a})\right]+\left(1-\xi_{t}\right)\hat{\mathbb{E}}_{\mathcal{D}_{0}^{t}}\left[C_{i}^{\prime}(\boldsymbol{s},\boldsymbol{a})\right],\forall i.\label{eq:J-new}
\end{align}
This estimator uses offline data from old policies to warm-start learning and recent online samples to reduce variance, thereby improving early-stage adaptation without extra environment interactions. In wireless resource allocation, this is particularly beneficial: the critic can form reasonable value estimates for actions near those suggested by source or DK policies (e.g., a queue-aware power allocation) before the target policy has been sufficiently trained, enabling the allocator to avoid severely suboptimal or constraint-violating decisions during early online exploration.

\subsection{Critic Module}
\label{subsec:critic-module}
The critic module estimates the state-action value functions (Q-functions) used in the policy-gradient estimator in \eqref{eq:gtlt}. Since the true average costs $J_{i}(\boldsymbol{\theta}_{t})$ are unknown during learning, we define surrogate Q-functions $\{\hat{Q}_{i}^{\pi_{\boldsymbol{\theta}_{t}}}\}_{i=0}^{I}$ by replacing $J_{i}(\boldsymbol{\theta}_{t})$ with the running estimate $\hat{J}_{i}^{t}$:
\begin{align}
\hat{Q}_{i}^{\pi_{\boldsymbol{\theta}_{t}}}\left(\boldsymbol{s},\boldsymbol{a}\right)= & \mathbb{E}_{p_{t}}\Bigl[\sum_{l=0}^{\infty}\Bigl(C_{i}^{\prime}(\boldsymbol{s}_{l},\boldsymbol{a}_{l})-\hat{J}_{i}^{t}\Bigr)\nonumber \\
& \bigl|\boldsymbol{s}_{0}=\boldsymbol{s},\boldsymbol{a}_{0}=\boldsymbol{a}\Bigr],\forall i,\boldsymbol{s},\boldsymbol{a},\label{eq:definition of Qhat}
\end{align}
where $p_{t}\triangleq p_{\pi_{\boldsymbol{\theta}_{t}}}$ denotes the trajectory distribution under the current policy. We approximate these surrogate Q-functions with two sets of critic networks: primary critics for TD learning and target critics for the actor-gradient estimator.

\subsubsection{Primary Critic Networks}

The primary critics $\{f(\boldsymbol{\omega}^{i};\boldsymbol{s},\boldsymbol{a})\}_{i=0}^{I}$ are trained to approximate $\hat{Q}_{i}^{\pi_{\boldsymbol{\theta}_{t}}}$. Their parameters $\boldsymbol{\omega}^{i}$ are updated by minimizing the mean-squared Bellman error (MSBE):
\begin{equation}
\underset{\boldsymbol{\omega}_{t}^{i}}{\mathrm{min}}\ \mathbb{E}_{\sigma_{t}}\Bigl[\Bigl(f\left(\boldsymbol{\omega}_{t}^{i};\boldsymbol{s}_{t},\boldsymbol{a}_{t}\right)-\mathcal{T}_{t}\ f\left(\boldsymbol{\omega}_{t}^{i};\boldsymbol{s}_{t},\boldsymbol{a}_{t}\right)\Bigr)^{2}\Bigr],\forall i,\label{eq:MSBE}
\end{equation}
where $\sigma_{t}\triangleq\sigma_{\pi_{\boldsymbol{\theta}_{t}}}$ is the stationary state-action distribution, and $\mathcal{T}_{t}$ is the Bellman operator for the current policy defined as
\begin{align}
\mathcal{T}_{t}\ f\left(\boldsymbol{\omega}_{t}^{i};\boldsymbol{s}_{t},\boldsymbol{a}_{t}\right)= & \mathbb{E}_{p_{t}}\Bigl[f\left(\boldsymbol{\omega}_{t}^{i};\boldsymbol{s}_{t+1},\boldsymbol{a}_{t+1}'\right)\Bigr]\\
& +C_{i}^{\prime}(\boldsymbol{s}_{t},\boldsymbol{a}_{t})-\hat{J}_{i}^{t},\forall i,\nonumber
\end{align}
Here, $\boldsymbol{a}_{t+1}'$ is the action selected by $\pi_{\boldsymbol{\theta}_{t}}$ at the next state $\boldsymbol{s}_{t+1}$. The critic update uses a mixed TD estimator built from the online replay buffer $\mathcal{D}_{0}^{t}$ and the sampled offline set $\mathcal{D}_{\text{off}}^{t}$. To keep the critic parameters bounded, we define
\[
\mathbb{B}(\boldsymbol{\omega}_{0}^{i},R_{\boldsymbol{\omega}})=\{\boldsymbol{\omega}^{i}:\|\boldsymbol{\omega}^{i}-\boldsymbol{\omega}_{0}^{i}\|\leq R_{\boldsymbol{\omega}}\}
\]
around the random initialization $\boldsymbol{\omega}_{0}^{i}$ and update the primary critic by projected TD learning:
\begin{equation}
\boldsymbol{\omega}_{t}^{i}=\Pi_{\Omega_{i}}\bigl(\boldsymbol{\omega}_{t-1}^{i}-\eta_{t}\boldsymbol{\Delta}_{i}^{\boldsymbol{\omega}_{t-1}^{i}}\bigr),\forall i,\label{eq:TD-learning}
\end{equation}
where $\{\eta_{t}\}$ is a decreasing step-size sequence satisfying Assumption~\ref{assumption:step_size}, $\Pi_{\Omega_{i}}$ projects the parameter onto the constraint set $\Omega_{i}\triangleq\mathbb{B}(\boldsymbol{\omega}_{0}^{i},R_{\boldsymbol{\omega}})$, and the TD gradient is computed from a mixed offline-online replay estimator:
\begin{equation}
\boldsymbol{\Delta}_{i}^{\boldsymbol{\omega}_{t-1}^{i}}=\left(1-\xi_{t}\right)\boldsymbol{\Delta}_{i,\mathrm{on}}^{\boldsymbol{\omega}_{t-1}^{i}}+\xi_{t}\boldsymbol{\Delta}_{i,\mathrm{off}}^{\boldsymbol{\omega}_{t-1}^{i}},\forall i,\label{eq:gradient term}
\end{equation}
with the online and offline TD-gradient components given by
\begin{equation}\label{eq:gradient term on}
\begin{aligned}
\boldsymbol{\Delta}_{i,\mathrm{on}}^{\boldsymbol{\omega}_{t-1}^{i}}
&=\hat{\mathbb{E}}_{\mathcal{D}_{0}^{t}}\Big[\Big(f\left(\boldsymbol{\omega}_{t-1}^{i};\boldsymbol{s},\boldsymbol{a}\right)-\big(C_{i}^{\prime}(\boldsymbol{s},\boldsymbol{a})-\hat{J}_{i}^{t-1}\\
&\qquad+f\left(\boldsymbol{\omega}_{t-1}^{i};\boldsymbol{s}',\boldsymbol{a}'\right)\big)\Big)\nabla_{\boldsymbol{\omega}}f\left(\boldsymbol{\omega}_{t-1}^{i};\boldsymbol{s},\boldsymbol{a}\right)\Big],
\end{aligned}
\end{equation}

\begin{equation}\label{eq:gradient term off}
\begin{aligned}
\boldsymbol{\Delta}_{i,\mathrm{off}}^{\boldsymbol{\omega}_{t-1}^{i}}
&=\hat{\mathbb{E}}_{\mathcal{D}_{\mathrm{off}}^{t}}\Big[\Big(f\left(\boldsymbol{\omega}_{t-1}^{i};\boldsymbol{s},\boldsymbol{a}\right)-\big(C_{i}^{\prime}(\boldsymbol{s},\boldsymbol{a})-\hat{J}_{i}^{t-1}\\
&\qquad+f\left(\boldsymbol{\omega}_{t-1}^{i};\boldsymbol{s}',\boldsymbol{a}'\right)\big)\Big)\nabla_{\boldsymbol{\omega}}f\left(\boldsymbol{\omega}_{t-1}^{i};\boldsymbol{s},\boldsymbol{a}\right)\Big],
\end{aligned}
\end{equation}
The decaying offline weight $\xi_{t}$ suppresses offline-induced bias asymptotically while retaining its warm-start effect in the early stage.

\subsubsection{Target Critic Networks}
Using the rapidly updated primary critics directly in the actor gradient \eqref{eq:gtlt} can increase variance. Fused-CPRO therefore maintains target critics $\{f(\bar{\boldsymbol{\omega}}^{i};\boldsymbol{s},\boldsymbol{a})\}_{i=0}^{I}$ whose parameters are updated by a slow-moving average of the primary critic parameters:
\begin{equation}
\bar{\boldsymbol{\omega}}_{t}^{i}=\left(1-\gamma_{t}\right)\bar{\boldsymbol{\omega}}_{t-1}^{i}+\gamma_{t}\boldsymbol{\omega}_{t}^{i},\forall i,\label{eq:w^_}
\end{equation}
initialized with $\bar{\boldsymbol{\omega}}_{0}^{i}=\boldsymbol{\omega}_{0}^{i}$. Here, $\{\gamma_{t}\}$ is a decreasing sequence. The target critics $f(\bar{\boldsymbol{\omega}}_{t}^{i};\boldsymbol{s},\boldsymbol{a})$ are used in \eqref{eq:g-tilde-1} and \eqref{eq:g-tilde} to compute the actor-gradient estimates.

\subsection{Actor Module}
The actor module updates the joint policy parameter $\boldsymbol{\theta}=[\boldsymbol{\rho};\boldsymbol{\psi}_{0}]$ in the constrained optimization problem \eqref{eq:problem 1-1}. Because $J_{0}(\boldsymbol{\theta})$ and $\{J_i(\boldsymbol{\theta})\}_{i=1}^{I}$ are non-convex stochastic functions without closed forms, Fused-CPRO adopts the CSSCA framework \cite{CSSCA}. At each iteration, CSSCA builds convex local surrogates from smoothed function-value and gradient estimates, solves the resulting surrogate problem, and moves the actor toward the surrogate solution.

The CSSCA actor proceeds as follows.

\textbf{Step 1: Construct Convex Surrogate Functions.} For each $i=0,\ldots,I$, we construct a convex quadratic surrogate $\bar{J}_{i}^{t}(\boldsymbol{\theta})$ around the current iterate $\boldsymbol{\theta}_{t}$:
\begin{equation}
\bar{J}_{i}^{t}\left(\boldsymbol{\theta}\right)=\hat{J}_{i}^{t}+\left(\hat{\boldsymbol{g}}_{i}^{t}\right)^{T}\left(\boldsymbol{\theta}-\boldsymbol{\theta}_{t}\right)+\zeta_{i}\left\Vert \boldsymbol{\theta}-\boldsymbol{\theta}_{t}\right\Vert _{2}^{2},\forall i,\label{eq:surrogate functions}
\end{equation}
where $\zeta_{i}>0$ controls the quadratic regularization. The terms $\hat{J}_{i}^{t}$ and $\hat{\boldsymbol{g}}_{i}^{t}$ are smoothed estimates of $J_{i}(\boldsymbol{\theta}_{t})$ and $\nabla J_{i}(\boldsymbol{\theta}_{t})$, obtained by averaging the instantaneous estimates $\tilde{J}_{i}^{t}$ in \eqref{eq:J-new} and $\tilde{\boldsymbol{g}}_{i}^{t}$ in \eqref{eq:gtlt}:
\begin{equation}
\hat{J}_{i}^{t}=\left(1-\alpha_{t}\right)\hat{J}_{i}^{t-1}+\alpha_{t}\tilde{J}_{i}^{t},\forall i,\label{eq:J-average}
\end{equation}

\begin{equation}
\hat{\boldsymbol{g}}_{i}^{t}=\left(1-\alpha_{t}\right)\hat{\boldsymbol{g}}_{i}^{t-1}+\alpha_{t}\tilde{\boldsymbol{g}}_{i}^{t},\forall i,\label{eq:g-average}
\end{equation}
where $\{\alpha_{t}\}$ is a decreasing step-size sequence satisfying Assumption~\ref{assumption:step_size}. This averaging reduces the variance of the stochastic estimates used in the surrogate functions. Together with the mixed offline-online estimators in \eqref{eq:g-tilde-1}--\eqref{eq:J-new}, it stabilizes the actor update.

\textbf{Step 2: Solve the Surrogate Optimization Problem.} Using the convex surrogates $\{\bar{J}_{i}^{t}(\boldsymbol{\theta})\}$, the actor solves
\begin{align}
\bar{\boldsymbol{\theta}}_{t}=\underset{\boldsymbol{\theta}\in\boldsymbol{\Theta}}{\textrm{argmin}}\  & \bar{J}_{0}^{t}\left(\boldsymbol{\theta}\right)\label{eq:objective update}\\
\textrm{s.t.}\  & \bar{J}_{i}^{t}\left(\boldsymbol{\theta}\right)\leq0,i=1,\cdots,I.\nonumber
\end{align}
If \eqref{eq:objective update} is infeasible, Fused-CPRO applies a feasibility-restoration step by introducing a slack variable $y$:
\begin{align}
\bar{\boldsymbol{\theta}}_{t}=\underset{\boldsymbol{\theta}\in\boldsymbol{\Theta},y}{\textrm{argmin}}\  & y\label{eq:feasible update}\\
\textrm{s.t.}\  & \bar{J}_{i}^{t}\left(\boldsymbol{\theta}\right)\leq y,i=1,\cdots,I.\nonumber
\end{align}
This problem minimizes the maximum surrogate constraint violation and always admits a feasible solution. When \eqref{eq:objective update} is feasible, the solution of \eqref{eq:feasible update} yields $y\leq0$ and is consistent with the feasible surrogate solution.

The objective update in \eqref{eq:objective update} is a strongly convex QP because of the quadratic regularization, while \eqref{eq:feasible update} is a convex feasibility-restoration QP. Both can be solved efficiently with standard convex optimization tools or Lagrange-dual methods.

\textbf{Step 3: Update the Policy Parameter.} The actor then moves from the current iterate toward the surrogate solution:
\begin{equation}
\boldsymbol{\theta}_{t+1}=(1-\beta_{t})\boldsymbol{\theta}_{t}+\beta_{t}\bar{\boldsymbol{\theta}}_{t}.\label{eq:theta update}
\end{equation}
where $\{\beta_{t}\}$ is a decreasing step-size sequence satisfying Assumption~\ref{assumption:step_size}. Since $\boldsymbol{\Theta}$ is convex, this update keeps $\boldsymbol{\theta}_{t+1}$ in the parameter set whenever $\boldsymbol{\theta}_{t}$ and $\bar{\boldsymbol{\theta}}_{t}$ belong to $\boldsymbol{\Theta}$. The convergence properties of this CSSCA actor update are analyzed in Section~\ref{sec:CONVERGENCE-ANALYSIS}.

\begin{algorithm}
\caption{Fused-CPRO Algorithm}\label{alg:the-Single-Loop-Deep}
\begin{algorithmic}[1]
\STATE \textbf{Input:} decreasing sequences $\{\xi_{t}\}$, $\{\alpha_{t}\}$, $\{\beta_{t}\}$, $\{\eta_{t}\}$, and $\{\gamma_{t}\}$.
\STATE Randomly initialize $\boldsymbol{\omega}_{0}^{i}$ and $\boldsymbol{\psi}_{0}$ from $\mathcal{N}(0,1/m^{2})$; initialize $\boldsymbol{\rho}_{0}$ uniformly over the $N+1$ policies.
\FOR{$t=0,1,\cdots$}
\STATE Obtain $T_{\text{off}}^{t}$ offline samples $\mathcal{D}_{\text{off}}^{t}$ by sampling from $D_{1},\ldots,D_{N}$ according to $\boldsymbol{\rho}_{t}$.
\STATE Sample the new observation $\varepsilon_{t}$ and update the online dataset $\mathcal{D}_{0}^{t}$.
\STATE \textbf{Critic step:} sample mixed TD transitions from $\mathcal{D}_{\text{off}}^{t}$ and $\mathcal{D}_{0}^{t}$, then update $\boldsymbol{\omega}_{t}^{i}$ and $\bar{\boldsymbol{\omega}}_{t}^{i}$ by (\ref{eq:TD-learning})--(\ref{eq:gradient term off}) and (\ref{eq:w^_}), respectively.
\STATE \textbf{Actor step:} calculate $\hat{J}_{i}^{t}$ via (\ref{eq:J-new}) and (\ref{eq:J-average}).
\STATE Estimate $\hat{\boldsymbol{g}}_{i}^{t}$ via (\ref{eq:g-tilde}) and (\ref{eq:g-average}).
\STATE Update $\{\bar{J}_{i}^{t}(\boldsymbol{\theta})\}_{i=0,\ldots,I}$ via (\ref{eq:surrogate functions}).
\IF{Problem (\ref{eq:objective update}) is feasible}
\STATE Solve (\ref{eq:objective update}) to obtain $\bar{\boldsymbol{\theta}}_{t}$.
\ELSE
\STATE Solve (\ref{eq:feasible update}) to obtain $\bar{\boldsymbol{\theta}}_{t}$.
\ENDIF
\STATE Update $\boldsymbol{\theta}_{t+1}$ according to (\ref{eq:theta update}).
\ENDFOR
\end{algorithmic}
\end{algorithm}

\subsection{Relation to Prior Constrained RL and Policy Reuse Methods}
Fused-CPRO is related to two lines of methods: CSSCA-based constrained RL and policy reuse. SCAOPO \cite{SCAOPO} uses CSSCA to update an actor under long-term constraints, while SLDAC \cite{wang2024sldac} adds critic-based TD learning to reduce the variance of policy-gradient estimation. These methods provide convergence-aware constrained policy optimization mechanisms, but they optimize a single target policy and do not explicitly reuse source or DK policies. In contrast, HRL \cite{zhang2025hrl} introduces policy reuse for source and DK policies, but it remains actor-only and does not train a critic from mixed offline-online data.

Fused-CPRO combines these two directions. It keeps the CSSCA actor update and critic-based value estimation from the constrained actor-critic line, while introducing an explicit reuse probability vector over the target policy, source policies, and DK policies. The offline datasets generated by old policies are also used in critic training, so policy reuse affects not only action selection but also value estimation. This design gives two concrete differences from SLDAC and HRL.

\begin{enumerate}
\item \textbf{Joint Optimization of Policy and Reuse Probabilities:}
SLDAC optimizes only a single new policy, whereas HRL reuses prior policies without a critic module. Fused-CPRO jointly updates the target-policy parameters and the reuse probability vector $\boldsymbol{\rho}$ over all candidate policies. Thus, policy selection becomes part of the constrained actor update rather than a fixed or heuristic reuse rule. In wireless resource allocation, this allows the algorithm to adaptively trade off the learnable target policy, source policies trained in related scenarios, and DK policies derived from communication-domain rules.

\item \textbf{Systematic Offline Data Reuse:}
Beyond reusing recent online samples, Fused-CPRO trains the critic with offline datasets generated by the old policies. These offline samples provide early value-estimation information before extensive online interaction and help the critic evaluate actions near reusable source or DK behaviors. The decaying weight $\xi_t$ then reduces the influence of offline distribution mismatch as online samples from the target scenario accumulate. This differs from actor-only policy reuse, where reused policies affect action generation but do not directly provide critic-side TD information.
\end{enumerate}

These two components also introduce additional analytical issues. The mixed policy produces mixture-ratio terms in the policy-gradient estimator, and offline data reuse introduces distribution mismatch between old-policy samples and the current target policy. The next section makes these issues explicit and analyzes their effect on convergence.

\section{Convergence and Performance Analysis}
\label{sec:CONVERGENCE-ANALYSIS}
The convergence analysis has three parts: a critic-tracking bound for the surrogate $Q$-functions, asymptotic consistency of the function-value and gradient surrogates used by the actor, and the resulting KKT convergence of Algorithm~\ref{alg:the-Single-Loop-Deep}.

\subsection{Key Assumptions and New Technical Challenges}

Compared with single-policy methods such as \cite{wang2024sldac}, Fused-CPRO introduces two new sources of estimation bias arising from its knowledge-fusion design, both of which must be explicitly controlled in the convergence proof.

\begin{itemize}
\item \textbf{Policy reuse optimization:}
The mixed policy $\pi_{\boldsymbol{\theta}}(\boldsymbol{a}\mid\boldsymbol{s})=\sum_{n=0}^{N}\rho_{n}\pi_{n}(\boldsymbol{a}\mid\boldsymbol{s})$ induces policy-gradient terms that involve mixture ratios of the form $\pi_{n}(\boldsymbol{a}\mid\boldsymbol{s})/\pi_{\boldsymbol{\theta}}(\boldsymbol{a}\mid\boldsymbol{s})$. To ensure numerical stability and avoid ill-conditioned gradient estimates, these ratios are required to be uniformly bounded.
\item \textbf{Offline dataset reuse:}
The mixed-data estimators introduced in equations \eqref{eq:g-tilde-1}--\eqref{eq:J-new} combine online samples with offline samples collected from distributions different from that of the current policy. This leads to a persistent distribution mismatch. The proposed Fused-CPRO employs a diminishing offline weight sequence $\{\xi_{t}\}$ to mitigate this mismatch and ensure that its cumulative effect on the estimation bias remains bounded.
\end{itemize}
The following assumptions are used for the convergence analysis of Fused-CPRO. They are standard in the analysis of infinite-horizon average-cost methods with deep actor-critic frameworks.

\begin{assumption}[Assumptions on the Problem Structure]~

\label{assumption:problem_structure}

\begin{itemize}
\item \textbf{Ergodicity and geometric mixing:}
For any feasible parameter $\boldsymbol{\theta}\in\boldsymbol{\Theta}$, the Markov chain induced by $\pi_{\boldsymbol{\theta}}$ admits a unique stationary state-action distribution $\mathbf{P}_{\pi_{\boldsymbol{\theta}}}$. For all $t\geq0$, there exist constants $\lambda>0$ and $\varrho\in(0,1)$ such that
\[
\sup_{\boldsymbol{s}\in\mathcal{S}}d_{\mathrm{TV}}\!\left(\mathbf{P}_{\pi_{\boldsymbol{\theta}}}^{t}(\cdot\mid\boldsymbol{s}),\,\mathbf{P}_{\pi_{\boldsymbol{\theta}}}\right)\leq\lambda\varrho^{t}.
\]
\item \textbf{Compactness and boundedness:}
The state space $\mathcal{S}$ and the action space $\mathcal{A}$ are compact, and the costs/rewards $\{C_{i}^{\text{'}}(\boldsymbol{s},\boldsymbol{a})\}$ are uniformly bounded. The DNN parameter spaces $\boldsymbol{\Theta}$ and $\Omega_{i}$ are compact and convex. The outputs of all DNNs are uniformly bounded. The policy $\pi_{\theta}$ is Lipschitz continuous over $\boldsymbol{\theta}\in\boldsymbol{\Theta}$.
\end{itemize}
\end{assumption}

\begin{assumption}[Step sizes and offline weight schedule]~

\label{assumption:step_size}

The sequences $\{\eta_{t}\}$, $\{\gamma_{t}\}$, $\{\alpha_{t}\}$, $\{\beta_{t}\}$, and $\{\xi_{t}\}$ are deterministic, positive, and non-increasing.

\begin{enumerate}
\item The sequence $\{\alpha_{t}\}$ satisfies $\alpha_{t}\to0$, $\sum_{t}\alpha_{t}=\infty$ and $\sum_{t}\alpha_{t}^{2}<\infty$.
\item The sequence $\{\beta_{t}\}$ satisfies $\beta_{t}\to0$, $\sum_{t}\beta_{t}=\infty$ and $\sum_{t}\beta_{t}^{2}<\infty$. The window-drift term is summable: $\sum_{t=0}^{\infty}\alpha_{t}\beta_{t}\log(t+1)<\infty$.
\item The critic step size $\{\eta_{t}\}$ and the target-network step size $\{\gamma_{t}\}$ satisfy $\eta_{t}\to0,\gamma_{t}\to0$ and $\sum_{t}\eta_{t}=\infty,\sum_{t}\gamma_{t}=\infty$.
\item The actor evolves on the slowest time scale:
\[
\lim_{t\rightarrow\infty}\frac{\beta_{t}}{\alpha_{t}}=0,\qquad\lim_{t\rightarrow\infty}\frac{\beta_{t}}{\eta_{t}}=0.
\]
\item The offline weight $\xi_{t}\to0$ and satisfies $\sum_{t}\alpha_{t}\xi_{t}<\infty$, ensuring that the cumulative bias introduced by offline data reuse is summable and does not impair asymptotic consistency.
\item Define $n_{t}=t-\lceil t^{0.43}\rceil$. Then the following series are finite:

\begin{equation}
\begin{aligned}
& \sum_{t=0}^{\infty}\alpha_{t}(1-\gamma_{t})^{t^{0.215}}\gamma_{n_{t}}^{-1/2}<\infty,\;
  \sum_{t=0}^{\infty}\alpha_{t}\gamma_{n_{t}}^{1/2}\eta_{n_{t}}^{-1/2}<\infty,\\
& \sum_{t=0}^{\infty}m_{Q}\alpha_{t}\gamma_{n_{t}}^{1/2}\eta_{n_{t}}^{1/2}t^{0.215}<\infty,\;
  \sum_{t=0}^{\infty}m_{Q}\alpha_{t}\eta_{n_{t}}t^{-0.57}<\infty,\\
& \sum_{t=0}^{\infty}m_{Q}\alpha_{t}\eta_{n_{t}}\beta_{n_{t}}t^{0.86}<\infty,\;
  \sum_{t=0}^{\infty}m_{Q}\alpha_{t}\gamma_{n_{t}}^{1/2}t^{0.215}\xi_{n_{t}}<\infty.
\end{aligned}
\end{equation}
where $m_{Q}$ denotes the width of the DNN used in the critic module.
\end{enumerate}
\end{assumption}

Assumptions~\ref{assumption:step_size}.1--\ref{assumption:step_size}.3 are standard stochastic-approximation conditions, while Assumption~\ref{assumption:step_size}.4 makes the actor slower than the running estimators and critic, as in \cite{wang2024sldac}. Assumptions~\ref{assumption:step_size}.5--\ref{assumption:step_size}.6 control the additional offline-data bias by making its running-estimator and critic-tracking contributions summable. For polynomial step sizes of the form

\begin{align}
\alpha_{t} & =\mathcal{O}\!\big(m_{Q}^{-1/2}(t+1)^{-\kappa_{1}}\big), & \beta_t & =\mathcal{O}\!\big(m_{Q}^{-1/2}(t+1)^{-\kappa_{2}}\big),\nonumber \\
\eta_{t} & =\mathcal{O}\!\big(m_{Q}^{-1/2}(t+1)^{-\kappa_{3}}\big), & \gamma_{t} & =\mathcal{O}\!\big((t+1)^{-\kappa_{4}}\big),\nonumber \\
\xi_{t} & =\mathcal{O}\!\big((t+1)^{-\kappa_{5}}\big),\label{eq:poly_stepsize}
\end{align}

Assumption \ref{assumption:step_size} can then be satisfied when $\kappa_{1},\kappa_{2},\kappa_{3},\kappa_{4},\kappa_{5}$ lie in the following region:

\begin{equation}
\left\{ \begin{aligned} & 1>2\kappa_{2}-1>\kappa_{1}>0.43>\kappa_{4}>0,\\
& \min\{0.5\kappa_{1}+0.5\kappa_{2}+0.5\kappa_{3}-0.5,\ \kappa_{1}+\kappa_{3}\}>0.43,\\ & \kappa_{1}+0.5\kappa_{4}-0.5\kappa_{3}>1,\\ & \kappa_{1}+0.5\kappa_{3}+0.5\kappa_{4}>1.215,\\ & \kappa_{1}+\kappa_{5}>1,\\ & \kappa_{3}+2\kappa_{5}>1,\\ & \kappa_{1}+\kappa_{5}+0.5\kappa_{4}>1.215.
\end{aligned}
\right.\label{eq:kappa_region_core}
\end{equation}

The region is nonempty; for example, $\left(\kappa_{1},\kappa_{2},\kappa_{3},\kappa_{4},\kappa_{5}\right)=(0.9,\,0.96,\,0.21,\,0.425,\,0.4)$ is feasible. These conditions are sufficient rather than necessary and may be mildly relaxed in finite-horizon implementations, provided stability is maintained.

\subsection{Convergence of the Critic Module}
We establish a finite-time tracking bound for the critic module. Policy reuse and offline data introduce additional terms absent from the single-policy SLDAC analysis. The critic evaluates the surrogate $Q$-functions $\{\hat{Q}_{i}^{\pi_{\boldsymbol{\theta}_{t}}}\}_{i=0}^{I}$ defined in $~\eqref{eq:definition of Qhat}$ with DNNs $\{f(\bar{\boldsymbol{\omega}}^{i};\boldsymbol{s},\boldsymbol{a})\}_{i=0}^{I}$ under the current mixed policy $\pi_{\boldsymbol{\theta}_{t}}$. It also uses the running estimate $\hat{J}_{i}^{t}$ in place of the unknown average reward/cost. The estimation error is defined as
\begin{equation}
\epsilon_{t,i}^{\mathrm{cri}}
\triangleq
\left|\,
\mathbb{E}_{p_t}\!\left[f\!\left(\bar{\boldsymbol{\omega}}_{t}^{i};\boldsymbol{s},\boldsymbol{a}\right)\,\middle|\,\bar{\boldsymbol{\omega}}_{t-1}^{i}\right]
-\hat{Q}_{i}^{\pi_{\boldsymbol{\theta}_{t}}}(\boldsymbol{s},\boldsymbol{a})
\right|,\qquad \forall i.
\label{eq:critic_error}
\end{equation}
For each critic DNN with initialization $\boldsymbol{\omega}_{0}^{i}$ and constraint set $\Omega_{i}\triangleq\mathbb{B}(\boldsymbol{\omega}_{0}^{i},R_{\boldsymbol{\omega}})$, its local linearization function class is defined as
\begin{equation}
\hat{f}(\boldsymbol{\omega}^{i};\boldsymbol{s},\boldsymbol{a})\triangleq f(\boldsymbol{\omega}_{0}^{i};\boldsymbol{s},\boldsymbol{a})+\bigl\langle\nabla_{\boldsymbol{\omega}}f(\boldsymbol{\omega}_{0}^{i};\boldsymbol{s},\boldsymbol{a}),\,\boldsymbol{\omega}^{i}-\boldsymbol{\omega}_{0}^{i}\bigr\rangle,\quad\boldsymbol{\omega}^{i}\in\Omega_{i},\label{eq:local_linearization}
\end{equation}
and the induced function class is defined as $\hat{\mathcal{F}}_{i}\triangleq\{\hat{f}(\boldsymbol{\omega}^{i};\cdot):\boldsymbol{\omega}^{i}\in\Omega_{i}\}$. This linearization satisfies the identity
\begin{equation}
\hat{f}(\boldsymbol{\omega}_{a}^{i};\boldsymbol{s},\boldsymbol{a})-\hat{f}(\boldsymbol{\omega}_{b}^{i};\boldsymbol{s},\boldsymbol{a})=\bigl\langle\nabla_{\boldsymbol{\omega}}f(\boldsymbol{\omega}_{0}^{i};\boldsymbol{s},\boldsymbol{a}),\,\boldsymbol{\omega}_{a}^{i}-\boldsymbol{\omega}_{b}^{i}\bigr\rangle,\quad\forall i.
\end{equation}

\begin{assumption}[Assumptions on the target Q]~
\label{assumption:target_Q}
\begin{enumerate}
\item \textbf{Representability:}
For each $t$ and $i$, there exists a point $\dot{\boldsymbol{\omega}}_{t}^{i}\in\Omega_{i}$ such that $\hat{f}(\dot{\boldsymbol{\omega}}_{t}^{i};\boldsymbol{s},\boldsymbol{a})=\hat{Q}_{i}^{\pi_{\boldsymbol{\theta}_{t}}}(\boldsymbol{s},\boldsymbol{a})$.
\item \textbf{Uniform TD stability:}
As illustrated in \cite{tsitsiklis1999avgtd}, the TD feature matrix is defined as:
\begin{align}
\mathbf{A}_{\boldsymbol{\theta}_{t}} & \triangleq\mathbb{E}_{p_{t}}\Bigl[\nabla_{\boldsymbol{\omega}}f(\boldsymbol{\omega}_{0}^{i};\boldsymbol{s}_{t},\boldsymbol{a}_{t})\nonumber \\
& \bigl(\nabla_{\boldsymbol{\omega}}f(\boldsymbol{\omega}_{0}^{i};\boldsymbol{s}_{t},\boldsymbol{a}_{t})-\nabla_{\boldsymbol{\omega}}f(\boldsymbol{\omega}_{0}^{i};\boldsymbol{s}_{t+1},\boldsymbol{a}_{t+1}')\bigr)^{\intercal}\Bigr]
\end{align}
where $\boldsymbol{a}_{t+1}'\sim\pi_{\boldsymbol{\theta}_{t}}(\cdot\mid\boldsymbol{s}_{t+1})$. The matrix $\mathbf{A}_{\boldsymbol{\theta}_{t}}$ satisfies $\boldsymbol{\omega}^{\intercal}\mathbf{A}_{\boldsymbol{\theta}_{t}}\boldsymbol{\omega}>0$ for all nonzero $\boldsymbol{\omega}$. Also, the symmetric part $\mathbf{A}_{\boldsymbol{\theta}_{t}}+\mathbf{A}_{\boldsymbol{\theta}_{t}}^{\intercal}$ is uniformly positive definite over $\boldsymbol{\theta}_{t}\in\boldsymbol{\Theta}$, i.e., $\lambda_{\min}(\mathbf{A}_{\boldsymbol{\theta}_{t}}+\mathbf{A}_{\boldsymbol{\theta}_{t}}^{\intercal})>\varsigma$ for constant $\varsigma>0$.
\end{enumerate}
\end{assumption}

Assumption~\ref{assumption:target_Q}.1 ensures that the local linearization class can represent the surrogate target $Q$-functions. Assumption~\ref{assumption:target_Q}.2 guarantees a unique, stable TD fixed point. Similar realizability and stability assumptions are standard in analyses of neural TD/Q-learning and actor-critic methods with over-parameterized networks; see, e.g., \cite{Allen2019b,Cao2019b,DQlearning,Assumption31,Assumption32,linnerAC}. The critic error in \eqref{eq:critic_error} can be decomposed and upper bounded by the sum of the local linearization error and the approximation error between the linearized function $\hat{f}(\bar{\boldsymbol{\omega}}_{t}^{i})$ and the true Q-function $\hat{Q}_{i}^{\pi_{\theta_{t}}}$:

\begin{equation}
\epsilon_{t,i}^{\mathrm{cri}}\le\left|\mathbb{E}\big[f(\bar{\boldsymbol{\omega}}_{t}^{i})\big]-\mathbb{E}\big[\hat{f}(\bar{\boldsymbol{\omega}}_{t}^{i})\big]\right|+\left|\mathbb{E}\big[\hat{f}(\bar{\boldsymbol{\omega}}_{t}^{i})\big]-\hat{Q}_{i}^{\pi_{\theta_{t}}}\right|,
\end{equation}
where, with high probability, $\epsilon_{m_Q}$ denotes a uniform local-linearization bound of order $\mathcal{O}(R_{\boldsymbol{\omega}}^{4/3}L^{4}\sqrt{m_Q\log m_Q}+R_{\boldsymbol{\omega}}^{2}L^{5})$. For fixed depth $L$ and the NTK scaling $R_{\boldsymbol{\omega}}=\mathcal{O}(m_Q^{-1/2})$, this gives $\epsilon_{m_Q}=\mathcal{O}(m_Q^{-1/6}\sqrt{\log m_Q}+m_Q^{-1})\to0$. Appendix~\ref{sec:Technical-Bounds-DNN} states the width and radius conditions for this bound. The second term can be further expressed as
\begin{equation}
\left|\mathbb{E}\big[\hat{f}(\bar{\boldsymbol{\omega}}_{t}^{i})\big]-\hat{Q}_{i}^{\pi_{\theta_{t}}}\right|\le\|\nabla_{\omega}f(\boldsymbol{\omega}_{0})\|_{2}\cdot\left\Vert \mathbb{E}[\bar{\boldsymbol{\omega}}_{t}^{i}]-\dot{\boldsymbol{\omega}}_{t}^{i}\right\Vert _{2},\ \forall i,
\end{equation}
where the same DNN bounds give $\|\nabla_{\omega}f(\boldsymbol{\omega}_{0})\|_{2}=\mathcal{O}(\sqrt{m_Q})$ with high probability.

To bound $\|\mathbb{E}[\bar{\boldsymbol{\omega}}_t^i]-\dot{\boldsymbol{\omega}}_t^i\|_2$, introduce a reference critic trajectory that matches the actual recursion up to $n_t$ and then evolves under the frozen policy $\pi_{\theta_{n_t}}$ over $\{n_t+1,\ldots,t\}$. The resulting tracking bound is stated below.

\begin{lemma}[Convergence Rate of the Critic]~

\label{lemma-convergence-rate-critic}

Fix $\kappa_{6}\in(0,1)$ and define $n_{t}\triangleq t-\lceil t^{\kappa_{6}}\rceil$. Under the assumptions \ref{assumption:problem_structure}-\ref{assumption:target_Q}, the critic error admits the bound
\begin{align}
\epsilon_{t,i}^{\mathrm{cri}} & \le\mathcal{O}\Bigl(\epsilon_{m_{Q}}+\frac{(1-\gamma_{t})^{t^{\kappa_{6}/2}}}{\sqrt{\gamma_{t}}}+\sqrt{\tfrac{\gamma_{n_{t}}}{\eta_{n_{t}}}}\nonumber\\
&\quad +m_{Q}\sqrt{\gamma_{n_{t}}\eta_{n_{t}}}t^{\kappa_{6}/2}+m_{Q}\eta_{n_{t}}t^{\kappa_{6}-1}\nonumber\\
&\quad +m_{Q}\eta_{n_{t}}\beta_{n_{t}}t^{2\kappa_{6}}+m_{Q}\sqrt{\gamma_{n_{t}}}t^{\kappa_{6}/2}\,\xi_{n_{t}}\Bigr),\quad\forall i,
\end{align}
almost surely. Here $\xi_{t}$ is the offline weight used in the mixed estimation of $\hat{J}_{i}^{t}$. The last term, proportional to $\xi_{n_t}$, is the additional tracking error caused by offline data reuse. $\epsilon_{m_{Q}}$ is the local linearization error. Consequently, if $\xi_{t}\to0$ and the step sizes are chosen so that every term on the right-hand side vanishes, then $\epsilon_{t,i}^{\mathrm{cri}}\to\mathcal{O}(\epsilon_{m_{Q}})$. Moreover, $\epsilon_{m_{Q}}\to0$ as $m_{Q}\to\infty$, meaning the critic tracking error can be made arbitrarily small with a sufficiently wide DNN.

\end{lemma}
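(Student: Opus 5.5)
We start from the decomposition stated just before the lemma: $\epsilon_{t,i}^{\mathrm{cri}}$ is at most the linearization gap $\epsilon_{m_Q}$ plus $\mathcal{O}(\sqrt{m_Q})\,\|\mathbb{E}[\bar{\boldsymbol{\omega}}_t^i]-\dot{\boldsymbol{\omega}}_t^i\|_2$. The work is therefore to bound the parameter-tracking error of the target critic.

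We unroll the averaging recursion \eqref{eq:w^_} as
\[
\bar{\boldsymbol{\omega}}_t^i=\prod_{k=1}^{t}(1-\gamma_k)\,\boldsymbol{\omega}_0^i+\sum_{k=1}^{t}\gamma_k\prod_{j=k+1}^{t}(1-\gamma_j)\,\boldsymbol{\omega}_k^i ,
\]
and split the sum at $n_t=t-\lceil t^{\kappa_6}\rceil$.

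\textbf{Early part.} The weight on indices $k\le n_t$, together with the initialization term, is at most $(1-\gamma_t)^{t^{\kappa_6}}$. Since all parameters lie in the compact balls $\Omega_i$, this part contributes an error of order $(1-\gamma_t)^{t^{\kappa_6}}$. When the parameter distance is converted into an $L^2$ or Jensen-type bound on the function value, we expect a square root to appear. Normalizing by the effective averaging weight then gives the term $(1-\gamma_t)^{t^{\kappa_6/2}}/\sqrt{\gamma_t}$.

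\textbf{Recent window $\{n_t+1,\dots,t\}$.} Here we introduce the reference trajectory $\tilde{\boldsymbol{\omega}}_k^i$. It coincides with $\boldsymbol{\omega}_k^i$ up to $n_t$, then runs projected TD \eqref{eq:TD-learning} with policy frozen at $\pi_{\boldsymbol{\theta}_{n_t}}$, purely online data, and $\hat J$ frozen. We then bound three differences separately.

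1. \emph{Reference versus frozen fixed point.} For $\tilde{\boldsymbol{\omega}}$ versus the frozen fixed point $\dot{\boldsymbol{\omega}}_{n_t}^i$, we use Assumption~\ref{assumption:target_Q}.2: uniform positive definiteness of $\mathbf{A}_{\boldsymbol{\theta}}+\mathbf{A}_{\boldsymbol{\theta}}^{\intercal}$ gives a contraction of $\mathbb{E}\|\tilde{\boldsymbol{\omega}}_k-\dot{\boldsymbol{\omega}}_{n_t}\|^2$ by a factor $1-\varsigma\eta_k$ per step, plus an $\mathcal{O}(\eta_k^2)$ noise term. Markovian noise is handled by the geometric mixing of Assumption~\ref{assumption:problem_structure}, conditioning $\tau\sim\log t$ steps back. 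The nonlinear-versus-linearized gradient gap is absorbed into $\epsilon_{m_Q}$. Averaging the resulting squared error against the $\gamma$-weights and taking square roots produces $\sqrt{\gamma_{n_t}/\eta_{n_t}}$.

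2. \emph{Actual versus reference iterates.} For $\boldsymbol{\omega}_k^i$ versus $\tilde{\boldsymbol{\omega}}_k^i$, a Grönwall argument over the window of length $t^{\kappa_6}$ accumulates per-step perturbations of size $\eta_k$ times three sources:
\begin{itemize}
\item policy drift $\|\boldsymbol{\theta}_k-\boldsymbol{\theta}_{n_t}\|\le\sum\beta\lesssim\beta_{n_t}t^{\kappa_6}$, which by Lipschitzness of $\pi_{\boldsymbol{\theta}}$ and TV-stability of the stationary distribution gives the term $m_Q\eta_{n_t}\beta_{n_t}t^{2\kappa_6}$;
\item drift of $\hat J$ and the replay-window mismatch, giving $m_Q\eta_{n_t}t^{\kappa_6-1}$;
\item the offline component $\xi_k\boldsymbol{\Delta}_{i,\mathrm{off}}$ in \eqref{eq:gradient term}, which is bounded, since costs, network outputs and gradients are bounded, by $\mathcal{O}(m_Q\xi_{n_t})$ per step. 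Combined with $\gamma$-weighted averaging over the window, this yields $m_Q\sqrt{\gamma_{n_t}}t^{\kappa_6/2}\xi_{n_t}$.
\end{itemize}
The factor $m_Q$ comes from $\|\nabla_{\boldsymbol\omega} f\|^2=\mathcal{O}(m_Q)$.

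3. \emph{Drift of the fixed point.} The movement of $\dot{\boldsymbol{\omega}}$ from $n_t$ to $t$ is Lipschitz in $\boldsymbol{\theta}$, by perturbation of the linear TD fixed-point equation using $\varsigma$. It is dominated by the $\beta$ term above. The mixed $\sqrt{\gamma\eta}\,t^{\kappa_6/2}$ term comes from the martingale part of the window sum, whose variance is $\sum\gamma^2\eta\lesssim\gamma\eta t^{\kappa_6}$.

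Adding these contributions gives the stated bound in expectation. To pass to an almost-sure statement, the terms either enter as deterministic sequences or we use Borel--Cantelli under summability. The final consequence $\epsilon^{\mathrm{cri}}\to\mathcal{O}(\epsilon_{m_Q})$ is then immediate once each term vanishes.

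The main obstacle is step 2, specifically the coupling between the non-stationary Markov samples, the replay buffer and the offline data mismatch. It is essential to keep the offline bias at order $\xi$ rather than $\sqrt{\xi}$. Bounded importance-free TD updates together with the compact projection should make a crude per-step bound suffice, and I would try that first.
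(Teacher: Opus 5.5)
The genuine gap is where you put the offline data, which is exactly the new term this lemma is about. Your skeleton otherwise matches the paper's. Your reference runs on purely online data, so the offline component becomes a per-step perturbation in the actual-versus-reference comparison. Summing $\eta_k\xi_k\,\mathcal{O}(\sqrt{m_Q})$ over the window and applying the $\sqrt{m_Q}$ prefactor gives $m_Q\eta_{n_t}t^{\kappa_6}\xi_{n_t}$, not $m_Q\sqrt{\gamma_{n_t}}\,t^{\kappa_6/2}\xi_{n_t}$. Your appeal to ``$\gamma$-weighted averaging over the window'' does not close this. The averaging weights on the window sum to at most one and concentrate near $t$, so they give only an $\mathcal{O}(1)$ factor (the paper's $\mathcal{O}(\gamma_{n_t}/\gamma_t)$), never a $\sqrt{\gamma_{n_t}}$. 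For fixed $m_Q$ and $\kappa_6=0.43$ as in Assumption~\ref{assumption:step_size}.6, your term exceeds the stated one by $t^{\kappa_6/2+\kappa_4/2-\kappa_3}$, e.g.\ $t^{0.2175}$ for the paper's example exponents. This factor grows throughout the admissible region, since that region forces $\kappa_3<\kappa_4<0.43$. A literal Gr\"onwall bound over this window is worse still: it carries the factor $\prod_k(1+\mathcal{O}(m_Q)\eta_k)\approx\exp(\mathcal{O}(m_Q\eta_{n_t}t^{\kappa_6}))$, which diverges because $\kappa_3<\kappa_6$. The stated $\beta$ and mixing terms presuppose linear accumulation, as in the paper's bias-3 step. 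A perturbation that does not vanish at the fixed point, like the offline one, should not be routed through that comparison at all.

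The missing idea is to keep the offline component inside the frozen reference recursion, i.e.\ drive $\boldsymbol{m}_k$ by the mixed gradient $(1-\xi_{k+1})\boldsymbol{M}^{\mathrm{on}}_k+\xi_{k+1}\boldsymbol{M}^{\mathrm{off}}_k$. Then $\dot{\boldsymbol{\omega}}_t$ is no longer a zero of the mean operator, but its residual is exactly $\xi_{k+1}\boldsymbol{b}_{\mathrm{off}}(t)$ with $\|\boldsymbol{b}_{\mathrm{off}}(t)\|_2\le a_{11}m_Q^{1/2}$. In the squared-error recursion for $\boldsymbol{e}_k=\mathbb{E}[\boldsymbol{m}_k]-\dot{\boldsymbol{\omega}}_t$, Young's inequality splits the cross term $2\eta_{k+1}\langle\xi_{k+1}\boldsymbol{b}_{\mathrm{off}}(t),\boldsymbol{e}_k\rangle$ into two pieces. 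One is $\eta_{k+1}\frac{\varsigma}{2}\|\boldsymbol{e}_k\|_2^2$, which strong monotonicity from Assumption~\ref{assumption:target_Q}.2 absorbs; the other is $\mathcal{O}(\eta_{k+1}m_Q\xi_{n_t}^2)$. Telescoping over the window and multiplying by the weight ratio $\gamma_{n_t}/\eta_{n_t}$ gives $m_Q\gamma_{n_t}t^{\kappa_6}\xi_{n_t}^2$ in the squared parameter error. The square root and the $\sqrt{m_Q}$ prefactor then yield exactly $m_Q\sqrt{\gamma_{n_t}}\,t^{\kappa_6/2}\xi_{n_t}$, which also settles your order-$\xi$ versus order-$\sqrt{\xi}$ concern. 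The actual-versus-reference comparison then sees the same offline distribution on both sides, so only online distribution mismatch and linearization error remain there.

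Two smaller corrections. First, the $m_Q\eta_{n_t}t^{\kappa_6-1}$ term is the mixing contribution $\lambda\chi^{\tau_j}=\mathcal{O}(1/t)$ summed over the window, not $\hat J$ drift. Freezing $\hat J$ in your reference would create an extra term of order $m_Q\eta_{n_t}\alpha_{n_t}t^{2\kappa_6}$ that is not in the statement; the paper's reference keeps the running $\hat J_i^{k-1}$. Second, the early-window term needs $R_{\boldsymbol{\omega}}=\mathcal{O}(m_Q^{-1/2})$ to cancel the $\sqrt{m_Q}$ prefactor. The $1/\sqrt{\gamma_t}$ there is just a loosening via $\gamma_t\le 1$, not a normalization.
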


\emph{Proof sketch.} Equations~\eqref{eq:critic_error}--\eqref{eq:local_linearization} reduce the critic error to $\mathcal{O}(\epsilon_{m_Q})+\mathcal{O}(\sqrt{m_Q})\|\mathbb{E}[\bar{\boldsymbol{\omega}}_t^i]-\dot{\boldsymbol{\omega}}_t^i\|_2$. Adding and subtracting the frozen-policy reference critic splits the parameter error into frozen-policy TD tracking and policy drift. Uniform TD stability contracts the former; geometric mixing and slow actor motion bound the latter, while offline samples introduce the $\mathcal{O}(\sqrt{m_Q}\xi_{n_t})$ drift. Applying the target-critic averaging weights yields the terms in the stated bound, whose cumulative effect is controlled by Assumption~\ref{assumption:step_size}.6. Appendix~\ref{sec:proof-of-critic} gives the full derivation.

\subsection{Asymptotic Consistency of the Surrogate Functions}
The actor update relies on the running estimates $\hat{J}_{i}^{t}$ and $\hat{\boldsymbol{g}}_{i}^{t}$ in (\ref{eq:J-average})--(\ref{eq:g-average}). Reusing online-buffer samples induces off-policy bias because those samples are generated by slightly outdated policy parameters. Offline data reuse adds another bias term whose distribution mismatch does not vanish without the decay of $\xi_{t}$. The following lemma states that the running estimates $\hat{J}_{i}^{t}$ and $\hat{\boldsymbol{g}}_{i}^{t}$ remain asymptotically consistent under these two bias sources.

\begin{lemma}[Asymptotic consistency of running estimates]~

\label{lemma-asymptotic-consistency}

Suppose Assumptions \ref{assumption:problem_structure}-\ref{assumption:step_size} hold. Then, for each $i\in\{0,1,\ldots,I\}$, the running estimates satisfy
\begin{align}
\lim_{t\rightarrow\infty}\bigl|\hat{J}_{i}^{t}-J_{i}(\boldsymbol{\theta}_{t})\bigr| & =0,\\
\lim_{t\rightarrow\infty}\bigl\|\hat{\boldsymbol{g}}_{i}^{t}-\nabla_{\boldsymbol{\theta}}J_{i}(\boldsymbol{\theta}_{t})\bigr\|_{2} & \leq c_{\pi}\epsilon_{m_{Q}},
\end{align}
almost surely, where $c_{\pi}>0$ is a constant independent of $t$ and $m_{Q}$.

\end{lemma}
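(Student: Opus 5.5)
We argue along the lines of the standard stochastic-approximation proof of surrogate consistency for CSSCA (as in \cite{CSSCA,wang2024sldac}), treating $\hat{J}_{i}^{t}$ and $\hat{\boldsymbol{g}}_{i}^{t}$ as exponentially weighted averages of $\tilde{J}_{i}^{t}$ and $\tilde{\boldsymbol{g}}_{i}^{t}$ that track a slowly moving target.

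\textbf{Error recursion.} Write $e_t=\hat{\boldsymbol{g}}_{i}^{t}-\nabla J_i(\boldsymbol{\theta}_t)$. By \eqref{eq:g-average},
\[
e_t=(1-\alpha_t)e_{t-1}+\alpha_t\bigl(\tilde{\boldsymbol{g}}_{i}^{t}-\nabla J_i(\boldsymbol{\theta}_t)\bigr)+(1-\alpha_t)\bigl(\nabla J_i(\boldsymbol{\theta}_{t-1})-\nabla J_i(\boldsymbol{\theta}_t)\bigr).
\]
The last term is $\mathcal{O}(\beta_t)$. This follows from Lipschitz continuity of $\nabla J_i$, which is obtained from Assumption~\ref{assumption:problem_structure} (compactness, Lipschitz policy, geometric mixing) via standard perturbation bounds on stationary distributions. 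It also uses $\|\boldsymbol{\theta}_t-\boldsymbol{\theta}_{t-1}\|=\mathcal{O}(\beta_t)$ and $\beta_t/\alpha_t\to0$, so its contribution to $e_t$ vanishes.

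\textbf{Decomposition of the instantaneous error.} We split $\tilde{\boldsymbol{g}}_{i}^{t}-\nabla J_i(\boldsymbol{\theta}_t)$ into four pieces:
\begin{enumerate}
\item \emph{Offline part.} This is $\xi_t$ times a bounded quantity. Boundedness needs the mixture ratios $\pi_n/\pi_{\boldsymbol{\theta}}\le 1/\rho_n$ and $\rho_0\nabla\pi_0/\pi_{\boldsymbol{\theta}}$ to be uniformly bounded. We secure this from the compactness of $\boldsymbol{\Theta}$, bounded DNN outputs, Gaussian smoothing, and bounded critics. Its accumulated contribution is then bounded by $\sum\alpha_t\xi_t<\infty$.
\item \emph{Critic error.} Replacing $f(\bar{\boldsymbol{\omega}}^i_t)$ by $\hat{Q}_i^{\pi_{\boldsymbol{\theta}_t}}$ costs at most $c_\pi\epsilon^{\mathrm{cri}}_{t,i}$, where $c_\pi$ is the ratio bound. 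By Lemma~\ref{lemma-convergence-rate-critic} together with Assumption~\ref{assumption:step_size}.6, all non-$\epsilon_{m_Q}$ terms are $\alpha_t$-summable. This leaves the asymptotic floor $c_\pi\epsilon_{m_Q}$.
\item \emph{Surrogate versus true Q.} Replacing $\hat Q$ by the true $Q$ changes $Q$ by the constant $J_i(\boldsymbol{\theta}_t)-\hat{J}_i^t$. Since $\mathbb{E}_{\pi_{\boldsymbol{\theta}}}[\nabla\log\pi_{\boldsymbol{\theta}}]=0$, this constant contributes nothing in expectation. Hence we prove the $\hat J$ part first.
\item \emph{Online replay bias and noise.} The buffer $\mathcal{D}_0^t$ contains samples generated under $\boldsymbol{\theta}_{t-T_t+1},\dots,\boldsymbol{\theta}_t$, which differ from $\boldsymbol{\theta}_t$ by $\mathcal{O}(T_t\beta_{t-T_t})$. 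Markov samples are not yet stationary, which adds a bias of order $\lambda\varrho^{\tau}$ by geometric mixing. With $T_t\sim\log t$, these biases are $\alpha_t$-summable via $\sum\alpha_t\beta_t\log(t+1)<\infty$. The remaining zero-mean part is a martingale difference with bounded second moment, so $\sum\alpha_t^2<\infty$ makes $\sum\alpha_t\times(\text{noise})$ converge almost surely.
\end{enumerate}

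\textbf{Closing the argument.} We apply a standard lemma for recursions of the form $e_t=(1-\alpha_t)e_{t-1}+\alpha_t(\text{martingale}+\text{summable bias}+\text{bounded floor})+o(\alpha_t)$. Such recursions satisfy $\limsup\|e_t\|\le$ floor almost surely (e.g.\ Lemma~1 of \cite{CSSCA}, or a Robbins--Siegmund argument on $\|e_t\|^2$). For $\hat J_i^t$ the floor is zero, since \eqref{eq:J-new} involves no critic, and this gives the first claim. For the gradient, substituting the first claim into piece 3 gives a floor of $c_\pi\epsilon_{m_Q}$.

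\textbf{Main obstacle.} The delicate step is the Markovian replay bias, i.e.\ controlling non-i.i.d.\ samples drawn from a sliding window of drifting policies. We handle it with an auxiliary frozen-policy chain started $\tau_t\sim\log t$ steps back and a coupling argument, exactly as in the SLDAC analysis. This is where the $\alpha_t\beta_t\log(t+1)$ summability is used.
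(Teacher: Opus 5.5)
Your proof is sound at the paper's level of rigor and uses the same ingredients as the paper: offline bias $\mathcal{O}(\xi_t)$ with $\sum_t\alpha_t\xi_t<\infty$, replay-buffer bias $\mathcal{O}(t^{-2}+T_t\beta_{t-T_t+1})$ from the mixing/policy-drift split, the critic bound of Lemma~\ref{lemma-convergence-rate-critic}, and target drift $\mathcal{O}(\beta_t/\alpha_t)\to0$. For $\hat J_i^t$ the two arguments coincide.

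For the gradient you take a different route. The paper applies the zero-floor tracking lemma (Lemma~\ref{lemma:appC_SA_tracking}) to the auxiliary target $\nabla_{\boldsymbol\theta}\hat J_i(\boldsymbol\theta_t)=\mathbb{E}_{\sigma_{\pi_{\boldsymbol\theta_t}}}[\hat Q_i^{\pi_{\boldsymbol\theta_t}}\boldsymbol\phi_{\boldsymbol\theta_t}]$ and concludes exact tracking of it. It then obtains the residual from a triangle inequality and the bound $\|\nabla\hat J_i-\nabla J_i\|_2\le c_\pi(|\hat J_i^t-J_i(\boldsymbol\theta_t)|+\epsilon_{m_Q})$. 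You instead track $\nabla_{\boldsymbol\theta}J_i(\boldsymbol\theta_t)$ directly, let $\epsilon_{m_Q}$ enter as a persistent critic bias, and remove the $\hat Q$-versus-$Q$ offset with the zero-mean score identity.

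Your organization is the more careful one. The paper asserts $\sum_t\alpha_t\|\boldsymbol o_g^{t-1}\|_2<\infty$, but its bias bound contains $c_{g,3}\bar\epsilon_{\mathrm{cri}}(t)$. Lemma~\ref{lemma-convergence-rate-critic} only drives that term to $\mathcal{O}(\epsilon_{m_Q})$, so condition~(3) is not guaranteed when $\epsilon_{m_Q}>0$. Splitting $\bar\epsilon_{\mathrm{cri}}$ into the part that Assumption~\ref{assumption:step_size}.6 makes $\alpha_t$-summable plus a non-summable floor is what legitimately yields the $c_\pi\epsilon_{m_Q}$ residual. It also assigns that residual to the critic rather than to the $\hat Q$-versus-$Q$ gap, which involves no network at all.

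The price is that your closing result is not what Lemma~\ref{lemma:appC_SA_tracking} or Lemma~1 of \cite{CSSCA} provide, since both give exact tracking under summable bias. A plain Robbins--Siegmund argument on $\|e_t\|^2$ also does not absorb a non-vanishing floor. You need to supply the floor version yourself:
\begin{itemize}
\item Split off $\boldsymbol n_t=(1-\alpha_t)\boldsymbol n_{t-1}+\alpha_t\boldsymbol M_t$, which vanishes a.s. because $\sum_t\alpha_t^2<\infty$ and $\sum_t\alpha_t=\infty$.
\item Apply the pathwise comparison $r_t\le(1-\alpha_t)r_{t-1}+\alpha_t(F+b_t+o(1))$, with $\sum_t\alpha_tb_t<\infty$, to $r_t=\|e_t-\boldsymbol n_t\|_2$. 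This gives $\limsup_t r_t\le F$.
\end{itemize}

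Two small corrections. First, $\pi_n/\pi_{\boldsymbol\theta}\le1/\rho_n$ is not uniform, because $\rho_n$ may approach $0$ on the simplex. The uniform ratio bound has to come from the Gaussian densities being bounded above and below on the compact $\mathcal{A}$, which requires covariances bounded away from singularity. Second, in your closing step the first claim makes piece~3 vanish, while the floor $c_\pi\epsilon_{m_Q}$ comes from piece~2.
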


\emph{Proof sketch.} Regard \eqref{eq:J-average} and \eqref{eq:g-average} as stochastic-approximation recursions tracking $J_i(\boldsymbol{\theta}_t)$ and $\nabla_{\boldsymbol{\theta}}\hat J_i(\boldsymbol{\theta}_t)$ \cite{Lemma3}. Mixing and policy drift bound their conditional biases by $\mathcal{O}(t^{-2}+T_t\beta_{t-T_t+1}+\xi_t)$, with an additional $\mathcal{O}(\bar\epsilon_{\mathrm{cri}}(t))$ for the gradient. Assumption~\ref{assumption:step_size} gives bounded moments, weighted bias summability, and target drift $\mathcal{O}(\beta_t/\alpha_t)=o(1)$, so the tracking lemma applies. Finally, $\|\nabla\hat J_i-\nabla J_i\|_2\le c_\pi(|\hat J_i^t-J_i|+\epsilon_{m_Q})$ gives the stated residual gradient error. Appendix~\ref{app:sa-tracking} states the tracking lemma, and Appendix~\ref{sec:proof-of-asymptotic-consistency} verifies its conditions for both running estimates.

\subsection{Finite-Time Convergence Rate of the Surrogate Functions}
Lemma \ref{lemma-asymptotic-consistency} establishes the asymptotic accuracy of $\hat{J}_{i}^{t}$ and $\hat{\boldsymbol{g}}_{i}^{t}$. We next quantify the finite-time convergence rate of the surrogate functions through the averaged estimation errors:
\begin{align}
\epsilon_{J}(t) & \triangleq\frac{1}{t+1}\sum_{k=0}^{t}\max_{0\le i\le I}\mathbb{E}\!\left[\left|\hat{J}_{i}^{k}-J_{i}(\boldsymbol{\theta}_{k})\right|^{2}\right],\label{eq:epsJ_def}\\
\epsilon_{g}(t) & \triangleq\frac{1}{t+1}\sum_{k=0}^{t}\max_{0\le i\le I}\mathbb{E}\!\left[\left\Vert \hat{\boldsymbol{g}}_{i}^{k}-\nabla_{\boldsymbol{\theta}}J_{i}(\boldsymbol{\theta}_{k})\right\Vert _{2}^{2}\right].\label{eq:epsg_def}
\end{align}

\begin{lemma}[Convergence Rate of the Surrogate Functions]~

\label{lemma-convergence-rate-surrogate}

Suppose Assumptions \ref{assumption:problem_structure}-\ref{assumption:target_Q} hold and the online buffer length $T_{t}=\mathcal{O}(\log(t+1))$. The finite-time convergence rate of the surrogate functions is:

\begin{align}
\epsilon_{J}(t)\le & \ \mathcal{O}\!\Big(\frac{1}{(t+1)\alpha_{t+1}}\Big)+\mathcal{O}\!\Big(\frac{1}{t+1}\sum_{k=0}^{t}\big(\alpha_{k+1}+T_{k}\alpha_{k}\nonumber \\
& +T_{k}^{2}\beta_{k}+\beta_{k}^{2}(\alpha_{k+1}^{-2}+\alpha_{k+1}^{-1})\big)\Big)\nonumber \\ & +\mathcal{O}\!\Big(\frac{1}{t+1}\sum_{k=0}^{t}\xi_{k+1}\Big),\label{eq:epsJ_rate}\\[1mm]
\epsilon_{g}(t)\le & \ \mathcal{O}\!\Big(\frac{1}{(t+1)\alpha_{t+1}}\Big)+\mathcal{O}\!\Big(\frac{1}{t+1}\sum_{k=0}^{t}\big(\bar{\epsilon}_{\mathrm{cri}}(k)+k^{-1}\nonumber \\
& +T_{k}\beta_{k-T_{k}+1}+\beta_{k}^{2}(\alpha_{k+1}^{-2}+\alpha_{k+1}^{-1})+\alpha_{k+1}\big)\Big)\nonumber \\ & +\mathcal{O}\!\Big(\sqrt{\epsilon_{J}(t)}+\frac{1}{t+1}\sum_{k=0}^{t}\xi_{k+1}+\epsilon_{m_{Q}}\Big).
\end{align}
where the aggregated critic error $\bar{\epsilon}_{\mathrm{cri}}(k)\triangleq\max_{0\le i\le I}\epsilon_{k,i}^{\mathrm{cri}}$.

In both bounds, the term involving $\xi_{k+1}$ captures the finite-time effect of offline data reuse and vanishes asymptotically under Assumption~\ref{assumption:step_size}.5.

\end{lemma}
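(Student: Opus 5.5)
We will prove both bounds by a one-step mean-square recursion for the running-estimate errors, which we then sum and average over $k=0,\ldots,t$. Write $e_J^{k}\triangleq\hat J_i^{k}-J_i(\boldsymbol{\theta}_k)$. By \eqref{eq:J-average},
\[
e_J^{k+1}=(1-\alpha_{k+1})e_J^{k}+\alpha_{k+1}\bigl(\tilde J_i^{k+1}-J_i(\boldsymbol{\theta}_{k+1})\bigr)+(1-\alpha_{k+1})\bigl(J_i(\boldsymbol{\theta}_k)-J_i(\boldsymbol{\theta}_{k+1})\bigr).
\]
We split the middle term into three parts. The first is a zero-mean noise part. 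The second is a conditional bias from the online buffer: mixing, plus samples generated by outdated policies $\boldsymbol{\theta}_{k-T_k+1},\ldots,\boldsymbol{\theta}_k$. The third is the offline bias, which equals $\xi_{k+1}$ times a bounded distribution-mismatch term, using the boundedness in Assumption~\ref{assumption:problem_structure}. The last term is target drift. Since $J_i$ is Lipschitz (through the Lipschitz policy and geometric mixing) and $\|\boldsymbol{\theta}_{k+1}-\boldsymbol{\theta}_k\|=\mathcal{O}(\beta_k)$ by \eqref{eq:theta update} and compactness of $\boldsymbol{\Theta}$, this drift is $\mathcal{O}(\beta_k)$.

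Next we square the recursion and take expectations. Young's inequality $2ab\le \alpha a^2+b^2/\alpha$, applied to the cross terms, will produce the following contributions:
\begin{itemize}
\item $\mathcal{E}_{k+1}\le(1-\alpha_{k+1})\mathcal{E}_k+\ldots$ for the contraction;
\item $\alpha_{k+1}^2$ from the noise;
\item $\beta_k^2(\alpha_{k+1}^{-1}+1)$ from the drift;
\item $T_k\alpha_k$ from the buffer bias, using that the mean of $T_k$ correlated samples deviates by $\mathcal{O}(T_k\beta)$;
\item $T_k^2\beta_k$ from the cross term between buffer staleness and drift;
\item $\alpha_{k+1}\xi_{k+1}$ from the offline bias.
\end{itemize}
Rearranging gives $\mathcal{E}_k\le(\mathcal{E}_k-\mathcal{E}_{k+1})/\alpha_{k+1}+\text{(per-step terms)}/\alpha_{k+1}$. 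Summing over $k$ and applying summation by parts with non-increasing $\alpha$ handles the telescoping part: the leftover coefficients $\alpha_{k+1}^{-1}-\alpha_k^{-1}$ sum to $\alpha_{t+1}^{-1}$, and with $\mathcal{E}$ bounded this gives the $\mathcal{O}(1/((t+1)\alpha_{t+1}))$ term. Dividing by $t+1$ yields \eqref{eq:epsJ_rate}; the drift contributes $\beta_k^2(\alpha_{k+1}^{-2}+\alpha_{k+1}^{-1})$, and the offline contribution reduces to $\frac{1}{t+1}\sum\xi_{k+1}$. Finally we take the max over $i$, using finitely many $i$ and summing the per-$i$ bounds.

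The gradient bound follows the same template applied to $\hat{\boldsymbol g}_i^k$. The new ingredient is the bias of $\tilde{\boldsymbol g}_i^{k}$ relative to $\nabla J_i(\boldsymbol\theta_k)$, which we decompose into three pieces.
\begin{itemize}
\item \emph{Critic error.} Replacing $Q_i$ by $f(\bar{\boldsymbol\omega}^i_k)$ costs $\bar\epsilon_{\mathrm{cri}}(k)$ times the ratio $\pi_n/\pi_{\boldsymbol\theta}$ or $\rho_0\nabla\pi_0/\pi_{\boldsymbol\theta}$. Both ratios are uniformly bounded: $\pi_n/\pi_{\boldsymbol\theta}\le 1/\rho_n$ where relevant, and the Gaussian score is bounded on compact sets. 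This uniform ratio bound is the point that needs care.
\item \emph{Surrogate offset.} $\hat Q$ uses $\hat J_i^k$ in place of $J_i(\boldsymbol\theta_k)$. The resulting offset is a constant times $\nabla\log\pi$, whose stationary expectation vanishes only under on-policy sampling. Under mixed sampling it leaves a residual bounded by $|e_J^k|$, which produces the $\sqrt{\epsilon_J(t)}$ term after Cauchy--Schwarz on the average.
\item \emph{Linearization and sampling.} The linearization gap contributes $\epsilon_{m_Q}$; mixing contributes $k^{-1}$ (from $\lambda\varrho^{T}$ over the window); buffer staleness contributes $T_k\beta_{k-T_k+1}$; and the offline term contributes $\xi_{k+1}$.
\end{itemize}
Inserting these into the same squared recursion and averaging gives the stated $\epsilon_g(t)$ bound. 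The critic error is bounded termwise by Lemma~\ref{lemma-convergence-rate-critic}; we keep it as $\bar\epsilon_{\mathrm{cri}}(k)$ without substituting.

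The main obstacle is the online-buffer bias with correlated, stale samples. For this we will use a coupling argument that compares the buffer trajectory to one generated under the frozen policy $\pi_{\boldsymbol\theta_{k-T_k}}$. Its total-variation distance is $\mathcal{O}(\sum_{j}\|\boldsymbol\theta_j-\boldsymbol\theta_{k-T_k}\|)=\mathcal{O}(T_k^2\beta)$, and mixing under the frozen policy adds $\lambda\varrho^{T_k}$. With $T_k=\mathcal{O}(\log(k+1))$ chosen large enough, $\lambda\varrho^{T_k}=\mathcal{O}(k^{-1})$. This is exactly where $T_k$ and $T_k^2\beta_k$ enter, and it must be done carefully so that the two bounds do not double-count.
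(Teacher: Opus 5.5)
Your plan is essentially the paper's proof in Appendix~\ref{sec:proof-of-surrogate}. It has the same ingredients: a squared one-step recursion for the tracking error, summation by parts for the $\mathcal{O}(1/((t+1)\alpha_{t+1}))$ term, $\mathcal{O}(\beta_k)$ target drift, and an $\alpha_{k+1}$ variance term. It also splits the conditional bias the same way, into critic error, mixing and buffer staleness, the $\xi_{k+1}$ offline part, and the $\hat J$-offset that becomes $\sqrt{\epsilon_J(t)}$ by Cauchy--Schwarz. The only differences are these:
\begin{itemize}
\item You absorb the drift cross term ($\langle\boldsymbol{y}_k,\boldsymbol{\Delta}_k\rangle$ in the paper's notation) by Young's inequality inside each step. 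The paper instead applies a global Cauchy--Schwarz and solves $F_i(t)\le c_1\sqrt{F_i(t)N(t)}+c_2R(t)$.
\item You write out $\epsilon_J$ in detail, whereas the paper writes out $\epsilon_g$.
\end{itemize}

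Two details need repair.
\begin{itemize}
\item The bound $\pi_n/\pi_{\boldsymbol\theta}\le 1/\rho_n$ is not uniform, because $\rho_n$ may tend to $0$ on the simplex. Instead use $\pi_{\boldsymbol\theta}\ge\min_k\pi_k$, together with compactness of $\mathcal{A}$ and covariances bounded away from zero. Alternatively, simply adopt the paper's standing requirement that the mixture ratios are uniformly bounded.
\item Your frozen-policy coupling must start a mixing horizon of order $\log k$ before the buffer window. A frozen chain started at the first buffer sample has window-averaged mixing error of order $\frac{1}{T_k}\sum_{j\le T_k}\varrho^{j}=\mathcal{O}(1/T_k)$, not $\lambda\varrho^{T_k}$.
\end{itemize}
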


\emph{Proof sketch.} For fixed $i$, square the recursion for $\boldsymbol y_k=\hat{\boldsymbol g}_i^k-\nabla J_i(\boldsymbol{\theta}_k)$. Averaging separates telescoping, target-drift, estimator-variance, conditional-bias, and cross terms. Lipschitz continuity makes the target drift $\mathcal{O}(\beta_k)$, and Lemma~\ref{lemma-convergence-rate-critic} controls the critic, buffer, and offline contributions. Cauchy--Schwarz gives $F_i(t)\le c_1\sqrt{F_i(t)N(t)}+c_2R(t)$, where $N(t)$ collects $\beta_k^2/\alpha_{k+1}^2$ and $R(t)$ the remaining displayed rates; solving it yields the $\epsilon_g(t)$ bound. The scalar value recursion yields the $\epsilon_J(t)$ bound from its telescoping, drift, variance, and offline-bias terms. Appendix~\ref{sec:proof-of-surrogate} supplies the term-by-term bounds.
\subsection{Convergence of the Actor Module}

We now analyze the limiting behavior of the actor sequence $\{\boldsymbol{\theta}_t\}$ and show that every limiting point is an $\epsilon_{m_Q}$-KKT point of Problem~\eqref{eq:problem 1-1}, where $\epsilon_{m_Q}\to 0$ as $m_Q\to\infty$.

Consider any subsequence $\{\boldsymbol{\theta}_{t_j}\}_{j=1}^{\infty}$ converging to a limiting point $\boldsymbol{\theta}^{\star}$. Since $\boldsymbol{\Theta}$ is compact and the sequences $\{\hat{J}_{i}^{t_j}\}_{j}$, $\{\hat{\boldsymbol g}_{i}^{t_j}\}_{j}$, and $\{\boldsymbol{\theta}_{t_j}\}_{j}$ are bounded, there exists a further subsequence, still indexed by $\{t_j\}$ for notational simplicity, and limits $J_i^\infty\in\mathbb{R}$, $\boldsymbol g_i^\infty\in\mathbb{R}^{n_\theta}$ such that $\hat{J}_{i}^{t_j}\to J_i^\infty$ and $\hat{\boldsymbol g}_{i}^{t_j}\to \boldsymbol g_i^\infty$. Hence, the surrogate functions $\bar J_i^{t_j}(\boldsymbol\theta)$ converge uniformly on $\boldsymbol\Theta$ to
\[
\bar J_i^\infty(\boldsymbol\theta)
=
J_i^\infty
+
(\boldsymbol g_i^\infty)^T(\boldsymbol\theta-\boldsymbol\theta^\star)
+
\zeta_i\|\boldsymbol\theta-\boldsymbol\theta^\star\|_2^2.
\]
By Lemma~\ref{lemma-asymptotic-consistency}, we have
\begin{equation}
\begin{aligned}
\left|
\bar J_i^\infty(\boldsymbol\theta^\star)-J_i(\boldsymbol\theta^\star)
\right| &= 0,\\
\left\|
\nabla \bar J_i^\infty(\boldsymbol\theta^\star)
-
\nabla J_i(\boldsymbol\theta^\star)
\right\|_2
&\le c_\pi \epsilon_{m_Q}.
\end{aligned}
\end{equation}

Combining Lemmas~\ref{lemma-convergence-rate-critic}-- \ref{lemma-convergence-rate-surrogate} with Assumptions \ref{assumption:problem_structure}-- \ref{assumption:target_Q} and the CSSCA KKT-convergence argument in \cite{CSSCA,SCAOPO} yields the following theorem.

\begin{theorem}[Global Convergence of Algorithm \ref{alg:the-Single-Loop-Deep}]~
\label{Theorem:GlobalConvergence}

Suppose Assumptions \ref{assumption:problem_structure}-\ref{assumption:target_Q} are satisfied and the initial point $\boldsymbol{\theta}_{0}$ is feasible, i.e., $\max_{i\in\{1,\ldots,I\}}J_{i}\!\left(\boldsymbol{\theta}_{0}\right)\leq0$, and the number of sampled data is set to $T_{t}=\mathcal{O}(\log t)$. Denote $\{\boldsymbol{\theta}_{t}\}_{t=1}^{\infty}$ as the iterates generated by Algorithm \ref{alg:the-Single-Loop-Deep} with a sufficiently small initial step size $\beta_{0}$. Then every limiting point $\boldsymbol{\theta}^{\star}$ of $\{\boldsymbol{\theta}_{t}\}_{t=1}^{\infty}$ satisfying the Slater condition \cite{Slater} is an $\epsilon_{m_{Q}}$-KKT point of Problem~\eqref{eq:problem 1-1}; namely, there exist multipliers $\lambda=[\lambda_{1},\ldots,\lambda_{I}]^{T}\succeq0$ such that
\begin{gather}
\left\Vert \nabla J_{0}\!\left(\boldsymbol{\theta}^{\star}\right)+\sum_{i=1}^{I}\lambda_{i}\nabla J_{i}\!\left(\boldsymbol{\theta}^{\star}\right)\right\Vert _{2}\leq\epsilon_{m_{Q}},\\
J_{i}\!\left(\boldsymbol{\theta}^{\star}\right)\leq\epsilon_{m_{Q}},\qquad i=1,\ldots,I,\\
\left|\lambda_{i}J_{i}\!\left(\boldsymbol{\theta}^{\star}\right)\right|\leq\epsilon_{m_{Q}},\qquad i=1,\ldots,I,
\end{gather}
where $\lim_{m_{Q}\rightarrow\infty}\epsilon_{m_{Q}}=0$.
\end{theorem}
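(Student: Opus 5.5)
The plan is to follow the CSSCA KKT-convergence argument of \cite{CSSCA,SCAOPO}, with exact surrogate consistency replaced by the approximate consistency in Lemma~\ref{lemma-asymptotic-consistency}. The argument has three steps.

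\textbf{Step 1: the iterates approach feasibility.} I would show that $\limsup_t \max_i J_i(\boldsymbol{\theta}_t)\le \mathcal{O}(\epsilon_{m_Q})$, almost surely.
\begin{itemize}
\item Whenever \eqref{eq:feasible update} is active, the step $\bar{\boldsymbol{\theta}}_t-\boldsymbol{\theta}_t$ is a descent direction for the surrogate max-violation $\max_i \bar J_i^t$.
\item By Lemma~\ref{lemma-asymptotic-consistency}, $\bar J_i^t$ matches $J_i$ at $\boldsymbol{\theta}_t$ up to $o(1)$ in value and $c_\pi\epsilon_{m_Q}+o(1)$ in gradient. A Taylor expansion of $J_i$, using Lipschitz gradients on compact $\boldsymbol{\Theta}$, then gives
\[
\max_i J_i(\boldsymbol{\theta}_{t+1})\le \max_i J_i(\boldsymbol{\theta}_t)-c\beta_t\|\bar{\boldsymbol{\theta}}_t-\boldsymbol{\theta}_t\|^2+\mathcal{O}(\beta_t^2)+\beta_t\,\mathcal{O}(e_t),
\]
where $e_t$ is the estimation error.
\item With $\sum\beta_t^2<\infty$ and a small $\beta_0$, starting from the feasible $\boldsymbol{\theta}_0$, the constraint violation cannot accumulate beyond $\mathcal{O}(\epsilon_{m_Q})$.
\item Slater's condition at the limit point then ensures that \eqref{eq:objective update} is feasible for all large $t$ along the subsequence, so eventually only the objective update is used.
\end{itemize}

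\textbf{Step 2: the step length vanishes.} I would show that $\|\bar{\boldsymbol{\theta}}_t-\boldsymbol{\theta}_t\|\to 0$ along the subsequence $t_j$.
\begin{itemize}
\item Once only \eqref{eq:objective update} is used, strong convexity of $\bar J_0^t$ together with feasibility of $\boldsymbol{\theta}_t$ for the surrogate problem (up to vanishing error) gives
\[
J_0(\boldsymbol{\theta}_{t+1})\le J_0(\boldsymbol{\theta}_t)-\zeta_0\beta_t\|\bar{\boldsymbol{\theta}}_t-\boldsymbol{\theta}_t\|^2+\mathcal{O}(\beta_t^2)+\beta_t\,\mathcal{O}(e_t).
\]
\item Summing and using boundedness of $J_0$ yields $\sum_t\beta_t\|\bar{\boldsymbol{\theta}}_t-\boldsymbol{\theta}_t\|^2<\infty$ up to an $\mathcal{O}(\epsilon_{m_Q})$ residual.
\item This implies $\liminf_t \|\bar{\boldsymbol{\theta}}_t-\boldsymbol{\theta}_t\| = \mathcal{O}(\sqrt{\epsilon_{m_Q}})$-type smallness.
\item The standard CSSCA contradiction argument upgrades $\liminf$ to $\lim$. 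If the step length were large infinitely often, the slow variation of $\boldsymbol{\theta}_t$ (controlled by $\beta_t\to0$) together with continuity of the surrogate solution map would force a divergent decrease of $J_0$, which is impossible.
\end{itemize}

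\textbf{Step 3: pass to the limit in the surrogate KKT conditions.} Along $t_j$, the surrogates converge uniformly to $\bar J_i^\infty$, and the surrogate solution converges to $\boldsymbol{\theta}^\star$ because the step length vanishes.
\begin{itemize}
\item $\boldsymbol{\theta}^\star$ therefore solves the limiting convex problem $\min \bar J_0^\infty$ subject to $\bar J_i^\infty\le0$.
\item Slater's condition for this limiting problem, inherited from the Slater condition at $\boldsymbol{\theta}^\star$ via the value match $\bar J_i^\infty(\boldsymbol{\theta}^\star)=J_i(\boldsymbol{\theta}^\star)$, gives bounded multipliers $\lambda\succeq0$ satisfying exact KKT conditions for the surrogates at $\boldsymbol{\theta}^\star$.
\item Substituting $\nabla\bar J_i^\infty(\boldsymbol{\theta}^\star)=\nabla J_i(\boldsymbol{\theta}^\star)+\mathcal{O}(c_\pi\epsilon_{m_Q})$ and $\bar J_i^\infty(\boldsymbol{\theta}^\star)=J_i(\boldsymbol{\theta}^\star)$ yields the stationarity residual $c_\pi(1+\sum_i\lambda_i)\epsilon_{m_Q}$, exact complementary slackness, and primal feasibility $J_i(\boldsymbol{\theta}^\star)\le0$.
\item Rescaling constants into $\epsilon_{m_Q}$ gives the three stated inequalities.
\end{itemize}

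\textbf{Main obstacle.} The hardest part is Step 2, the $\liminf$-to-$\lim$ argument. The gradient errors $e_t$ are not summable against $\beta_t$; they converge only to a nonzero floor $c_\pi\epsilon_{m_Q}$. The descent inequality therefore contains a persistent $\beta_t\epsilon_{m_Q}$ drift, and with $\sum\beta_t=\infty$ this drift is unbounded. My first attempt would restrict the contradiction argument to indices where $\|\bar{\boldsymbol{\theta}}_t-\boldsymbol{\theta}_t\|^2\gg\epsilon_{m_Q}$. On those indices the decrease term dominates the drift, so the argument shows that the step length ultimately stays within an $\mathcal{O}(\sqrt{\epsilon_{m_Q}})$ neighborhood of zero. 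This weaker form of Step 2 suffices for Step 3, since the resulting KKT residual is absorbed into the redefined $\epsilon_{m_Q}\to0$.
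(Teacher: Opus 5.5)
Your outline follows the same route as the paper. The paper's proof just invokes the CSSCA KKT argument of \cite{CSSCA,SCAOPO}, with Lemma~\ref{lemma-asymptotic-consistency} supplying exact value consistency and $c_\pi\epsilon_{m_Q}$ gradient consistency at $\boldsymbol{\theta}^\star$, and then absorbs constants into $\epsilon_{m_Q}$. You rightly notice what that absorption hides, namely that the gradient bias never vanishes. However, the remedy you propose does not work.

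The $\liminf$-to-$\lim$ upcrossing argument yields a contradiction only if the fixed decrease of $J_0$ earned by each large-step excursion cannot be paid back between excursions. In effect this requires $J_0(\boldsymbol{\theta}_t)$ to converge, e.g.\ by Robbins--Siegmund when the perturbation in the descent inequality is summable against $\beta_t$. With a persistent floor, your own inequality lets $J_0$ \emph{increase} by order $\beta_t\epsilon_{m_Q}$ at every index where $\|\bar{\boldsymbol{\theta}}_t-\boldsymbol{\theta}_t\|^2\lesssim\epsilon_{m_Q}$. These indices necessarily carry infinite $\beta$-mass, so $J_0$ can be replenished indefinitely. Restricting the argument to the large-step indices, where the decrease does dominate, does not address this: infinitely many excursions to step length $\gg\sqrt{\epsilon_{m_Q}}$ remain compatible with boundedness of $J_0$.

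What the descent inequality actually gives is that $\sum_t\beta_t(\zeta_0\|\bar{\boldsymbol{\theta}}_t-\boldsymbol{\theta}_t\|^2-C\epsilon_{m_Q})$ is bounded above, hence $\liminf_t\|\bar{\boldsymbol{\theta}}_t-\boldsymbol{\theta}_t\|=\mathcal{O}(\sqrt{\epsilon_{m_Q}})$. That certifies \emph{some} approximately KKT limit point, not \emph{every} limit point as the theorem claims. The same global accounting undermines Step~1's inference from feasibility of \eqref{eq:objective update} near $\boldsymbol{\theta}^\star$ to ``eventually only the objective update is used''. Slater at one limit point does not exclude infinitely many restoration steps \eqref{eq:feasible update} elsewhere, each of which may raise $J_0$ by $\mathcal{O}(\beta_t)$.

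Closing the gap requires an ingredient that limits how much $J_0$ can be regained between excursions, and Lemma~\ref{lemma-asymptotic-consistency} alone does not supply it. One possible route is a perturbed-ODE argument. The limit set of $\{\boldsymbol{\theta}_t\}$ is internally chain transitive for the inclusion $\dot{\boldsymbol{\theta}}\in\bar{\boldsymbol{\theta}}(\boldsymbol{\theta})-\boldsymbol{\theta}+\{\boldsymbol{v}:\|\boldsymbol{v}\|_2\le c\,\epsilon_{m_Q}\}$. Under regularity of the surrogate solution map and a Sard-type condition (the values of $J_0$ on the KKT set have empty interior), such sets shrink to KKT points as $\epsilon_{m_Q}\to0$. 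This would give the theorem with some $\epsilon'_{m_Q}\to0$, but with no explicit $\sqrt{\epsilon_{m_Q}}$ rate. The other route is a gradient bias summable against $\beta_t$, which Lemma~\ref{lemma-asymptotic-consistency}, with its $c_\pi\epsilon_{m_Q}$ floor, does not provide.

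Once you have a bound $\limsup_t\|\bar{\boldsymbol{\theta}}_t-\boldsymbol{\theta}_t\|\le\delta_{m_Q}$ with $\delta_{m_Q}\to0$, your Step~3 must be run at a limit of $\bar{\boldsymbol{\theta}}_{t_j}$ rather than at $\boldsymbol{\theta}^\star$. Exact complementary slackness and $J_i(\boldsymbol{\theta}^\star)\le0$ then degrade to $\mathcal{O}(\delta_{m_Q})$ statements; that part is routine. The paper's proof does not address this obstacle at all, so your diagnosis is sharper than the paper's, but the proposed fix is not yet a proof.
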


\emph{Proof sketch.} Lemma~\ref{lemma-asymptotic-consistency} transfers the limiting surrogate conditions above to the original objectives up to $\mathcal{O}(\epsilon_{m_Q})$, while Lemmas~\ref{lemma-convergence-rate-critic} and~\ref{lemma-convergence-rate-surrogate} provide the required consistency. Feasible initialization and the Slater condition permit the CSSCA KKT argument in \cite{CSSCA,SCAOPO}; absorbing fixed constants into $\epsilon_{m_Q}$ gives the stated stationarity, feasibility, and complementary-slackness bounds.

\begin{remark}
The feasibility of the initial point $\boldsymbol{\theta}_{0}$ is a sufficient condition for the CSSCA-based convergence argument to exclude undesired stationary points of the constraint-violation minimization problem and establish convergence to a KKT point of Problem~\eqref{eq:problem 1-1}. Since Algorithm~\ref{alg:the-Single-Loop-Deep} includes the feasibility-restoration step in \eqref{eq:feasible update}, the iterates can still be pushed toward the feasible region even when $\boldsymbol{\theta}_{0}$ is infeasible. In practice, the algorithm may still converge from an infeasible initialization unless it is trapped near an undesired stationary point of the constraint-violation minimization problem. Related discussions can be found in the original CSSCA framework and its RL extensions \cite{CSSCA,SCAOPO,wang2024sldac}.
\end{remark}

\section{Simulation Results}
\label{sec:simulation-results}

In this section, we evaluate the proposed Fused-CPRO on two continuous-control wireless CMDP scenarios: delay-constrained power control for downlink MU-MIMO and sum-rate maximization for MIMO-ISAC beamforming with sensing-accuracy constraints. The experiments are designed to validate three key claims from the preceding analysis: (i) Fused-CPRO achieves faster convergence and better final performance by fusing heterogeneous priors, (ii) the mixed offline-online data reuse reduces the number of costly online interactions required to reach a feasible and high-performing policy, and (iii) the CSSCA-based actor reliably enforces long-term constraints under stochastic non-convexity. Across the two benchmarks, we compare Fused-CPRO with the following baselines, whose parameters are carefully tuned to achieve best possible empirical performance for each benchmark.
\begin{itemize}
\item \textbf{PPO-Lag}: a typical Lagrangian policy-gradient baseline based on PPO-style updates \cite{PPOLagTRPOLag}, which handles the safety constraint by augmenting the PPO objective with an adaptive penalty coefficient for cost-limit violation. It uses fresh on-policy trajectories at each update and does not exploit policy reuse or offline data. As a fully online method, it requires a large number of environment interactions to converge.
\item \textbf{SLDAC}: the single-loop constrained actor-critic algorithm \cite{wang2024sldac}, which applies CSSCA in the actor step and updates the critic only once or a few finite times per iteration. It does not exploit old policy and offline data. SLDAC shares the same CSSCA backbone as Fused-CPRO but relies entirely on online samples, serving as an ablation that isolates the benefit of policy reuse and offline data.
\item \textbf{SCAOPO}: the off-policy constrained actor-only method \cite{SCAOPO}, which replaces the original problem with a sequence of convex surrogate objective/feasibility subproblems and solves them through a Lagrange-dual procedure. It does not exploit old policy and offline data. Moreover, as an actor-only method without a learned critic, its gradient estimates may suffer from higher variance.
\item \textbf{HRL}: a policy reuse-based constrained RL method from \cite{zhang2025hrl,zhang2025prcrl}, which uses the same source-policy pool and the same DK policy as Fused-CPRO but does not exploit the offline data. Furthermore, it is an actor-only method similar to SCAOPO, but lacks a complete and rigorous convergence proof. Comparing Fused-CPRO with HRL isolates the benefit of offline data reuse and the critic-actor structure. For Fused-CPRO and HRL, we keep the source-policy pool and the DK policy identical.
\item \textbf{CPO}: the on-policy constrained policy-optimization method in \cite{CPO}, which performs policy updates within a KL-based trust region and targets near-constraint satisfaction at each iteration. In practice, it approximately solves the resulting trust-region step with conjugate gradient and backtracking line search, and it does not exploit policy reuse or offline data. Like PPO-Lag, CPO requires fresh on-policy data per update and is therefore sample-intensive.
\end{itemize}

\subsection{Delay-Constrained Power Control for Downlink MU-MIMO}
We first instantiate the delay-constrained MU-MIMO resource-allocation model in Section~\ref{subsec:mimo-model}. The BS is equipped with $N_{t}=8$ antennas and serves $K=4$ single-antenna users. The action consists of the user power vector $\mathbf{p}(t)$ and the RZF regularization factor $\alpha_Z(t)$. The channel of user $k$ is generated as follows:
\begin{equation}
\mathbf{h}_{k}=\sum_{i=1}^{N_{p}}\bar{\alpha}_{k,i}\mathbf{a}(\psi_{k,i}),
\end{equation}
where $N_{p}=4$, the angles of departure $\{\psi_{k,i}\}$ follow a Laplacian distribution with angular spread $5^{\circ}$, and the path gains are normalized such that the large-scale fading gains are uniformly distributed in $[-10,10]$ dB. The system bandwidth is $10$ MHz, the slot duration is $1$ ms, and the noise power density is $-100$ dBm/Hz. The rate, queue evolution, and delay-constrained power-minimization objective follow \eqref{eq:mimo_rate_model}--\eqref{eq:mimo_power_delay_problem}. The DK policy is chosen as a queue-aware proportional allocation rule \cite{stolyar2001largestLWDF}:
\begin{equation}
p_{k}^{\mathrm{DK}}(t)=P_{\max}\frac{Q_{k}(t)/\lambda_{k}}{\sum_{j=1}^{K}Q_{j}(t)/\lambda_{j}},
\end{equation}
combined with the classical RZF regularization heuristic $\alpha_{Z}^{\mathrm{DK}}(t)=\sigma^{2}/\bar{p}(t)$, where $\bar{p}(t)=P_{\max}/K$.

The following MU-MIMO simulation results in Figs.~\ref{fig:mimo_power}--\ref{fig:mimo_mix} use mini-batches of $100$ samples and one critic update per iteration ($q=1$). The offline data are collected in advance from the source and DK policies and require no additional online interactions during learning. In a practical deployment, this offline collection can be performed once per scenario and reused across multiple deployments or fine-tuning runs. The reused-policy pool consists of one SLDAC source policy trained under a different random channel environment and one DK policy, with uniform initial reuse probabilities over the new policy and the reused policies. In this case, we choose the step sizes as $\alpha_t=\frac{1}{t^{0.6}}$ and $\gamma_t=\frac{1}{t^{0.3}}$. For the actor parameters, we separate the step sizes of $\boldsymbol{\rho}$ and $\boldsymbol{\psi}$ in the simulations, and set them as $\beta_t^{(\rho)}=\frac{1}{t^{0.2}}$ and $\beta_t^{(\psi)}=\frac{1}{t^{0.7}}$, respectively. The offline weight is set as $\xi_t=\frac{0.5}{t^{0.7}}$.

\begin{figure}[!t]
\centering
\subfloat[Average total transmit power.\label{fig:mimo_power}]{\includegraphics[width=0.8\linewidth]{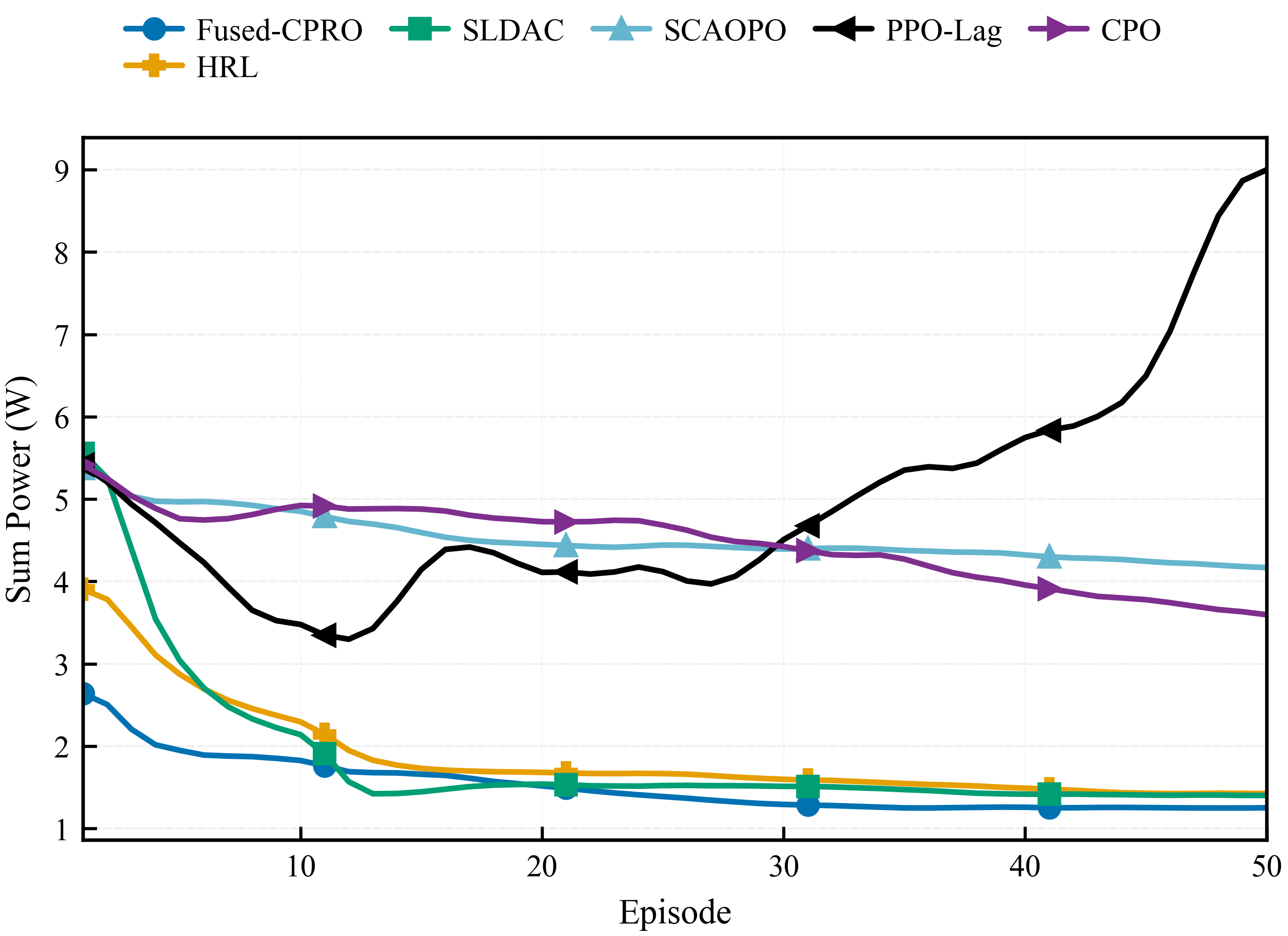}}
\par\smallskip
\subfloat[Average user delay backlog.\label{fig:mimo_delay}]{\includegraphics[width=0.8\linewidth]{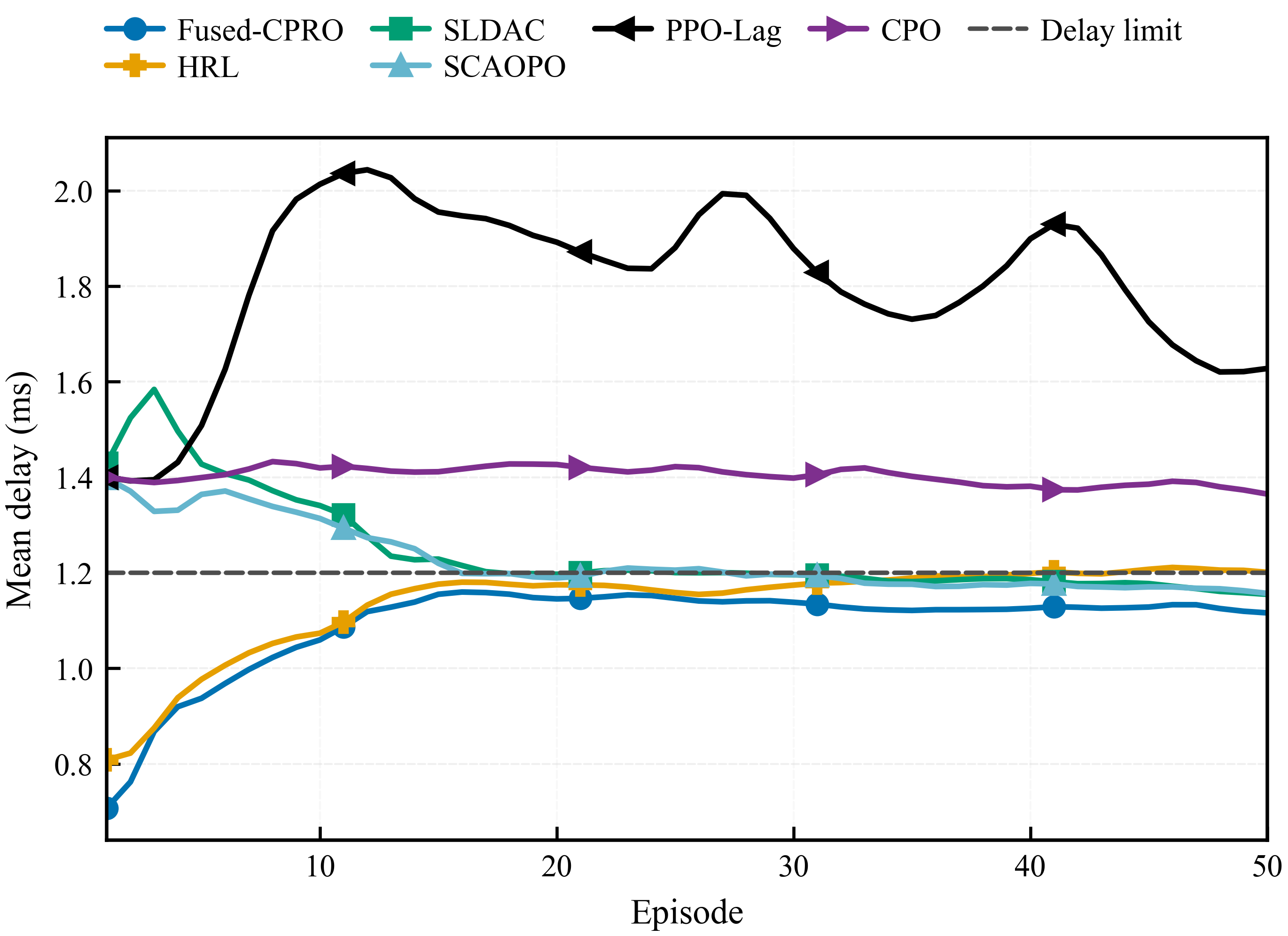}}
\caption{Simulation Results of the delay-constrained MU-MIMO task}
\label{fig:mimo_main}
\end{figure}

Figs.~\ref{fig:mimo_power} and \ref{fig:mimo_delay} show the learning curves of transmit power and average delay, respectively. Fused-CPRO achieves the lowest transmit power among all methods while keeping the average delay backlog below the constraint limit, demonstrating that the CSSCA actor effectively enforces long-term constraints even with policy reuse and mixed offline-online data. The performance gap is particularly pronounced in the early stage: Fused-CPRO improves much faster than SLDAC, which uses the same CSSCA backbone but trains only a single target policy from scratch. This early-stage advantage directly translates to reduced online interaction cost: Fused-CPRO reaches a feasible low-power region in significantly fewer iterations, consistent with the claim that fusing source and DK policies with offline data warm-starts learning and reduces costly online exploration. Traditional baselines such as CPO and PPO-Lag, which rely entirely on on-policy data without any form of prior knowledge, are even slower to approach the feasible region in this scenario.

\begin{figure}[!t]
\centering
\includegraphics[width=0.8\columnwidth]{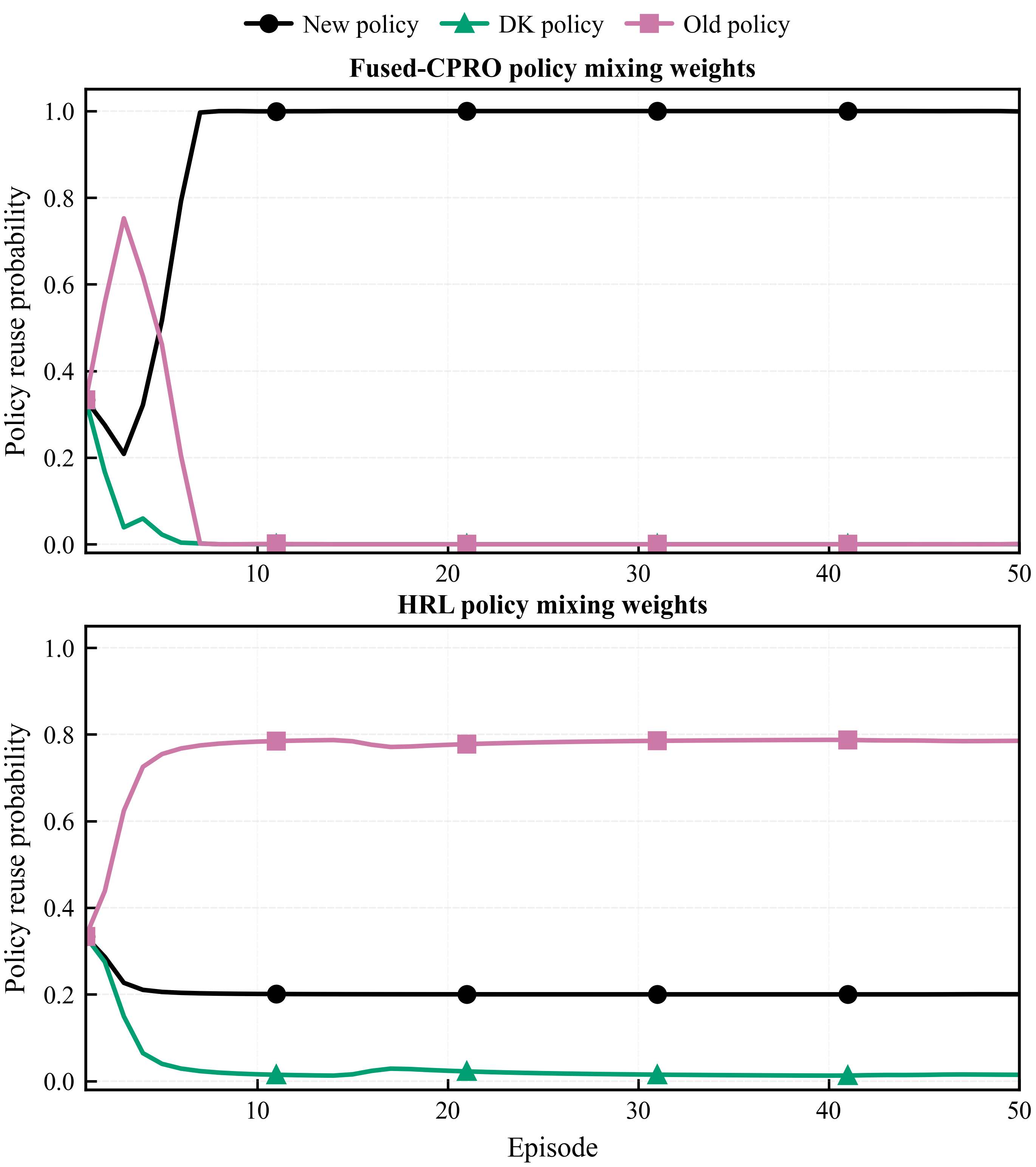}
\caption{Evolution of the reuse probabilities in the MU-MIMO task.}
\label{fig:mimo_mix}
\end{figure}

Comparing Fused-CPRO with HRL further isolates the role of offline data reuse and the critic-actor structure. Although both methods share the same source and DK policies, HRL reaches a higher final power and converges more slowly. The gap arises because HRL is an actor-only method that estimates Q-functions via Monte Carlo returns, which suffer from higher variance and become less stable when the mixed policy changes. In contrast, Fused-CPRO's critic, trained from mixed offline-online data, provides lower-variance value estimates that accelerate and stabilize the actor update.

The reuse probabilities in Fig.~\ref{fig:mimo_mix} are consistent with this behavior. In Fused-CPRO, the old and DK policies are mainly used in the initial stage, and the probability mass then shifts quickly to the new policy. In HRL, the reused policies remain dominant for much longer. This difference explains why Fused-CPRO adapts more rapidly: with offline data providing informative value estimates early on and faster learning speed due to the critic-actor structure, the actor can confidently shift toward the target policy, whereas HRL, lacking such critic support, must rely more heavily on the reused policies for a longer period.

\subsection{Sum-Rate Maximization for MIMO-ISAC Beamforming with Sensing-Accuracy Constraints}
We next instantiate the MIMO-ISAC beamforming model in Section~\ref{subsec:isac-model}. The BS is equipped with $M=16$ transmit antennas and serves $K=2$ single-antenna communication users while sensing $L=3$ targets. The channel contains $N_p=4$ paths, the large-scale fading gains are uniformly distributed in $[-10,10]$ dB, and each frame contains $N_{\mathrm{symbol}}=10$ transmit symbols. The target angles are initialized as $[-45^{\circ},0^{\circ},60^{\circ}]$, and the target reflection coefficients are initialized as $0.6+0.8j$, $0.8+0.6j$, and $\sqrt{2}/2+j\sqrt{2}/2$, respectively. The sensing and communication functions share a normalized total transmit-power budget $P_{\max}=1$.

The action follows the structure in Section~\ref{subsec:isac-model}: it contains $L$ sensing beam weights, the sensing power, and $K$ communication powers. The objective is to maximize the long-term communication sum-rate. The constraint costs are the per-target CRB values, and the fixed CRB limit is indicated by the dashed line in Fig.~\ref{fig:isac_crb}. For the DK policy, we use a risk-balanced zero-forcing (ZF) allocation rule. It estimates a sensing-risk score from the predicted target reflection coefficients, angles, and steering-vector coherence, allocates more sensing power when the predicted sensing risk is high, and allocates the remaining communication power according to ZF channel gains. This gives Fused-CPRO and HRL the same communication-aware DK policy in the MIMO-ISAC task.

The following MIMO-ISAC simulation results in Figs.~\ref{fig:isac_rate}--\ref{fig:isac_mix} use mini-batches of $100$ samples and one critic update per iteration ($q=1$). The offline data are collected in advance from the source and DK policies and require no additional online interactions during learning. The reused-policy pool consists of one SLDAC source policy checkpoint and one DK policy, with uniform initial reuse probabilities over the new policy and the reused policies. The step-size powers are set as $\alpha_t=\frac{1}{t^{0.6}}$, $\gamma_t=\frac{1}{t^{0.3}}$, $\beta_t^{(\rho)}=\frac{1}{t^{0.8}}$, and $\beta_t^{(\psi)}=\frac{1}{t^{0.7}}$.

\begin{figure}[!t]
\centering
\subfloat[Episode sum-rate.\label{fig:isac_rate}]{\includegraphics[width=0.8\linewidth]{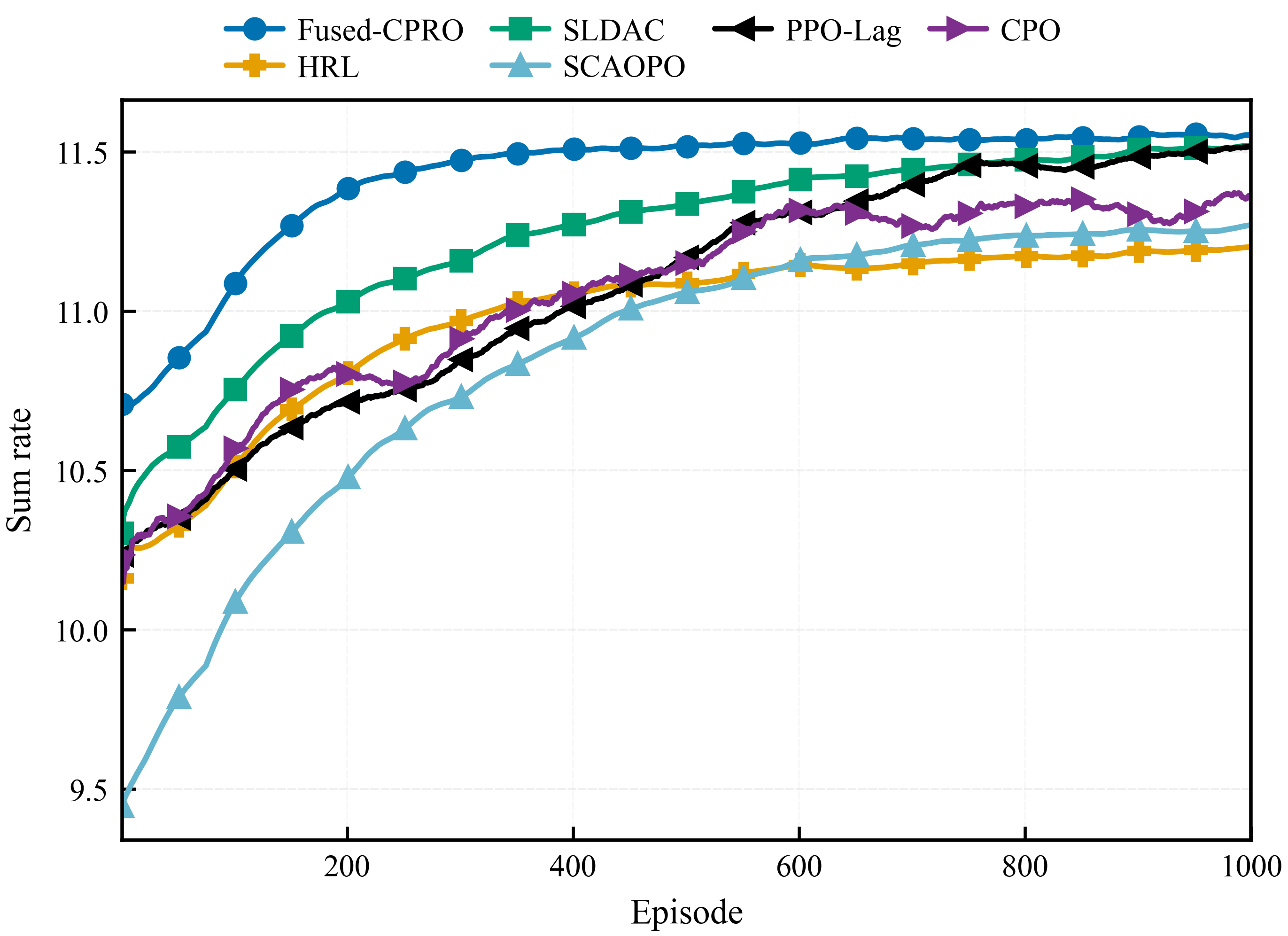}}
\par\smallskip
\subfloat[Mean CRB.\label{fig:isac_crb}]{\includegraphics[width=0.8\linewidth]{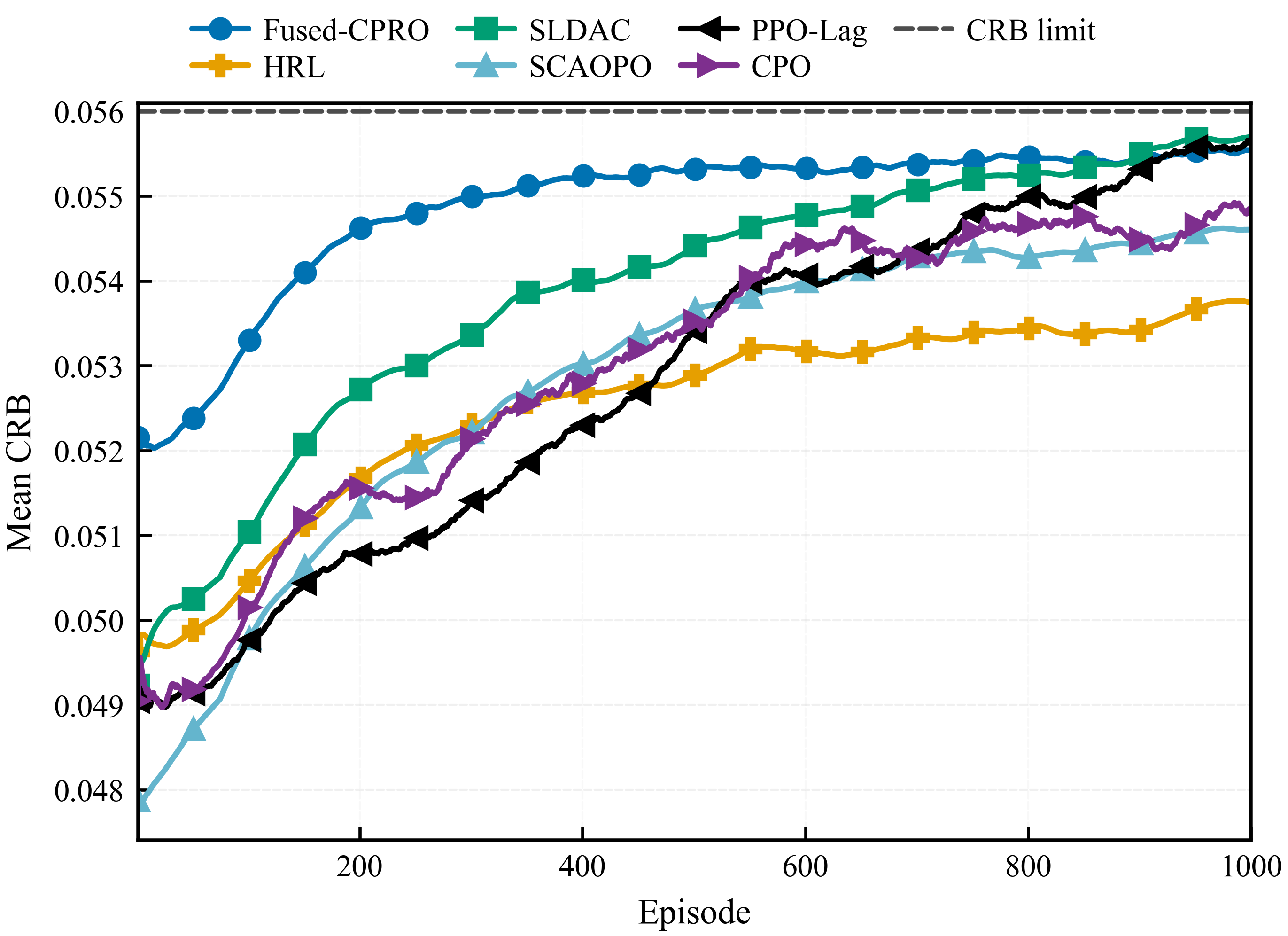}}
\caption{Performance of Fused-CPRO and baseline algorithms in the MIMO-ISAC beamforming task.}
\label{fig:isac_main}
\end{figure}

Fig.~\ref{fig:isac_rate} shows that Fused-CPRO reaches a high sum-rate region earlier than the baselines, while Fig.~\ref{fig:isac_crb} confirms that the mean CRB remains below the sensing-accuracy limit. Together, these results demonstrate that Fused-CPRO simultaneously improves communication performance and maintains reliable sensing constraint satisfaction in the more complex ISAC setting, where the state depends on past sensing actions and the constraints involve non-convex CRB functions. Compared with SLDAC, the faster rise mainly comes from policy reuse and offline data, which provide a useful warm start and more informative critic estimates. As in the MU-MIMO task, Fused-CPRO's early-stage advantage directly reflects the reduced number of online interactions required to reach a high-performance region. Compared with HRL, the gain comes from using the offline data in critic learning, which replaces high-variance Monte Carlo returns with reusable value estimates and enables more stable policy optimization in the high-dimensional ISAC action space.

\begin{figure}[!t]
\centering
\includegraphics[width=0.85\columnwidth]{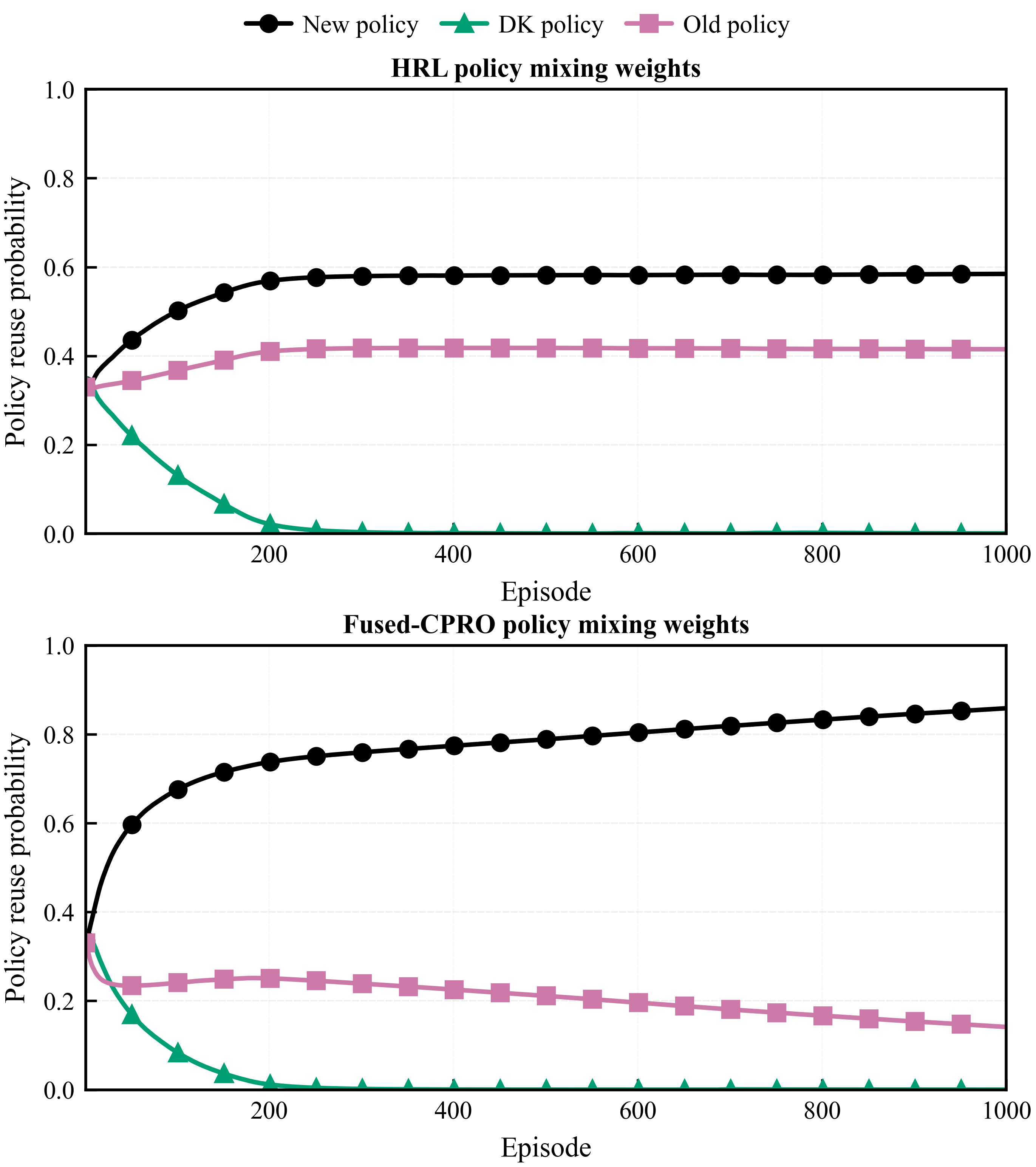}
\caption{Evolution of the reuse probabilities in the MIMO-ISAC task.}
\label{fig:isac_mix}
\end{figure}

The reuse probabilities in Fig.~\ref{fig:isac_mix} are consistent with this explanation. Fused-CPRO uses the source and DK policies mainly in the initial stage and then transfers probability mass to the new actor. This early reuse explains the convergence-speed advantage over SLDAC, which trains a single target policy from online data. The faster shift away from reused policies also explains the advantage over HRL: with critic-side offline data reuse, Fused-CPRO can exploit source experience early without keeping the target actor tied to the reused policies for too long, achieving both faster learning and better final performance in the ISAC beamforming task.

\section{Conclusion}
\label{sec:conclusion}
We proposed Fused-CPRO, a knowledge-fused constrained policy reuse optimization method for wireless resource allocation, motivated by the need for fast-converging, sample-efficient algorithms in dynamic environments with costly online interactions. Fused-CPRO constructs the allocation policy as a mixture of a learnable target policy, source policies, and DK policies, jointly optimizing the reuse probabilities and target actor under a CMDP framework. A CSSCA actor handles non-convex stochastic objectives and long-term constraints, while a critic trained from mixed offline-online data reuses pre-collected experience to reduce online exploration. We proved almost-sure convergence to a KKT point, providing the theoretical guarantee essential for practical deployment. Simulations on delay-constrained MU-MIMO power control and CRB-constrained MIMO-ISAC beamforming demonstrate that Fused-CPRO improves empirical performance and converges substantially faster than representative baselines. The results validate that fusing heterogeneous knowledge and reusing offline data effectively reduces the number of online interactions required to reach a feasible and high-performing policy, while the CSSCA actor reliably enforces long-term constraints throughout learning.

\appendices
\section{Technical Bounds about DNN}
\label{sec:Technical-Bounds-DNN}
The appendices collect the auxiliary technical results and full proof details referenced in Section~\ref{sec:CONVERGENCE-ANALYSIS}. Appendix~\ref{sec:Technical-Bounds-DNN} states the DNN local-linearization bounds, Appendix~\ref{app:sa-tracking} states the projected stochastic-approximation tracking lemma, and Appendices~\ref{sec:proof-of-critic}--\ref{sec:proof-of-surrogate} give the complete proofs of the critic-tracking and surrogate-consistency results.

The convergence analysis uses the following DNN local-linearization lemma \cite{Cao2019b,DQlearning, Allen2019b}.
\begin{lemma}[Technical Bounds about DNN]~
\label{lemma:technical-bounds-dnn}
Let $d$ be the input dimension, and let $\{a_{i}\}_{i=0,1,\ldots}$ denote universal constants that are independent of the proposed algorithm parameters. For any $\sigma\in(0,1)$, if the radius of the constraint set satisfies
\begin{align}
a_{1}d^{3/2}L^{-1}m_{Q}^{-3/4}\leq R_{\boldsymbol{\omega}}\leq a_{2}L^{-6}(\log m_{Q})^{-3},
\end{align}
and the width of the DNN satisfies
\begin{align}
m_{Q}\geq a_{3}\max\left\{ dL^{2}\log\left(\frac{m_{Q}}{\sigma}\right),\,R_{\boldsymbol{\omega}}^{-4/3}L^{-8/3}\log\left(\frac{m_{Q}}{R_{\boldsymbol{\omega}}\sigma}\right)\right\},
\end{align}
it holds that the bias between $f(\boldsymbol{\omega})$ and its local linearization $\hat{f}(\boldsymbol{\omega})$ satisfies
\begin{align}
\left|f(\boldsymbol{\omega};\boldsymbol{s},\boldsymbol{a})-\hat{f}(\boldsymbol{\omega};\boldsymbol{s},\boldsymbol{a})\right|
\leq a_{4}R_{\boldsymbol{\omega}}^{4/3}L^{4}\sqrt{m_{Q}\log m_{Q}}+a_{5}R_{\boldsymbol{\omega}}^{2}L^{5},
\end{align}
with probability at least $1-\sigma-\exp\{-a_6 m_QR^{2/3}_\omega L\}$.

The gradient of the DNN is also bounded as
\begin{align}
\left\Vert \nabla_{\boldsymbol{\omega}}f(\boldsymbol{\omega};\boldsymbol{s},\boldsymbol{a})\right\Vert _{2}\leq a_{7}m_{Q}^{1/2},
\end{align}
with probability at least $1-L^{2}\exp\{-a_{8}m_{Q}R_{\boldsymbol{\omega}}^{2/3}L\}$. In the theoretical analysis, all critics are initialized with the same distribution $\mathcal{N}(0,1/m^2_Q)$.

\end{lemma}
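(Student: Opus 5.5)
The plan is to import the statement almost verbatim from the over-parameterized network analysis of \cite{Allen2019b,Cao2019b,DQlearning} rather than re-derive it, and to check only that the critic architecture and the constraint set $\Omega_i=\mathbb{B}(\boldsymbol{\omega}_0^i,R_{\boldsymbol{\omega}})$ used by Fused-CPRO fit those hypotheses. Concretely, I would model each critic $f(\boldsymbol{\omega};\boldsymbol{s},\boldsymbol{a})$ as an $L$-layer ReLU network of width $m_Q$ acting on the normalized input $x=(\boldsymbol{s},\boldsymbol{a})\in\mathbb{R}^d$. Compactness of $\mathcal{S}\times\mathcal{A}$ (Assumption~\ref{assumption:problem_structure}) lets me rescale so that $\|x\|_2=1$ up to a constant. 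I would also state explicitly that the initialization is $\mathcal{N}(0,1/m_Q^2)$, matching the scaling used in those works. The approach has three steps.

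\textbf{Step 1: perturbation of the hidden activations.} For every $\boldsymbol{\omega}\in\Omega_i$, I would use the forward-perturbation lemma of \cite{Allen2019b}. Under the lower bound $m_Q\ge a_3 dL^2\log(m_Q/\sigma)$ and a union bound over an $\epsilon$-net of the compact input set (this produces the $d$ and $\log(1/\sigma)$ dependence), it gives that with probability at least $1-\sigma$ the layer outputs at initialization have norm $\Theta(1)$. It also gives two consequences of moving the weights by at most $R_{\boldsymbol{\omega}}$:
\begin{itemize}
\item the layer outputs change by $\mathcal{O}(R_{\boldsymbol{\omega}}L^{5/2}\sqrt{\log m_Q})$;
\item the number of ReLU sign flips per layer is $\mathcal{O}(m_Q R_{\boldsymbol{\omega}}^{2/3}L)$.
\end{itemize}
The second bound is where the failure probability $\exp\{-a_6m_QR_{\boldsymbol{\omega}}^{2/3}L\}$ comes from. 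The window $a_1d^{3/2}L^{-1}m_Q^{-3/4}\le R_{\boldsymbol{\omega}}\le a_2L^{-6}(\log m_Q)^{-3}$ is exactly the range in which that concentration argument applies.

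\textbf{Step 2: gradient bound.} I would write $\nabla_{\boldsymbol{\omega}}f$ layerwise as a product of backward matrices times the forward activations. The backward products from \cite{Allen2019b} have operator norm $\mathcal{O}(\sqrt{m_Q})$ relative to the output layer, both at initialization and within the ball. Combining this with Step 1 gives $\|\nabla_{\boldsymbol{\omega}}f(\boldsymbol{\omega};\boldsymbol{s},\boldsymbol{a})\|_2\le a_7m_Q^{1/2}$, with the stated failure probability after a union bound over the $L$ layers (hence the factor $L^2$).

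\textbf{Step 3: linearization error.} I would write $f(\boldsymbol{\omega})-\hat f(\boldsymbol{\omega})=\int_0^1\langle\nabla f(\boldsymbol{\omega}_0+s(\boldsymbol{\omega}-\boldsymbol{\omega}_0))-\nabla f(\boldsymbol{\omega}_0),\,\boldsymbol{\omega}-\boldsymbol{\omega}_0\rangle ds$. This reduces the bound to controlling how much the gradient moves within the ball. That gradient change splits into two parts:
\begin{itemize}
\item a sign-flip part, handled by the sparse-perturbation lemma. The flipped coordinates number $\mathcal{O}(m_QR_{\boldsymbol{\omega}}^{2/3}L)$, each contributes $\mathcal{O}(1/\sqrt{m_Q})$ through the backward product, and multiplying by $\|\boldsymbol{\omega}-\boldsymbol{\omega}_0\|_2\le R_{\boldsymbol{\omega}}$ yields the $R_{\boldsymbol{\omega}}^{4/3}L^4\sqrt{m_Q\log m_Q}$ term;
\item an activation-drift part, which is smooth in $\boldsymbol{\omega}$ and is second order in $R_{\boldsymbol{\omega}}$, yielding $R_{\boldsymbol{\omega}}^2L^5$.
\end{itemize}
This is the semi-smoothness argument of \cite[Thm.~4]{Allen2019b} as specialized in \cite{Cao2019b,DQlearning}.

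The main obstacle is Step 3, specifically bounding the sign-flip count uniformly over all $(\boldsymbol{s},\boldsymbol{a})$ and all $\boldsymbol{\omega}$ in the ball at once, since the critics are evaluated at data-dependent points. I would first try to reuse the net-plus-Lipschitz argument of \cite{Cao2019b} directly. If the constants fail to transfer, I would fall back to stating the bound pointwise for each of the finitely many sampled inputs per iteration and absorbing the extra union bound into $\sigma$.
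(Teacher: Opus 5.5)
Your plan matches the paper's treatment: the paper gives no derivation of this lemma and simply imports it from \cite{Cao2019b,DQlearning,Allen2019b}, so citing those results after checking the architecture and the ball constraint $\Omega_i$ is all that is needed. One caution on your explanatory sketch, which the plan does not rely on: the $R_{\boldsymbol{\omega}}^{4/3}\sqrt{m_Q\log m_Q}$ term comes from the sparse-perturbation bound scaling like $\sqrt{s\log m_Q}$ with $s=\mathcal{O}(m_QR_{\boldsymbol{\omega}}^{2/3}L)$ sign flips, whereas your accounting of $s$ flips each contributing $\mathcal{O}(1/\sqrt{m_Q})$ would give $R_{\boldsymbol{\omega}}^{5/3}$; also, the cited works initialize with entrywise variance of order $1/m_Q$, not $1/m_Q^{2}$, so you should not claim the stated initialization matches theirs without a compensating rescaling of the layers.
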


\section{Projected Stochastic Approximation Tracking Lemma}
\label{app:sa-tracking}
The proof also uses the following projected stochastic-approximation tracking lemma from \cite{Lemma3}.
\begin{lemma}[A projected stochastic approximation tracking lemma]
\label{lemma:appC_SA_tracking}
Let $\{\mathcal{F}_t\}_{t\ge 0}$ be an increasing sequence of $\sigma$-fields. Let $\{\boldsymbol{z}^{t}\}$ and $\{\boldsymbol{w}^{t}\}$ be $\mathcal{F}_{t}$-measurable random vectors satisfying, for $t\ge 1$,
\begin{equation}
\boldsymbol{w}^{t}=\Pi_{\mathcal{W}}\!\Big(\boldsymbol{w}^{t-1}+\alpha_t\big(\boldsymbol{\varrho}^{t}-\boldsymbol{w}^{t-1}\big)\Big),
\label{eq:appC_SA_recursion}
\end{equation}
where $\Pi_{\mathcal{W}}$ is the projection onto a convex and closed set $\mathcal{W}$.
Assume that the following conditions hold:
\begin{enumerate}
\item all accumulation points of the target sequence $\{\boldsymbol{z}^{t}\}$ belong to $\mathcal{W}$ w.p.1;
\item there exists a constant $C<\infty$ such that $\mathbb{E}[\|\boldsymbol{\varrho}^{t}\|_2^2\mid\mathcal{F}_{t-1}]\le C$ a.s. for all $t\ge 1$;
\item there exists a bias sequence $\{\boldsymbol{o}^{t}\}$ such that, for $t\ge 1$,
\begin{equation}
\begin{aligned}
\mathbb{E}[\boldsymbol{\varrho}^{t}\mid\mathcal{F}_{t-1}] = \boldsymbol{z}^{t-1}+\boldsymbol{o}^{t-1},\\
\sum_{t=1}^{\infty}\mathbb{E}\!\left[(\alpha_t)^2+\alpha_t\,\|\boldsymbol{o}^{t-1}\|_2\right]<\infty;
\end{aligned}
\label{eq:appC_cond_mean_a}
\end{equation}
\item $\alpha_t>0$, $\sum_{t=1}^{\infty}\alpha_t=\infty$, and $\sum_{t=1}^{\infty}\alpha_t^2<\infty$;
\item the target drifts slowly:
\begin{equation}
\frac{\|\boldsymbol{z}^{t}-\boldsymbol{z}^{t-1}\|_2}{\alpha_t}\to 0,\qquad \text{w.p.1.}
\label{eq:appC_target_drift}
\end{equation}
\end{enumerate}
Then $\|\boldsymbol{w}^{t}-\boldsymbol{z}^{t}\|_2\to 0$ almost surely.
\end{lemma}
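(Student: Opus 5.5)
The plan is a Lyapunov (squared-distance) argument. The iterate will be compared with the projected target $\boldsymbol{u}^{t}\triangleq\Pi_{\mathcal{W}}(\boldsymbol{z}^{t})$ rather than with $\boldsymbol{z}^{t}$ itself.

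\textbf{Step 1 (from $\boldsymbol{z}^t$ to $\boldsymbol{u}^t$).} Condition 1 says every accumulation point of $\{\boldsymbol{z}^t\}$ lies in $\mathcal{W}$. Provided $\{\boldsymbol{z}^t\}$ is bounded, which I will assume as in our application since $\nabla J_i$ and $J_i$ are bounded, this gives $\delta_t\triangleq\|\boldsymbol{z}^{t}-\boldsymbol{u}^{t}\|_2\to0$ a.s. It therefore suffices to show $d_t\triangleq\|\boldsymbol{w}^t-\boldsymbol{u}^t\|_2\to0$. Nonexpansiveness of $\Pi_{\mathcal{W}}$ gives $\|\boldsymbol{u}^t-\boldsymbol{u}^{t-1}\|_2\le\|\boldsymbol{z}^t-\boldsymbol{z}^{t-1}\|_2$. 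Condition 5 then gives $\|\boldsymbol{u}^t-\boldsymbol{u}^{t-1}\|_2=\alpha_t\nu_t$ with $\nu_t\to0$ a.s.

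\textbf{Step 2 (one-step expansion).} Write $\boldsymbol{\varrho}^t=\boldsymbol{z}^{t-1}+\boldsymbol{o}^{t-1}+\boldsymbol{m}^t$, where $\boldsymbol{m}^t$ is a martingale difference with $\mathbb{E}[\|\boldsymbol{m}^t\|_2^2\mid\mathcal{F}_{t-1}]\le C'$ by condition 2. Since $\boldsymbol{u}^t\in\mathcal{W}$, nonexpansiveness gives $d_t\le\|\boldsymbol{x}^t\|_2$, where
\[
\boldsymbol{x}^t=(1-\alpha_t)(\boldsymbol{w}^{t-1}-\boldsymbol{u}^{t-1})+\alpha_t\big(\boldsymbol{z}^{t-1}-\boldsymbol{u}^{t-1}+\boldsymbol{o}^{t-1}+\boldsymbol{m}^t\big)+(\boldsymbol{u}^{t-1}-\boldsymbol{u}^t).
\]
Square this and take $\mathbb{E}[\cdot\mid\mathcal{F}_{t-1}]$; the cross term involving $\boldsymbol{m}^t$ vanishes. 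For the term $2(1-\alpha_t)\alpha_t d_{t-1}(\delta_{t-1}+\nu_t)$, use Young's inequality $2ab\le a^2/2+2b^2$. For the bias term $2\alpha_t d_{t-1}\|\boldsymbol{o}^{t-1}\|_2$, use $d_{t-1}\le 1+d_{t-1}^2$. The target is a recursion of the form
\[
\mathbb{E}[d_t^2\mid\mathcal{F}_{t-1}]\le\Big(1-\tfrac{\alpha_t}{2}+c\,\alpha_t\|\boldsymbol{o}^{t-1}\|_2+c\,\alpha_t^2\Big)d_{t-1}^2+c\,\alpha_t\varepsilon_{t-1}+c\big(\alpha_t^2+\alpha_t\|\boldsymbol{o}^{t-1}\|_2\big),
\]
where $\varepsilon_{t-1}\triangleq(\delta_{t-1}+\nu_t)^2\to0$ a.s.

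\textbf{Step 3 (conclusion, main obstacle).} The difficulty is that $\alpha_t\varepsilon_{t-1}$ is not summable, so Robbins--Siegmund does not apply directly. By condition 3 and the monotone convergence theorem, $\sum_t\big(\alpha_t^2+\alpha_t\|\boldsymbol{o}^{t-1}\|_2\big)<\infty$ a.s. Robbins--Siegmund with the nonnegative part of $\alpha_t\varepsilon_{t-1}$ handled as below gives the desired limit.
\begin{itemize}
\item Fix $\epsilon>0$ and a (random) time $t_0$ after which $\varepsilon_{t}<\epsilon/c$. Apply Robbins--Siegmund to $V_t\triangleq(d_t^2-4\epsilon)^+$, or equivalently to $d_t^2$ with the negative drift $-\tfrac{\alpha_t}{2}d_{t-1}^2+\alpha_t\epsilon$.
\item Whenever $d_{t-1}^2\ge4\epsilon$ the drift is at most $-\alpha_t\epsilon$, and the remaining terms are summable. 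Robbins--Siegmund (made rigorous by localizing on events $\{t_0\le T\}$ and letting $T\to\infty$) gives that $d_t^2$ converges a.s. and that $\sum_t\alpha_t(d_{t-1}^2-4\epsilon)^+<\infty$.
\item Because $\sum_t\alpha_t=\infty$ (condition 4), the second fact forces $\liminf_t d_t^2\le4\epsilon$. Combined with convergence of $d_t^2$, this gives $\lim_t d_t^2\le4\epsilon$.
\item Since $\epsilon>0$ is arbitrary, $d_t\to0$. With Step 1, $\|\boldsymbol{w}^t-\boldsymbol{z}^t\|_2\le d_t+\delta_t\to0$ a.s.
\end{itemize}
I expect the careful bookkeeping of the random time $t_0$ inside the conditional expectations, together with the boundedness of $\{\boldsymbol{z}^t\}$ needed in Step 1, to be the delicate parts.
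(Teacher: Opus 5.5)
The paper gives no proof of this lemma; it imports it from Ruszczy\'nski \cite{Lemma3}. Your argument therefore has to stand on its own, and it has a genuine gap exactly where you locate the main obstacle.

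\textbf{The recursion in Step~2 is too weak.} Step~2 compresses everything into the conditional-mean inequality $\mathbb{E}[d_t^2\mid\mathcal{F}_{t-1}]\le(1-\tfrac{\alpha_t}{2}+\cdots)d_{t-1}^2+c\,\alpha_t\varepsilon_{t-1}+\cdots$, which contains the non-summable term $\alpha_t\varepsilon_{t-1}$. Such an inequality alone does not force convergence to zero. Take $\alpha_t=1/(t+1)$ and $\varepsilon_{t-1}=1/\log(t+2)$. Independently at each $t$, set $V_t=1$ with probability $p_t=\alpha_t\varepsilon_{t-1}$, and $V_t=(1-\tfrac{\alpha_t}{2})V_{t-1}$ otherwise. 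Then $\mathbb{E}[V_t\mid\mathcal{F}_{t-1}]\le(1-\tfrac{\alpha_t}{2})V_{t-1}+\alpha_t\varepsilon_{t-1}$, yet $\sum_tp_t=\infty$, so $V_t=1$ infinitely often a.s.

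\textbf{The Step~3 device does not repair this.} Since $x\mapsto(x-4\epsilon)^+$ is convex, the bound on $\mathbb{E}[d_t^2\mid\mathcal{F}_{t-1}]$ gives no bound on $\mathbb{E}[(d_t^2-4\epsilon)^+\mid\mathcal{F}_{t-1}]$. Suppose $d_{t-1}^2$ sits just below $4\epsilon$, so $V_{t-1}=0$. The noise term $2\alpha_t(1-\alpha_t)\langle\boldsymbol{w}^{t-1}-\boldsymbol{u}^{t-1},\boldsymbol{m}^t\rangle$ can be as large as $4\alpha_t\sqrt{\epsilon}\,\|\boldsymbol{m}^t\|_2$, while the deterministic decrease is only $O(\alpha_t\epsilon)$. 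Hence $\mathbb{E}[V_t\mid\mathcal{F}_{t-1}]$ can be of order $\alpha_t\sqrt{\epsilon}$, which is not summable. Robbins--Siegmund on $d_t^2$ with the extra $+\alpha_t\epsilon$ fails for the same reason. ``Localizing on $\{t_0\le T\}$'' is also not legitimate: $t_0$ depends on the whole future path, so it is not a stopping time, and restricting to that event destroys the conditional inequalities.

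\textbf{Measurability slip in Step~2.} The lemma only makes $\boldsymbol{z}^t$ $\mathcal{F}_t$-measurable, so $\boldsymbol{u}^t$ and $\nu_t$ are not known at time $t-1$. The cross term $2\alpha_t\langle\boldsymbol{m}^t,\boldsymbol{u}^{t-1}-\boldsymbol{u}^t\rangle$ does not vanish under $\mathbb{E}[\cdot\mid\mathcal{F}_{t-1}]$. Conditioning also turns $\nu_t$ into $\mathbb{E}[\nu_t\mid\mathcal{F}_{t-1}]$, which need not tend to zero even though $\nu_t\to0$ a.s. In the paper's applications $\boldsymbol{\theta}_t$ is $\mathcal{F}_{t-1}$-measurable, so this is harmless there, but the lemma as stated does not assume it.

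\textbf{The missing idea.} Both problems are cured by keeping the noise pathwise and using condition~2 through the variance of the martingale part. Pathwise, one has $d_t^2\le(1-\tfrac{\alpha_t}{2}+A_t)d_{t-1}^2+c\,\alpha_t\varepsilon'_t+B_t+\xi_t$, where $\varepsilon'_t\to0$ and $\sum_tA_t,\sum_tB_t<\infty$ a.s. The term $2\alpha_t|\langle\boldsymbol{m}^t,\boldsymbol{u}^{t-1}-\boldsymbol{u}^t\rangle|\le\alpha_t^2(\|\boldsymbol{m}^t\|_2^2+\nu_t^2)$ goes into $B_t$. The remaining noise $\xi_t=2\alpha_t(1-\alpha_t)\langle\boldsymbol{w}^{t-1}-\boldsymbol{u}^{t-1},\boldsymbol{m}^t\rangle$ is a martingale difference with $\mathbb{E}[\xi_t^2\mid\mathcal{F}_{t-1}]\le4C\alpha_t^2d_{t-1}^2$. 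The argument then finishes in four steps.
\begin{enumerate}
\item An up-crossing argument on $\log(1+d_t^2)$, whose martingale increments $\xi_t/(1+d_{t-1}^2)$ have conditional variance $O(\alpha_t^2)$, gives $\sup_td_t<\infty$ a.s.
\item Consequently $\sum_t\xi_t$ converges a.s.
\item $\liminf_td_t=0$ follows from $\sum_t\alpha_t=\infty$.
\item Up-crossings of $[\sqrt{\epsilon},2\sqrt{\epsilon}]$ occur only finitely often: during them the drift is eventually negative, and everything else is a tail of a convergent series.
\end{enumerate}

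\textbf{Step~1 needs no extra assumption.} The boundedness of $\{\boldsymbol{z}^t\}$ you assume follows from the hypotheses. Conditions~2--3 and Jensen give $\|\boldsymbol{z}^{t-1}+\boldsymbol{o}^{t-1}\|_2\le\sqrt{C}$. Combined with $\sum_t\alpha_t\|\boldsymbol{o}^{t-1}\|_2<\infty$ a.s., $\sum_t\alpha_t=\infty$ and the slow drift of condition~5, this yields $\limsup_t\|\boldsymbol{z}^t\|_2\le\sqrt{C}$ a.s.
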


\section{\texorpdfstring{Proof of Lemma~\ref{lemma-convergence-rate-critic}}{Proof of the critic convergence lemma}}
\label{sec:proof-of-critic}
Fix an arbitrary $i\in\{0,1,\ldots,I\}$. To keep notation light, suppress the index $i$ on $\bar{\boldsymbol{\omega}}_{t}^{i}$, $\dot{\boldsymbol{\omega}}_{t}^{i}$, $\Omega_{i}$, and $\hat{Q}_{i}^{\pi_{\boldsymbol{\theta}_t}}$, while keeping the critic-error notation explicit. The critic error in \eqref{eq:critic_error} is
\begin{equation}
\epsilon_{t,i}^{\mathrm{cri}}
\triangleq\bigl|\,\mathbb{E}_{p_t}\big[f(\bar{\boldsymbol{\omega}}_{t};\boldsymbol{s},\boldsymbol{a})\,\big|\,\bar{\boldsymbol{\omega}}_{t-1}\big]-\hat Q^{\pi_{\boldsymbol{\theta}_t}}(\boldsymbol{s},\boldsymbol{a})\,\bigr|.
\end{equation}
With the local linearization \eqref{eq:local_linearization} and the triangle inequality,
\begin{align}
\epsilon_{t,i}^{\mathrm{cri}}
&\le \underbrace{\bigl|\,\mathbb{E}[f(\bar{\boldsymbol{\omega}}_{t})]-\mathbb{E}[\hat f(\bar{\boldsymbol{\omega}}_{t})]\,\bigr|}_{\mathrm{(bias\ 1)}}
+\bigl|\,\mathbb{E}[\hat f(\bar{\boldsymbol{\omega}}_{t})]-\hat Q^{\pi_{\boldsymbol{\theta}_t}}\,\bigr|.
\label{eq:appB_decomp1}
\end{align}
By the local linearization error bound provided in Lemma~\ref{lemma:technical-bounds-dnn}, $\mathrm{bias\ 1}=\mathcal{O}(\epsilon_{m_Q})$.

According to Assumption~\ref{assumption:target_Q}.1, there exists a point $\dot{\boldsymbol{\omega}}_{t}\in\Omega$ such that
$\hat f(\dot{\boldsymbol{\omega}}_{t})\equiv \hat Q^{\pi_{\boldsymbol{\theta}_t}}$.
Thus the second term in \eqref{eq:appB_decomp1} becomes
\begin{equation}
\bigl|\,\mathbb{E}[\hat f(\bar{\boldsymbol{\omega}}_{t})]-\hat Q^{\pi_{\boldsymbol{\theta}_t}}\,\bigr|
=\bigl|\,\mathbb{E}[\hat f(\bar{\boldsymbol{\omega}}_{t})]-\hat f(\dot{\boldsymbol{\omega}}_{t})\,\bigr|.
\label{eq:appB_decomp2}
\end{equation}
Using the linear identity induced by \eqref{eq:local_linearization},
$\hat f(\boldsymbol{\omega}_a)-\hat f(\boldsymbol{\omega}_b)=\langle\nabla_{\boldsymbol{\omega}}f(\boldsymbol{\omega}_0;\cdot),\,\boldsymbol{\omega}_a-\boldsymbol{\omega}_b\rangle$,
which yields
\begin{equation}
\bigl|\,\mathbb{E}[\hat f(\bar{\boldsymbol{\omega}}_{t})]-\hat f(\dot{\boldsymbol{\omega}}_{t})\,\bigr|
\le \bigl\|\nabla_{\boldsymbol{\omega}}f(\boldsymbol{\omega}_0)\bigr\|_2\cdot\bigl\|\mathbb{E}[\bar{\boldsymbol{\omega}}_{t}]-\dot{\boldsymbol{\omega}}_{t}\bigr\|_2.
\label{eq:appB_reduce_to_param}
\end{equation}
Lemma~\ref{lemma:technical-bounds-dnn} bounds $\|\nabla_{\boldsymbol{\omega}}f(\boldsymbol{\omega}_0)\|_2$, hence
\begin{equation}
\epsilon_{t,i}^{\mathrm{cri}}
\le \mathcal{O}(\epsilon_{m_Q})+\mathcal{O}(\sqrt{m_Q})\cdot\bigl\|\mathbb{E}[\bar{\boldsymbol{\omega}}_{t}]-\dot{\boldsymbol{\omega}}_{t}\bigr\|_2.
\label{eq:appB_core_goal}
\end{equation}
It suffices to bound $\|\mathbb{E}[\bar{\boldsymbol{\omega}}_{t}]-\dot{\boldsymbol{\omega}}_{t}\|_2$.

We fix $\kappa_6\in(0,1)$ and define $n_t\triangleq t-\lceil t^{\kappa_6}\rceil$ to construct the frozen-window reference critic trajectory. For a time step $t$, let $\{\tilde{\varepsilon}_{k}^{(t)}\}_{k=1}^{t}$ denote the surrogate observation sequence, where
\begin{equation}
\tilde{\varepsilon}_{k}^{(t)}=\bigl(\tilde{\boldsymbol{s}}_{k},\tilde{\boldsymbol{a}}_{k},\{C_{i}^{\text{'}}(\tilde{\boldsymbol{s}}_{k},\tilde{\boldsymbol{a}}_{k})\}_{i=0}^{I},\tilde{\boldsymbol{s}}_{k+1}\bigr).
\end{equation}
Specifically, the reference trajectory matches the actual critic recursion up to time $n_t$, and then evolves under the frozen policy $\pi_{\theta_{n_t}}$ over the window $\{n_t+1,\ldots,t\}$. For each $i$, $\boldsymbol{m}_{k}^{i}$ is the auxiliary critic parameter obtained by applying the projected TD recursion to the local linearization $\hat{f}$ along a surrogate observation sequence. Its moving-average counterpart $\bar{\boldsymbol{m}}_{k}^{i}$ mirrors the target-critic recursion of $\bar{\boldsymbol{\omega}}_{k}^{i}$.
\begin{equation}
\begin{aligned}
\boldsymbol{m}_{k}^{i}
&=\Pi_{\Omega_{i}}\!\left(\boldsymbol{m}_{k-1}^{i}-\eta_{k}\boldsymbol{M}_{k-1}^{i}\right),\\
\bar{\boldsymbol{m}}_{k}^{i}
&=(1-\gamma_{k})\bar{\boldsymbol{m}}_{k-1}^{i}+\gamma_{k}\boldsymbol{m}_{k}^{i},
\end{aligned}
\label{eq:appB_aux_update}
\end{equation}
where $\boldsymbol{M}_{k-1}^{i}$ is the surrogate TD-gradient induced by the surrogate trajectory $\{\tilde\varepsilon_{k}^{(t)}\}$:
\begin{align}
\boldsymbol{M}_{k-1}^{i} & =\Bigl(\hat{f}(\boldsymbol{m}_{k-1}^{i};\tilde{\boldsymbol{s}}_{k},\tilde{\boldsymbol{a}}_{k})-\bigl(C_{i}^{\text{'}}(\tilde{\boldsymbol{s}}_{k},\tilde{\boldsymbol{a}}_{k})-\hat{J}_{i}^{k-1}\nonumber \\
& +\hat{f}(\boldsymbol{m}_{k-1}^{i};\tilde{\boldsymbol{s}}_{k+1},\tilde{\boldsymbol{a}}_{k+1}')\bigr)\Bigr)\nabla_{\boldsymbol{\omega}}f(\boldsymbol{\omega}_{0}^{i};\tilde{\boldsymbol{s}}_{k},\tilde{\boldsymbol{a}}_{k}),
\end{align}
Suppress the superscript $i$ below. Adding and subtracting $\mathbb{E}[\bar{\boldsymbol{m}}_{t}]$ decomposes $\|\mathbb{E}[\bar{\boldsymbol{\omega}}_{t}]-\dot{\boldsymbol{\omega}}_{t}\|_2$ as
\begin{equation}
\bigl\|\mathbb{E}[\bar{\boldsymbol{\omega}}_{t}]-\dot{\boldsymbol{\omega}}_{t}\bigr\|_2^2
\le 2\underbrace{\bigl\|\mathbb{E}[\bar{\boldsymbol{m}}_{t}]-\dot{\boldsymbol{\omega}}_{t}\bigr\|_2^2}_{\mathrm{(bias\ 2)}}
+2\underbrace{\bigl\|\mathbb{E}[\bar{\boldsymbol{\omega}}_{t}]-\mathbb{E}[\bar{\boldsymbol{m}}_{t}]\bigr\|_2^2}_{\mathrm{(bias\ 3)}}.
\label{eq:appB_bias23_split}
\end{equation}
where the first term is the finite-time TD tracking error under the frozen policy, and the second term captures the policy-drift error. The next two subsections bound bias~2 and bias~3.

\subsection{Bounds about bias~2}
Unfolding the recursion $\bar{\boldsymbol{m}}_{k}=(1-\gamma_{k})\bar{\boldsymbol{m}}_{k-1}+\gamma_{k}\boldsymbol{m}_{k}$, there exist weights
$w_{t,t'}\triangleq \gamma_{t'}\prod_{j=t'+1}^{t}(1-\gamma_j)$ for $t'=0,1,\ldots,t$ such that
\begin{equation}
\bar{\boldsymbol{m}}_{t}=\sum_{t'=0}^{t} w_{t,t'}\,\boldsymbol{m}_{t'},\qquad \sum_{t'=0}^{t} w_{t,t'}=1.
\label{eq:appB_bar_m_weights}
\end{equation}
Jensen's inequality gives
\begin{equation}
\bigl\|\mathbb{E}[\bar{\boldsymbol{m}}_{t}]-\dot{\boldsymbol{\omega}}_{t}\bigr\|_2^2
\le \sum_{t'=0}^{t} w_{t,t'}\,\bigl\|\mathbb{E}[\boldsymbol{m}_{t'}]-\dot{\boldsymbol{\omega}}_{t}\bigr\|_2^2.
\label{eq:appB_jensen_bias2}
\end{equation}
Split the sum into early part $t'\le n_t$ and recent part $t'\in\{n_t+1,\ldots,t\}$. Since $\boldsymbol{m}_{t'}\in\Omega=\mathbb{B}(\boldsymbol{\omega}_0,R_{\boldsymbol{\omega}})$ and $\dot{\boldsymbol{\omega}}_{t}\in\Omega$,
$\|\mathbb{E}[\boldsymbol{m}_{t'}]-\dot{\boldsymbol{\omega}}_{t}\|_2\le 2R_{\boldsymbol{\omega}}$. Moreover, using the monotonicity of $\{\gamma_t\}$,
\begin{equation}
\sum_{t'=0}^{n_t} w_{t,t'}
\le \prod_{j=n_t+1}^{t}(1-\gamma_j)
\le (1-\gamma_t)^{t-n_t}
\le (1-\gamma_t)^{t^{\kappa_6}}.
\label{eq:appB_weight_early}
\end{equation}
Thus,
\begin{equation}
\sum_{t'=0}^{n_t} w_{t,t'}\,\bigl\|\mathbb{E}[\boldsymbol{m}_{t'}]-\dot{\boldsymbol{\omega}}_{t}\bigr\|_2^2
\le 4R_{\boldsymbol{\omega}}^{2}(1-\gamma_t)^{t^{\kappa_6}}
=\mathcal{O}\!\left(\frac{(1-\gamma_t)^{t^{\kappa_6}}}{m_Q\,\gamma_t}\right),
\label{eq:appB_bias2_early}
\end{equation}
where we used the standard NTK scaling $R_{\boldsymbol{\omega}}=\mathcal{O}(m_Q^{-1/2})$ in Lemma~\ref{lemma:technical-bounds-dnn} and the fact that $\lim_{t\to\infty}\gamma_t\to 0$.

Define the tracking error in the frozen window as
$\boldsymbol{e}_{k}\triangleq \mathbb{E}[\boldsymbol{m}_{k}]-\dot{\boldsymbol{\omega}}_{t}$ for $k\in\{n_t+1,\ldots,t\}$.
The error can be expanded using the non-expansive property of the projection operation:
\begin{equation}
\begin{aligned}
\|\boldsymbol{e}_{k+1}\|_2^2
&=\bigl\|\mathbb{E}[\boldsymbol{m}_{k+1}]-\dot{\boldsymbol{\omega}}_{t}\bigr\|_2^2\\
&\le \bigl\|\boldsymbol{e}_{k}-\eta_{k+1}\,\mathbb{E}[\boldsymbol{M}_{k}(\boldsymbol{m}_{k})]\bigr\|_2^2\\
&=\|\boldsymbol{e}_{k}\|_2^2-2\eta_{k+1}\big\langle \mathbb{E}[\boldsymbol{M}_{k}(\boldsymbol{m}_{k})],\,\boldsymbol{e}_{k}\big\rangle\\
&\quad+\eta_{k+1}^2\bigl\|\mathbb{E}[\boldsymbol{M}_{k}(\boldsymbol{m}_{k})]\bigr\|_2^2.
\end{aligned}
\label{eq:appB_recursion_raw}
\end{equation}
Decompose the inner product $\big\langle \mathbb{E}[\boldsymbol{M}_{k}(\boldsymbol{m}_{k})],\,\boldsymbol{e}_{k}\big\rangle$ as
\begin{equation}
\begin{aligned}
\big\langle \mathbb{E}[\boldsymbol{M}_{k}(\boldsymbol{m}_{k})],\,\boldsymbol{e}_{k}\big\rangle
&=\big\langle \mathbb{E}[\boldsymbol{M}_{k}(\boldsymbol{m}_{k})-\boldsymbol{M}_{k}(\dot{\boldsymbol{\omega}}_{t})],\,\boldsymbol{e}_{k}\big\rangle\\
&\quad+\big\langle \mathbb{E}[\boldsymbol{M}_{k}(\dot{\boldsymbol{\omega}}_{t})],\,\boldsymbol{e}_{k}\big\rangle.
\end{aligned}
\label{eq:appB_add_subtract}
\end{equation}
The first term is controlled by Assumption~\ref{assumption:target_Q}.2:
there exists $\varsigma>0$ such that
\begin{equation}
\big\langle \mathbb{E}[\boldsymbol{M}_{k}(\boldsymbol{m}_{k})-\boldsymbol{M}_{k}(\dot{\boldsymbol{\omega}}_{t})],\,\boldsymbol{e}_{k}\big\rangle
\ge \frac{\varsigma}{2}\,\|\boldsymbol{e}_{k}\|_2^2.
\label{eq:appB_strong_mono}
\end{equation}
The last term in \eqref{eq:appB_add_subtract} is more challenging. In SLDAC, the fixed point satisfies
$\mathbb{E}[\boldsymbol{M}_{k}(\dot{\boldsymbol{\omega}}_{t})]=\boldsymbol{0}$.
However, owing to the use of the mixed offline/online data, $\mathbb{E}[\boldsymbol{M}_{k}(\dot{\boldsymbol{\omega}}_{t})]$ is no longer exactly zero.
We decompose the surrogate mixed TD gradient as
\begin{equation}
\begin{aligned}
\boldsymbol{M}_{k}^{\mathrm{mix}}(\boldsymbol{\omega})
&\triangleq (1-\xi_{k+1})\,\boldsymbol{M}_{k}^{\mathrm{on}}(\boldsymbol{\omega})\\
&\quad+\xi_{k+1}\,\boldsymbol{M}_{k}^{\mathrm{off}}(\boldsymbol{\omega}),\\
\boldsymbol{M}_{k}(\cdot)
&\equiv \boldsymbol{M}_{k}^{\mathrm{mix}}(\cdot).
\end{aligned}
\label{eq:appB_mix_decomp}
\end{equation}
Since $\dot{\boldsymbol{\omega}}_{t}$ is the projection fixed point of the frozen online operator (for $k\in\{n_t+1,\ldots,t\}$), we have
$\mathbb{E}[\boldsymbol{M}_{k}^{\mathrm{on}}(\dot{\boldsymbol{\omega}}_{t})]=\boldsymbol{0}$. Hence,
\begin{equation}
\begin{aligned}
\mathbb{E}[\boldsymbol{M}_{k}(\dot{\boldsymbol{\omega}}_{t})]
&=(1-\xi_{k+1})\underbrace{\mathbb{E}[\boldsymbol{M}_{k}^{\mathrm{on}}(\dot{\boldsymbol{\omega}}_{t})]}_{=\boldsymbol{0}}
+\xi_{k+1}\,\mathbb{E}[\boldsymbol{M}_{k}^{\mathrm{off}}(\dot{\boldsymbol{\omega}}_{t})]\\
&=\xi_{k+1}\,\boldsymbol{b}_{\mathrm{off}}(t),
\qquad
\boldsymbol{b}_{\mathrm{off}}(t)\triangleq \mathbb{E}[\boldsymbol{M}_{k}^{\mathrm{off}}(\dot{\boldsymbol{\omega}}_{t})].
\end{aligned}
\label{eq:appB_offline_drift}
\end{equation}
By boundedness of the gradient in Lemma~\ref{lemma:technical-bounds-dnn}, there exists a constant $a_{11}>0$ such that
\begin{equation}
\|\boldsymbol{b}_{\mathrm{off}}(t)\|_2\le a_{11}m_{Q}^{1/2}.
\label{eq:appB_boff_bound}
\end{equation}
Applying Young's inequality to the drift inner product yields, for any $\varsigma>0$,
\begin{flalign}
&2\eta_{k+1}\big|\langle \mathbb{E}[\boldsymbol{M}_{k}(\dot{\boldsymbol{\omega}}_{t})],\,\boldsymbol{e}_{k}\rangle\big| &&\nonumber\\
&\le \eta_{k+1}\frac{\varsigma}{2}\,\|\boldsymbol{e}_{k}\|_2^2+\frac{2\eta_{k+1}}{\varsigma}\,\bigl\|\mathbb{E}[\boldsymbol{M}_{k}(\dot{\boldsymbol{\omega}}_{t})]\bigr\|_2^2 &&\nonumber\\
&=\eta_{k+1}\frac{\varsigma}{2}\,\|\boldsymbol{e}_{k}\|_2^2+\mathcal{O}\!\left(\eta_{k+1}\,m_Q\,\xi_{n_t}^{2}\right). &&\label{eq:appB_young_drift}
\end{flalign}

Using \eqref{eq:appB_add_subtract}--\eqref{eq:appB_young_drift} in \eqref{eq:appB_recursion_raw} gives
\begin{equation}
\begin{aligned}
\eta_{k+1}\|\boldsymbol{e}_{k}\|_2^2
&\le \mathcal{O}(\|\boldsymbol{e}_{k}\|_2^2-\|\boldsymbol{e}_{k+1}\|_2^2)+\mathcal{O}(m_Q\eta_{k+1}^2)\\
&\quad+\mathcal{O}(m_Q\eta_{k+1}\xi_{n_t}^{2}).
\end{aligned}
\label{eq:appB_telescope_form}
\end{equation}
Summing \eqref{eq:appB_telescope_form} for $k=n_t+1,\ldots,t$, using monotonicity of $\{\eta_k\}$ and $\{\xi_k\}$, and substituting the result into \eqref{eq:appB_jensen_bias2} gives
\begin{equation}
\begin{aligned}
\sum_{t'=n_t+1}^{t} w_{t,t'}\,\bigl\|\mathbb{E}[\boldsymbol{m}_{t'}]-\dot{\boldsymbol{\omega}}_{t}\bigr\|_2^2
&\le \frac{\gamma_{n_t}}{\eta_{n_t}}\cdot
\mathcal{O}\Bigl(
\frac{1}{m_Q}
+m_Q\,t^{\kappa_6}\eta_{n_t}^{2}\\
&\qquad
+m_Q\,t^{\kappa_6}\eta_{n_t}\xi_{n_t}^{2}
\Bigr).
\end{aligned}
\label{eq:appB_bias2_recent}
\end{equation}
Combining the early-window and recent-window estimates gives the bound for bias~2:
\begin{equation}
\begin{aligned}
bias~2
&\le \mathcal{O}\Bigl(
\frac{(1-\gamma_t)^{t^{\kappa_6/2}}}{\sqrt{\gamma_t}}
+\sqrt{\tfrac{\gamma_{n_t}}{\eta_{n_t}}}\\
&\qquad
+m_Q\sqrt{\gamma_{n_t}\eta_{n_t}}\,t^{\kappa_6/2}
+m_Q\sqrt{\gamma_{n_t}}\,t^{\kappa_6/2}\,\xi_{n_t}
\Bigr).
\end{aligned}
\label{eq:appB_bias2_final}
\end{equation}

\subsection{Bounds about bias~3}
The difference $\mathbb{E}[\bar{\boldsymbol{\omega}}_{t}]-\mathbb{E}[\bar{\boldsymbol{m}}_{t}]$ can also be presented as the exponentially weighted averages:
\begin{equation}
\begin{aligned}
\bigl\|\mathbb{E}[\bar{\boldsymbol{\omega}}_{t}]-\mathbb{E}[\bar{\boldsymbol{m}}_{t}]\bigr\|_2
&\le \mathcal{O}\!\left(\frac{\gamma_{n_t}}{\gamma_t}\right)\cdot e_{n_t}^{\mathrm{b}},\\
e_{n_t}^{\mathrm{b}}
&\triangleq \max_{k\in\{n_t+1,\ldots,t\}}\bigl\|\mathbb{E}[\boldsymbol{\omega}_{k}]-\mathbb{E}[\boldsymbol{m}_{k}]\bigr\|_2.
\end{aligned}
\label{eq:appB_bias3_reduce}
\end{equation}
Using the non-expansiveness of projection operation and the recursions formulation in \eqref{eq:TD-learning} and \eqref{eq:appB_aux_update}, it can be expanded as:
\begin{equation}
\begin{aligned}
\bigl\|\mathbb{E}[\boldsymbol{\omega}_{k}]-\mathbb{E}[\boldsymbol{m}_{k}]\bigr\|_2
&\le \sum_{j=n_t+1}^{k}\eta_j\cdot
\Bigl\|\mathbb{E}[\boldsymbol{\Delta}(\boldsymbol{\omega}_{j-1})] \\
&\qquad\qquad
-\mathbb{E}[\boldsymbol{M}^{\mathrm{mix}}_{j-1}(\boldsymbol{m}_{j-1})]\Bigr\|_2.
\end{aligned}
\label{eq:appB_bias3_sumgrad}
\end{equation}
Define the TD-gradient integrands
\begin{equation}
\begin{aligned}
H_f(\boldsymbol{\omega};\varepsilon)
&\triangleq \delta(\boldsymbol{\omega};\varepsilon)\,
\nabla_{\boldsymbol{\omega}}f(\boldsymbol{\omega};\boldsymbol{s},\boldsymbol{a}),\\
H_{\hat f}(\boldsymbol{m};\varepsilon)
&\triangleq \hat\delta(\boldsymbol{m};\varepsilon)\,
\nabla_{\boldsymbol{\omega}}\hat f(\boldsymbol{m};\boldsymbol{s},\boldsymbol{a}).
\end{aligned}
\end{equation}

where $\varepsilon=(\boldsymbol{s},\boldsymbol{a},\boldsymbol{s}',\boldsymbol{a}')$, and $\boldsymbol{a}'\sim\pi_{\boldsymbol{\theta}_t}(\cdot\mid\boldsymbol{s}')$.
Let $\mu_j^{\mathrm{on}}$ denote distribution of online data generated under $\pi_{\boldsymbol{\theta}_j}$, while $\tilde\mu_t^{\mathrm{on}}$ denote online data's distribution under the frozen policy $\pi_{\boldsymbol{\theta}_t}$, and let $\nu^{\mathrm{off}}$ denote the offline data distribution. Then \eqref{eq:appB_bias3_sumgrad} can be decomposed as:
\begin{equation}
\begin{aligned}
&\bigl\|\mathbb{E}[\boldsymbol{\Delta}_{j-1}]-\mathbb{E}[\boldsymbol{M}^{\mathrm{mix}}_{j-1}(\boldsymbol{m}_{j-1})]\bigr\|_2
\\
&\le (1-\xi_j)\bigl\|\mathbb{E}_{\mu_j^{\mathrm{on}}}\!\big[H_f(\boldsymbol{\omega}_{j-1};\varepsilon)\big]
-\mathbb{E}_{\tilde\mu_t^{\mathrm{on}}}\!\big[H_f(\boldsymbol{\omega}_{j-1};\varepsilon)\big]\bigr\|_2
\\
&\quad+\bigl\|\mathbb{E}_{\tilde\mu_t^{\mathrm{on}}}\!\big[H_f(\boldsymbol{\omega}_{j-1};\varepsilon)-H_{\hat f}(\boldsymbol{m}_{j-1};\varepsilon)\big]\bigr\|_2
\\
&\quad+\xi_j\bigl\|\mathbb{E}_{\nu^{\mathrm{off}}}\!\big[H_f(\boldsymbol{\omega}_{j-1};\varepsilon)-H_{\hat f}(\boldsymbol{m}_{j-1};\varepsilon)\big]\bigr\|_2.
\label{eq:appB_grad_decomp}
\end{aligned}
\end{equation}

We further obtain
\begin{equation}
\begin{aligned}
&\bigl\|\mathbb{E}[\boldsymbol{\Delta}_{j-1}]
-\mathbb{E}[\boldsymbol{M}^{\mathrm{mix}}_{j-1}(\boldsymbol{m}_{j-1})]\bigr\|_2\\
&\le m_Q^{1/2}\cdot\mathcal{O}\!\Bigl(
\|\mu_j^{\mathrm{on}}-\tilde\mu_t^{\mathrm{on}}\|_{\mathrm{TV}}
+\|\boldsymbol{\theta}_j-\boldsymbol{\theta}_t\|_2
\Bigr)\\
&\quad+\mathcal{O}(\epsilon_{m_Q}).
\end{aligned}
\label{eq:appB_grad_tv}
\end{equation}

Following the ergodicity assumption,
\begin{equation}
\|\mu_j^{\mathrm{on}}-\tilde\mu_t^{\mathrm{on}}\|_{\mathrm{TV}}
\le \lambda\chi^{\tau_j}
+\mathcal{O}\!\left(\sum_{k=j-\tau_j+1}^{t}\beta_k\right),
\label{eq:appB_tv_bound}
\end{equation}
where $\tau_j$ is the mixing-time parameter and $\chi^{\tau_j}=\mathcal{O}(1/t)$. Moreover, by Lipschitz continuity of $\pi_{\boldsymbol{\theta}}$,
\begin{equation}
\|\boldsymbol{\theta}_j-\boldsymbol{\theta}_t\|_2
\le \sum_{k=j+1}^{t}\beta_k
\le \mathcal{O}\!\bigl(\beta_{n_t}t^{\kappa_6}\bigr),
\label{eq:appB_theta_drift}
\end{equation}
where the last step uses $j\ge n_t+1=t-\Theta(t^{\kappa_6})$. Substituting \eqref{eq:appB_tv_bound}--\eqref{eq:appB_theta_drift} into \eqref{eq:appB_grad_tv}, and then into \eqref{eq:appB_bias3_sumgrad}, yields
\begin{equation}
\begin{aligned}
e_{n_t}^{\mathrm{b}}
&\le m_Q^{1/2}\eta_{n_t}\cdot\mathcal{O}\!\Bigl(
t^{\kappa_6-1}+\beta_{n_t}t^{2\kappa_6}
\Bigr)
+\mathcal{O}(\eta_{n_t}\epsilon_{m_Q}t^{\kappa_6}).
\end{aligned}
\label{eq:appB_ebound}
\end{equation}

Substitution of \eqref{eq:appB_ebound} into \eqref{eq:appB_bias3_reduce} yields
\begin{equation}
\begin{aligned}
Bias~3
&\le \mathcal{O}\Bigl(
m_Q\eta_{n_t}t^{\kappa_6-1}
+m_Q\eta_{n_t}\beta_{n_t}t^{2\kappa_6}
\Bigr)\\
&\quad+\mathcal{O}(\epsilon_{m_Q}).
\end{aligned}
\label{eq:appB_bias3_final}
\end{equation}

Substitution of \eqref{eq:appB_bias2_final} and \eqref{eq:appB_bias3_final} into \eqref{eq:appB_core_goal} yields the critic-error bound
\begin{equation}
\begin{aligned}
\epsilon_{t,i}^{\mathrm{cri}}
&\le \mathcal{O}\Bigl(
\epsilon_{m_Q}
+\frac{(1-\gamma_{t})^{t^{\kappa_{6}/2}}}{\sqrt{\gamma_{t}}}
+\sqrt{\tfrac{\gamma_{n_{t}}}{\eta_{n_{t}}}}\\
&\qquad
+m_Q\sqrt{\gamma_{n_{t}}\eta_{n_{t}}}\,t^{\kappa_{6}/2}
+m_Q\eta_{n_{t}}\,t^{\kappa_{6}-1}
+m_Q\eta_{n_{t}}\beta_{n_{t}}\,t^{2\kappa_{6}}\\
&\qquad
+m_Q\sqrt{\gamma_{n_t}}\,t^{\kappa_6/2}\,\xi_{n_t}
\Bigr),
\end{aligned}
\end{equation}
The resulting estimate is exactly the bound stated in Lemma~\ref{lemma-convergence-rate-critic}.

\section{\texorpdfstring{Proof of Lemma~\ref{lemma-asymptotic-consistency}}{Proof of the asymptotic consistency lemma}}
\label{sec:proof-of-asymptotic-consistency}
The proof applies Lemma~\ref{lemma:appC_SA_tracking} from Section~\ref{app:sa-tracking} to the recursions (\ref{eq:J-average}) and (\ref{eq:g-average}) by verifying its technical conditions.
\subsection{\texorpdfstring{Asymptotic consistency of $\hat{J}_i^t$}{Asymptotic consistency of Ji estimates}}
Rewrite the recursion \eqref{eq:J-average} as
\begin{equation}
\hat{J}_i^{t}=\hat{J}_i^{t-1}+\alpha_t\big(\tilde{J}_i^{t}-\hat{J}_i^{t-1}\big).
\label{eq:appC_J_recursion}
\end{equation}
$\tilde{J}_i^{t}$ is the mixture of online and offline estimates:
\begin{equation}
\tilde{J}_i^{t}=(1-\xi_t)\,\hat{\mathbb{E}}_{\mathcal{D}_0^{t}}\big[C_i^{\text{'}}(\boldsymbol{s},\boldsymbol{a})\big]
+\xi_t\,\hat{\mathbb{E}}_{\mathcal{D}_{\mathrm{off}}^{t}}\big[C_i^{\text{'}}(\boldsymbol{s},\boldsymbol{a})\big].
\end{equation}
Taking conditional expectation given $\mathcal{F}_{t-1}$ yields
\begin{equation}
\mathbb{E}[\tilde{J}_i^{t}\mid\mathcal{F}_{t-1}]=J_i(\boldsymbol{\theta}_{t-1})+o_{J}^{t-1},
\label{eq:appC_J_cond_mean}
\end{equation}
where
\begin{equation}
J_i(\boldsymbol{\theta})\triangleq \mathbb{E}_{\sigma_{\pi_{\boldsymbol{\theta}}}}\big[C_i^{\text{'}}(\boldsymbol{s},\boldsymbol{a})\big]
\label{eq:appC_J_def}
\end{equation}
is the stationary average cost under $\pi_{\boldsymbol{\theta}}$ and $o_{J}^{t-1}$ denotes the bias term to be bounded.

Let $\mu_k$ denote the conditional distribution of $(\boldsymbol{s}_k,\boldsymbol{a}_k)$ given $\mathcal{F}_{t-1}$ for $k\in\{t-T_t+1,\ldots,t\}$.
The online bias can be written as
\begin{equation}
\begin{aligned}
 |b_{J,\mathrm{on}}^{t-1}|
 &\triangleq
 |\mathbb{E}\Big[\hat{\mathbb{E}}_{\mathcal{D}_0^{t}}[C_i^{\text{'}}]\,\Big|\,\mathcal{F}_{t-1}\Big]
 - J_i(\boldsymbol{\theta}_{t-1})| \\
 &= \bigg|
 \frac{1}{T_t}\sum_{k=t-T_t+1}^{t}
 \Big(\mathbb{E}_{\mu_k}[C_i^{\text{'}}]-\mathbb{E}_{\sigma_{\pi_{\boldsymbol{\theta}_{t-1}}}}[C_i^{\text{'}}]\Big)
 \bigg| \\
 &\le
 \frac{1}{T_t}\sum_{k=t-T_t+1}^{t}
 \Big|
 \mathbb{E}_{\mu_k}[C_i^{\text{'}}]-\mathbb{E}_{\sigma_{\pi_{\boldsymbol{\theta}_{t-1}}}}[C_i^{\text{'}}]
 \Big| \\
 &\le
 \frac{2C_{\max}}{T_t}\sum_{k=t-T_t+1}^{t}
 \big\|\mu_k-\sigma_{\pi_{\boldsymbol{\theta}_{t-1}}}\big\|_{\mathrm{TV}}.
\end{aligned}
\label{eq:appC_bJon_tv}
\end{equation}

For each $k$ in the window, apply the triangle inequality
\begin{equation}
\big\|\mu_k-\sigma_{\pi_{\boldsymbol{\theta}_{t-1}}}\big\|_{\mathrm{TV}}
\le
\underbrace{\big\|\mu_k-\sigma_{\pi_{\boldsymbol{\theta}_{k}}}\big\|_{\mathrm{TV}}}_{\text{mixing term}}
+
\underbrace{\big\|\sigma_{\pi_{\boldsymbol{\theta}_{k}}}-\sigma_{\pi_{\boldsymbol{\theta}_{t-1}}}\big\|_{\mathrm{TV}}}_{\text{policy-drift term}}.
\label{eq:appC_tv_split}
\end{equation}
The mixing term is controlled by Assumption~\ref{assumption:problem_structure}, for some constants $\lambda>0$ and $\varrho\in(0,1)$, $\|\mu_k-\sigma_{\pi_{\boldsymbol{\theta}_k}}\|_{\mathrm{TV}}\le \lambda\varrho^{\tau_t}$ with a mixing horizon $\tau_t=\Theta(\log t)$. Since $T_t=\mathcal{O}(\log t)$, we can take $\tau_t\asymp T_t$, so that $\varrho^{\tau_t}=\mathcal{O}(t^{-2})$. For the policy-drift term, Assumption~\ref{assumption:problem_structure} implies Lipschitz continuity of the stationary distribution in total variation:
there exists $L_{\sigma}>0$ such that
$\|\sigma_{\pi_{\boldsymbol{\theta}}}-\sigma_{\pi_{\boldsymbol{\theta}'}}\|_{\mathrm{TV}}\le L_{\sigma}\|\boldsymbol{\theta}-\boldsymbol{\theta}'\|_2$. Moreover, using the compactness of $\boldsymbol{\Theta}$, there exists $D_{\Theta}<\infty$ such that
\begin{equation}
\|\boldsymbol{\theta}_{\ell+1}-\boldsymbol{\theta}_{\ell}\|_2\le D_{\Theta}\,\beta_{\ell},\qquad \forall \ell.
\label{eq:appC_theta_step}
\end{equation}
Hence for $k\in\{t-T_t+1,\ldots,t\}$,
\begin{equation}
\begin{aligned}
\|\boldsymbol{\theta}_{k}-\boldsymbol{\theta}_{t-1}\|_2
&\le \sum_{\ell=k}^{t-2}\|\boldsymbol{\theta}_{\ell+1}-\boldsymbol{\theta}_{\ell}\|_2\\
&\le D_{\Theta}\sum_{\ell=t-T_t}^{t-1}\beta_{\ell}\\
&\le D_{\Theta}T_t\,\beta_{t-T_t+1}.
\end{aligned}
\label{eq:appC_theta_drift_window}
\end{equation}
The online bias is therefore bounded by
\begin{equation}
|b_{J,\mathrm{on}}^{t-1}|
\le c_{J,1}\Big(\varrho^{\tau_t}+T_t\,\beta_{t-T_t+1}\Big)
\le c_{J,1}\Big(t^{-2}+T_t\,\beta_{t-T_t+1}\Big),
\label{eq:appC_bJon_bound}
\end{equation}
for some constant $c_{J,1}>0$.

For the offline part, with the boundedness of $C_i^{\text{'}}$,
\begin{equation}
\bigg|\mathbb{E}\Big[\hat{\mathbb{E}}_{\mathcal{D}_{\mathrm{off}}^{t}}[C_i^{\text{'}}]\,\Big|\,\mathcal{F}_{t-1}\Big]-J_i(\boldsymbol{\theta}_{t-1})\bigg|
\le 2C_{\max}
\triangleq c_{J,2}.
\label{eq:appC_bJoff_bound}
\end{equation}
Combining \eqref{eq:appC_bJon_bound} and \eqref{eq:appC_bJoff_bound} gives the total bias in \eqref{eq:appC_J_cond_mean}:
\begin{equation}
|o_J^{t-1}|
\le (1-\xi_t)|b_{J,\mathrm{on}}^{t-1}|+\xi_t\,c_{J,2}
\le c_{J,1}\Big(t^{-2}+T_t\,\beta_{t-T_t+1}\Big)+c_{J,2}\,\xi_t.
\label{eq:appC_oJ_bound}
\end{equation}
Equation~\eqref{eq:appC_J_cond_mean} together with \eqref{eq:appC_oJ_bound} verifies condition~(3) of Lemma~\ref{lemma:appC_SA_tracking} for the target sequence $z_{J,i}^{t}\triangleq J_i(\boldsymbol{\theta}_{t})$. Since we take $\mathcal{W}=\mathbb{R}$, condition~(1) is satisfied. Condition~(2) follows from the boundedness of $C_i^{\text{'}}(\boldsymbol{s},\boldsymbol{a})$ and condition~(4) follows Assumption~\ref{assumption:step_size}. For condition~(5), since $J_i(\boldsymbol{\theta})$ is Lipschitz on the compact set $\boldsymbol{\Theta}$, there exists $L_{J,i}>0$ such that
\begin{equation}
\frac{|J_i(\boldsymbol{\theta}_{t})-J_i(\boldsymbol{\theta}_{t-1})|}{\alpha_{t-1}}
\le L_{J,i}\frac{\|\boldsymbol{\theta}_{t}-\boldsymbol{\theta}_{t-1}\|_2}{\alpha_{t-1}}
=\mathcal{O}\!\left(\frac{\beta_{t-1}}{\alpha_{t-1}}\right)\to 0,
\label{eq:appC_J_target_drift}
\end{equation}
where \eqref{eq:appC_theta_step} and Assumption~\ref{assumption:step_size}.4 are used. Applying Lemma~\ref{lemma:appC_SA_tracking} to \eqref{eq:appC_J_recursion} with $w^{t}=\hat J_i^{t}$, $\varrho^{t}=\tilde J_i^{t}$, and $z_{J,i}^{t}=J_i(\boldsymbol{\theta}_{t})$ yields
\begin{equation}
|\hat J_i^{t}-J_i(\boldsymbol{\theta}_{t})|\to 0,\qquad \text{a.s.}
\label{eq:appC_J_consistency}
\end{equation}

\subsection{\texorpdfstring{Asymptotic consistency of $\hat{\boldsymbol{g}}_i^t$ up to $\epsilon_{m_Q}$}{Asymptotic consistency of gradient estimates}}

Define the gradient of the mixed policy  score function as $\boldsymbol{\phi}_{\boldsymbol{\theta}}(\boldsymbol{s},\boldsymbol{a})\triangleq\nabla_{\boldsymbol{\theta}}\log\pi_{\boldsymbol{\theta}}(\boldsymbol{a}\mid\boldsymbol{s})$.
Under Assumption~\ref{assumption:problem_structure} and the mixture-ratio terms are uniformly bounded over $\boldsymbol{\Theta}$, there exists a constant $B_{\pi}<\infty$ such that
\begin{equation}
\sup_{\boldsymbol{\theta}\in\boldsymbol{\Theta}}\ \sup_{(\boldsymbol{s},\boldsymbol{a})\in\mathcal{S}\times\mathcal{A}}
\big\|\boldsymbol{\phi}_{\boldsymbol{\theta}}(\boldsymbol{s},\boldsymbol{a})\big\|_2\le B_{\pi}.
\label{eq:appC_score_bound}
\end{equation}

Define the auxiliary gradient target
\begin{equation}
\nabla_{\boldsymbol{\theta}}\hat{J}_i(\boldsymbol{\theta}_t)
\triangleq
\mathbb{E}_{\sigma_{\pi_{\boldsymbol{\theta}_t}}}\Big[\hat{Q}_i^{\pi_{\boldsymbol{\theta}_t}}(\boldsymbol{s},\boldsymbol{a})\,\boldsymbol{\phi}_{\boldsymbol{\theta}_t}(\boldsymbol{s},\boldsymbol{a})\Big].
\label{eq:appC_aux_grad_def}
\end{equation}
$\hat{\boldsymbol{g}}_i^t$ tracks the moving target $\nabla_{\boldsymbol{\theta}}\hat{J}_i(\boldsymbol{\theta}_t)$ as
\begin{equation}
\hat{\boldsymbol{g}}_i^{t}=\hat{\boldsymbol{g}}_i^{t-1}+\alpha_t\big(\tilde{\boldsymbol{g}}_i^{t}-\hat{\boldsymbol{g}}_i^{t-1}\big).
\label{eq:appC_g_recursion}
\end{equation}

Taking the conditional expectation of \eqref{eq:gtlt} given $\mathcal{F}_{t-1}$ gives
\begin{equation}
\mathbb{E}[\tilde{\boldsymbol{g}}_i^{t}\mid\mathcal{F}_{t-1}]
=\nabla_{\boldsymbol{\theta}}\hat{J}_i(\boldsymbol{\theta}_{t-1})+\boldsymbol{o}_g^{t-1},
\label{eq:appC_g_cond_mean}
\end{equation}
where $\boldsymbol{o}_g^{t-1}$ is the bias term. Splitting it into online and offline components, and adding and subtracting the ideal target $\nabla_{\boldsymbol{\theta}}\hat{J}_i(\boldsymbol{\theta}_{t-1})$, gives
\begin{equation}
\begin{aligned}
\boldsymbol{o}_g^{t-1}
=
&(1-\xi_t)\Big(
\mathbb{E}[\tilde{\boldsymbol{g}}_{i,\mathrm{on}}^t\mid\mathcal{F}_{t-1}]
-\nabla_{\boldsymbol{\theta}}\hat{J}_i(\boldsymbol{\theta}_{t-1})
\Big) \\
&+\xi_t\Big(
\mathbb{E}[\tilde{\boldsymbol{g}}_{i,\mathrm{off}}^t\mid\mathcal{F}_{t-1}]
-\nabla_{\boldsymbol{\theta}}\hat{J}_i(\boldsymbol{\theta}_{t-1})
\Big).
\end{aligned}
\end{equation}
The online part is split into the critic approximation error and the distribution-drift error; the total-variation and critic-tracking bounds used above control these two terms. The offline part is bounded as in \eqref{eq:appC_bJoff_bound}, using the boundedness of the Q-function and the score function. Consequently,
\begin{equation}
\|\boldsymbol{o}_{g}^{t-1}\|_2
\le c_{g,1}\Big(t^{-2}+T_t\,\beta_{t-T_t+1}\Big)
+c_{g,2}\,\xi_t
+ c_{g,3}\,\bar{\epsilon}_{\mathrm{cri}}(t),
\label{eq:appC_og_bound}
\end{equation}
where $\bar{\epsilon}_{\mathrm{cri}}(t)\triangleq\max_{0\le j\le I}\epsilon_{t,j}^{\mathrm{cri}}$ is the aggregated critic tracking error and $c_{g,1},c_{g,2},c_{g,3}>0$ are constants independent of $t$ and $m_Q$.
Combining \eqref{eq:appC_og_bound} with Assumption~\ref{assumption:step_size} gives $\sum_t \alpha_t\|\boldsymbol{o}_g^{t-1}\|_2<\infty$, which verifies condition~(3) of Lemma~\ref{lemma:appC_SA_tracking}. With $\mathcal{W}=\mathbb{R}^{n_{\theta}}$, condition~(1) is immediate. Condition~(2) follows from the boundedness of the score function and the critic outputs, and condition~(4) follows from Assumption~\ref{assumption:step_size}. For condition~(5), $\nabla_{\boldsymbol{\theta}}\hat{J}_i(\boldsymbol{\theta})$ is Lipschitz over the compact set $\boldsymbol{\Theta}$ and \eqref{eq:appC_theta_step} gives $\|\boldsymbol{\theta}_{t}-\boldsymbol{\theta}_{t-1}\|_2=\mathcal{O}(\beta_{t-1})$. Hence $\|\nabla_{\boldsymbol{\theta}}\hat{J}_i(\boldsymbol{\theta}_{t})-\nabla_{\boldsymbol{\theta}}\hat{J}_i(\boldsymbol{\theta}_{t-1})\|_2/\alpha_{t-1}\to 0$ whenever $\beta_t/\alpha_t\to 0$. All conditions of Lemma~\ref{lemma:appC_SA_tracking} are therefore satisfied for the target sequence, and
\begin{equation}
\big\|\hat{\boldsymbol{g}}_i^{t}-\nabla_{\boldsymbol{\theta}}\hat{J}_i(\boldsymbol{\theta}_{t})\big\|_2\to 0,
\qquad \text{a.s.}
\label{eq:appC_g_track_aux}
\end{equation}

The discrepancy between the auxiliary policy-gradient target and the true policy gradient satisfies
\begin{align}
\big\|\nabla_{\boldsymbol{\theta}}\hat{J}_i(\boldsymbol{\theta}_t)-\nabla_{\boldsymbol{\theta}}J_i(\boldsymbol{\theta}_t)\big\|_2
&=\Big\|\mathbb{E}_{\sigma_{\pi_{\boldsymbol{\theta}_t}}}\Big[
\big(\hat{Q}_i^{\pi_{\boldsymbol{\theta}_t}}
-Q_i^{\pi_{\boldsymbol{\theta}_t}}\big)\,
\boldsymbol{\phi}_{\boldsymbol{\theta}_t}\Big]\Big\|_2 \nonumber\\
&\le B_{\pi}\,\mathbb{E}_{\sigma_{\pi_{\boldsymbol{\theta}_t}}}\Big[
\big|\hat{Q}_i^{\pi_{\boldsymbol{\theta}_t}}(\boldsymbol{s},\boldsymbol{a}) \notag\\
&\qquad\qquad
-Q_i^{\pi_{\boldsymbol{\theta}_t}}(\boldsymbol{s},\boldsymbol{a})\big|
\Big].
\label{eq:appC_grad_diff_reduce}
\end{align}
Under the local-linearization analysis of the critic in Section~\ref{sec:proof-of-critic} and Assumption~\ref{assumption:target_Q}, $\mathbb{E}_{\sigma_{\pi_{\boldsymbol{\theta}_t}}}\Big[\big|\hat{Q}_i^{\pi_{\boldsymbol{\theta}_t}}-Q_i^{\pi_{\boldsymbol{\theta}_t}}\big|\Big]$ admits the uniform bound:
\begin{equation}
\mathbb{E}_{\sigma_{\pi_{\boldsymbol{\theta}_t}}}\Big[\big|\hat{Q}_i^{\pi_{\boldsymbol{\theta}_t}}-Q_i^{\pi_{\boldsymbol{\theta}_t}}\big|\Big]
\le c_Q\Big(|\hat{J}_i^t-J_i(\boldsymbol{\theta}_t)|+\epsilon_{m_Q}\Big),
\label{eq:appC_Q_mismatch_bound}
\end{equation}
for some constant $c_Q>0$ independent of $t$ and $m_Q$.
Combining \eqref{eq:appC_grad_diff_reduce}--\eqref{eq:appC_Q_mismatch_bound} and using \eqref{eq:appC_J_consistency} yields
\begin{equation}
\limsup_{t\to\infty}\big\|\nabla_{\boldsymbol{\theta}}\hat{J}_i(\boldsymbol{\theta}_t)-\nabla_{\boldsymbol{\theta}}J_i(\boldsymbol{\theta}_t)\big\|_2
\le c_{\pi}\,\epsilon_{m_Q},
\label{eq:appC_aux_to_true_grad}
\end{equation}
where $c_{\pi}\triangleq B_{\pi}c_Q$.

Using the triangle inequality,
\begin{equation}
\begin{aligned}
\big\|\hat{\boldsymbol{g}}_i^{t}-\nabla_{\boldsymbol{\theta}}J_i(\boldsymbol{\theta}_t)\big\|_2
&\le \big\|\hat{\boldsymbol{g}}_i^{t}-\nabla_{\boldsymbol{\theta}}\hat{J}_i(\boldsymbol{\theta}_t)\big\|_2 \\
&\quad + \big\|\nabla_{\boldsymbol{\theta}}\hat{J}_i(\boldsymbol{\theta}_t)-\nabla_{\boldsymbol{\theta}}J_i(\boldsymbol{\theta}_t)\big\|_2.
\end{aligned}
\label{eq:asm2}
\end{equation}
When $t\to\infty$, the first term vanishes and the second term is bounded by $c_{\pi}\epsilon_{m_Q}$ by \eqref{eq:appC_aux_to_true_grad}.
Combining \eqref{eq:asm2} with \eqref{eq:appC_J_consistency}, we complete the proof of Lemma~\ref{lemma-asymptotic-consistency}.

\section{\texorpdfstring{Proof of Lemma~\ref{lemma-convergence-rate-surrogate}}{Proof of the surrogate convergence-rate lemma}}
\label{sec:proof-of-surrogate}
Since the analysis for $\epsilon_{J}(t)$ and $\epsilon_{g}(t)$ are similar, we only provide the more complicated derivation of the latter due to space limit.

Define the accumulated gradient-tracking error
\begin{equation}
\begin{aligned}
\boldsymbol{y}_k
&\triangleq \hat{\boldsymbol{g}}_i^k
-\nabla_{\boldsymbol{\theta}}J_i(\boldsymbol{\theta}_k), \\
\epsilon_{g,i}(t)
&\triangleq \frac{1}{t+1}\sum_{k=0}^{t}
\mathbb{E}\big[\|\boldsymbol{y}_k\|_2^2\big].
\end{aligned}
\label{eq:appD_epsgi_def}
\end{equation}
Let
\begin{equation}
 \boldsymbol{\Delta}_k\triangleq \nabla_{\boldsymbol{\theta}}J_i(\boldsymbol{\theta}_k)-\nabla_{\boldsymbol{\theta}}J_i(\boldsymbol{\theta}_{k+1})
\label{eq:appD_Deltag_def}
\end{equation}
and define the cross term
\begin{equation}
 \Xi_{k+1}\triangleq
 \big\langle
 \boldsymbol{y}_k,
 \tilde{\boldsymbol{g}}_i^{k+1}-\nabla_{\boldsymbol{\theta}}J_i(\boldsymbol{\theta}_k)
 \big\rangle.
\label{eq:appD_Xi_def}
\end{equation}
Equation~\eqref{eq:g-average} gives the recursion
\begin{align}
 \boldsymbol{y}_{k+1}
 &=\hat{\boldsymbol{g}}_i^{k+1}-\nabla_{\boldsymbol{\theta}}J_i(\boldsymbol{\theta}_{k+1}) \notag\\
 &=\boldsymbol{y}_k+\boldsymbol{\Delta}_k+\alpha_{k+1}\big(\tilde{\boldsymbol{g}}_i^{k+1}-\hat{\boldsymbol{g}}_i^k\big).
\label{eq:appD_y_recursion}
\end{align}
Using $2\langle a,b\rangle\le \|a\|_2^2+\|b\|_2^2$ and noting that
\begin{equation}
 \big\langle \boldsymbol{y}_k,\tilde{\boldsymbol{g}}_i^{k+1}-\hat{\boldsymbol{g}}_i^k\big\rangle
 =\Xi_{k+1}-\|\boldsymbol{y}_k\|_2^2,
\end{equation}
this gives
\begin{equation}
\begin{aligned}
 \|\boldsymbol{y}_{k+1}\|_2^2
 &\le (1-2\alpha_{k+1})\|\boldsymbol{y}_k\|_2^2
 +2\alpha_{k+1}\Xi_{k+1} \\
 &\quad +2\langle \boldsymbol{y}_k,\boldsymbol{\Delta}_k\rangle
 +2\|\boldsymbol{\Delta}_k\|_2^2 \\
 &\quad +2\alpha_{k+1}^2
 \big\|\tilde{\boldsymbol{g}}_i^{k+1}-\hat{\boldsymbol{g}}_i^k\big\|_2^2.
\end{aligned}
\label{eq:appD_y_square}
\end{equation}
It can be decomposed into five parts:
\begin{align}
 \epsilon_{g,i}(t)
 &\le
 \underbrace{\frac{1}{t+1}\sum_{k=0}^{t}\frac{\mathbb{E}[\|\boldsymbol{y}_k\|_2^2-\|\boldsymbol{y}_{k+1}\|_2^2]}{2\alpha_{k+1}}}_{M_1(t)} \notag\\
 &\quad +\underbrace{\frac{1}{t+1}\sum_{k=0}^{t}\frac{1}{\alpha_{k+1}}\mathbb{E}[\|\boldsymbol{\Delta}_k\|_2^2]}_{M_2(t)} \notag\\
 &\quad +\underbrace{\frac{1}{t+1}\sum_{k=0}^{t}\alpha_{k+1}\mathbb{E}\Big[\big\|\tilde{\boldsymbol{g}}_i^{k+1}-\hat{\boldsymbol{g}}_i^k\big\|_2^2\Big]}_{M_3(t)} \notag\\
 &\quad +\underbrace{\frac{1}{t+1}\sum_{k=0}^{t}\mathbb{E}[\Xi_{k+1}]}_{M_4(t)} \notag\\
 &\quad +\underbrace{\frac{1}{t+1}\sum_{k=0}^{t}\frac{1}{\alpha_{k+1}}\mathbb{E}[\langle \boldsymbol{y}_k,\boldsymbol{\Delta}_k\rangle]}_{M_5(t)}.
\label{eq:appD_gradient_decomp}
\end{align}
In the following, we bound five terms separately.

For $M_1(t)$, let $Y_k\triangleq\mathbb{E}[\|\boldsymbol{y}_k\|_2^2]$. Then
\begin{equation}
\begin{aligned}
\sum_{k=0}^{t}
\frac{Y_k-Y_{k+1}}{2\alpha_{k+1}}
={}& \frac{Y_0}{2\alpha_1}
-\frac{Y_{t+1}}{2\alpha_{t+1}} \\
&\quad +\sum_{k=1}^{t}\left(
\frac{1}{2\alpha_{k+1}}-\frac{1}{2\alpha_k}\right)
Y_k.
\end{aligned}
\end{equation}
since $\alpha_k$ is non-increasing and $\mathbb{E}[\|\boldsymbol{y}_k\|_2^2]$ is uniformly bounded, we have
\begin{equation}
 M_1(t)=\mathcal{O}\!\Big(\frac{1}{(t+1)\alpha_{t+1}}\Big),
\label{eq:appD_M1_bound}
\end{equation}

For $M_2(t)$, $\nabla_{\boldsymbol{\theta}}J_i(\cdot)$ is Lipschitz on the compact set $\boldsymbol{\Theta}$, there exists $L_i>0$ such that

\begin{equation}
\|\boldsymbol{\Delta}_k\|_2
=\big\|\nabla_{\boldsymbol{\theta}}J_i(\boldsymbol{\theta}_k)-\nabla_{\boldsymbol{\theta}}J_i(\boldsymbol{\theta}_{k+1})\big\|_2
\le L_i\|\boldsymbol{\theta}_{k+1}-\boldsymbol{\theta}_k\|_2.
\end{equation}
The actor update \eqref{eq:theta update} gives
$
\|\boldsymbol{\theta}_{k+1}-\boldsymbol{\theta}_k\|_2
=\beta_k\|\bar{\boldsymbol{\theta}}_k-\boldsymbol{\theta}_k\|_2
\le c\beta_k,
$
with the compact set $\boldsymbol{\Theta}$. Hence
\begin{equation}
\mathbb{E}[\|\boldsymbol{\Delta}_k\|_2^2]\le c\beta_k^2,
\end{equation}
which implies
\begin{equation}
M_2(t)=\mathcal{O}\!\Bigg(\frac{1}{t+1}\sum_{k=0}^{t}\frac{\beta_k^2}{\alpha_{k+1}}\Bigg).
\label{eq:appD_M2_bound}
\end{equation}

For $M_3(t)$, $\tilde{\boldsymbol{g}}_i^{k+1}$ is formed from the bounded score function and the bounded critic output. There exists a constant $G_2<\infty$ such that
\begin{equation}
\sup_k\mathbb{E}\big[\|\tilde{\boldsymbol{g}}_i^{k+1}\|_2^2\big]\le G_2.
\end{equation}
The inequality $\|a-b\|_2^2\le 2\|a\|_2^2+2\|b\|_2^2$ gives
\begin{equation}
\mathbb{E}\Big[\big\|\tilde{\boldsymbol{g}}_i^{k+1}-\hat{\boldsymbol{g}}_i^k\big\|_2^2\Big]\le c,
\end{equation}
and therefore
\begin{equation}
M_3(t)=\mathcal{O}\!\Bigg(\frac{1}{t+1}\sum_{k=0}^{t}\alpha_{k+1}\Bigg).
\label{eq:appD_M3_bound}
\end{equation}

For $M_4(t)$, according to Section~\ref{sec:proof-of-asymptotic-consistency}, the conditional mean of the instantaneous gradient estimate satisfies
\begin{equation}
\mathbb{E}[\tilde{\boldsymbol{g}}_i^{k+1}\mid \mathcal{F}_k]
=\nabla_{\boldsymbol{\theta}}\hat{J}_i(\boldsymbol{\theta}_k)+\boldsymbol{o}_{g,i}^k,
\label{eq:appD_conditional_mean}
\end{equation}
where
\begin{equation}
\|\boldsymbol{o}_{g,i}^k\|_2
\le c\Big(\bar{\epsilon}_{\mathrm{cri}}(k)+(k+1)^{-2}+T_k\beta_{k-T_k+1}+\xi_{k+1}\Big).
\label{eq:appD_og_bound}
\end{equation}
Here $\bar{\epsilon}_{\mathrm{cri}}(k)\triangleq \max_{0\le j\le I}\epsilon_{k,j}^{\mathrm{cri}}$.

Then $\mathbb{E}[\Xi_{k+1}]$ can be expanded as:
\begin{align}
\mathbb{E}[\Xi_{k+1}]
&=\mathbb{E}\Big[\big\langle \boldsymbol{y}_k,\mathbb{E}[\tilde{\boldsymbol{g}}_i^{k+1}\mid \mathcal{F}_k]-\nabla_{\boldsymbol{\theta}}J_i(\boldsymbol{\theta}_k)\big\rangle\Big] \notag\\
&=\mathbb{E}\Big[\big\langle \boldsymbol{y}_k,\boldsymbol{o}_{g,i}^k\big\rangle\Big]
+\mathbb{E}\Big[\big\langle \boldsymbol{y}_k,\nabla_{\boldsymbol{\theta}}\hat{J}_i(\boldsymbol{\theta}_k)-\nabla_{\boldsymbol{\theta}}J_i(\boldsymbol{\theta}_k)\big\rangle\Big].
\label{eq:appD_Xi_expand}
\end{align}
We bound the two terms on the right-hand side separately.

For the first term, $\|\boldsymbol{y}_k\|_2$ is uniformly bounded in expectation because both $\hat{\boldsymbol{g}}_i^k$ and $\nabla_{\boldsymbol{\theta}}J_i(\boldsymbol{\theta}_k)$ are uniformly bounded. Hence, by \eqref{eq:appD_og_bound},
\begin{equation}
\begin{aligned}
\left|\mathbb{E}\Big[\big\langle \boldsymbol{y}_k,\boldsymbol{o}_{g,i}^k\big\rangle\Big]\right|
&\le c\,\mathbb{E}[\|\boldsymbol{o}_{g,i}^k\|_2]\\
&\le c\Big(\bar{\epsilon}_{\mathrm{cri}}(k)+(k+1)^{-2}\\
&\quad+T_k\beta_{k-T_k+1}+\xi_{k+1}\Big).
\end{aligned}
\label{eq:appD_Xi_first}
\end{equation}

Second, by \eqref{eq:appC_grad_diff_reduce}--\eqref{eq:appC_Q_mismatch_bound},
\begin{equation}
\big\|\nabla_{\boldsymbol{\theta}}\hat{J}_i(\boldsymbol{\theta}_k)-\nabla_{\boldsymbol{\theta}}J_i(\boldsymbol{\theta}_k)\big\|_2
\le c\Big(\big|\hat{J}_i^k-J_i(\boldsymbol{\theta}_k)\big|+\epsilon_{m_Q}\Big).
\label{eq:appD_gradgap}
\end{equation}
Again using the boundedness of $\|\boldsymbol{y}_k\|_2$,
\begin{align}
&\left|\mathbb{E}\Big[\big\langle \boldsymbol{y}_k,
\nabla_{\boldsymbol{\theta}}\hat{J}_i(\boldsymbol{\theta}_k)
-\nabla_{\boldsymbol{\theta}}J_i(\boldsymbol{\theta}_k)
\big\rangle\Big]\right| \notag\\
&\qquad\le c\,\mathbb{E}\Big[\big|\hat{J}_i^k-J_i(\boldsymbol{\theta}_k)\big|\Big]
+c\epsilon_{m_Q}.
\label{eq:appD_Xi_second_pre}
\end{align}
Cauchy--Schwarz gives
\begin{align}
\frac{1}{t+1}\sum_{k=0}^{t}\mathbb{E}\Big[\big|\hat{J}_i^k-J_i(\boldsymbol{\theta}_k)\big|\Big]
&\le \sqrt{\frac{1}{t+1}\sum_{k=0}^{t}\mathbb{E}\Big[\big|\hat{J}_i^k-J_i(\boldsymbol{\theta}_k)\big|^2\Big]} \notag\\
&\le \sqrt{\epsilon_J(t)}.
\label{eq:appD_abs_to_epsJ}
\end{align}
Combining \eqref{eq:appD_Xi_expand}--\eqref{eq:appD_abs_to_epsJ}, we conclude that
\begin{align}
M_4(t)
&\le \mathcal{O}\!\Bigg(\frac{1}{t+1}\sum_{k=0}^{t}\Big(
\bar{\epsilon}_{\mathrm{cri}}(k)+(k+1)^{-2} \notag\\
&\qquad\qquad\qquad
+T_k\beta_{k-T_k+1}+\xi_{k+1}\Big)\Bigg) \notag\\
&\quad +\mathcal{O}\!\Big(\sqrt{\epsilon_J(t)}+\epsilon_{m_Q}\Big).
\label{eq:appD_M4_bound}
\end{align}

For $M_5(t)$, utilizing Cauchy-Schwarz inequality, we have:
\begin{align}
|M_5(t)|
&\le \frac{1}{t+1}\sum_{k=0}^{t}\frac{1}{\alpha_{k+1}}\mathbb{E}\big[\|\boldsymbol{y}_k\|_2\,\|\boldsymbol{\Delta}_k\|_2\big] \notag\\
&\le \sqrt{\frac{1}{t+1}\sum_{k=0}^{t}\mathbb{E}[\|\boldsymbol{y}_k\|_2^2]}
\sqrt{\frac{1}{t+1}\sum_{k=0}^{t}\frac{\mathbb{E}[\|\boldsymbol{\Delta}_k\|_2^2]}{\alpha_{k+1}^2}}.
\label{eq:appD_M5_pre}
\end{align}
With the bound on $\mathbb{E}[\|\boldsymbol{\Delta}_k\|_2^2]$ derived in $M_2(t)$,
\begin{equation}
\frac{1}{t+1}\sum_{k=0}^{t}\frac{\mathbb{E}[\|\boldsymbol{\Delta}_k\|_2^2]}{\alpha_{k+1}^2}
\le c\,\frac{1}{t+1}\sum_{k=0}^{t}\frac{\beta_k^2}{\alpha_{k+1}^2}.
\end{equation}
This gives
\begin{equation}
\begin{aligned}
M_5(t)
&=\mathcal{O}\!\Big(\sqrt{\epsilon_{g,i}(t)}\,\sqrt{N(t)}\Big),\\
N(t)
&\triangleq \frac{1}{t+1}\sum_{k=0}^{t}\frac{\beta_k^2}{\alpha_{k+1}^2}.
\end{aligned}
\label{eq:appD_M5_bound}
\end{equation}

Substituting \eqref{eq:appD_M1_bound}, \eqref{eq:appD_M2_bound}, \eqref{eq:appD_M3_bound}, \eqref{eq:appD_M4_bound}, and \eqref{eq:appD_M5_bound} into \eqref{eq:appD_gradient_decomp} gives
\begin{equation}
\begin{aligned}
\epsilon_{g,i}(t)
&\le \mathcal{O}\!\Big(\sqrt{\epsilon_{g,i}(t)}\,\sqrt{N(t)}\Big)
\\
&\quad+\mathcal{O}\!\Big(\frac{1}{(t+1)\alpha_{t+1}}\Big)\\
&\quad +\mathcal{O}\!\Bigg(\frac{1}{t+1}\sum_{k=0}^{t}\Big(
\bar{\epsilon}_{\mathrm{cri}}(k)+(k+1)^{-2}\\
&\qquad\qquad
+T_k\beta_{k-T_k+1}+\alpha_{k+1}
\Big)\Bigg)\\
&\quad +\mathcal{O}\!\Bigg(\frac{1}{t+1}\sum_{k=0}^{t}
\frac{\beta_k^2}{\alpha_{k+1}}\Bigg)
\\
&\quad+\mathcal{O}\!\Bigg(\frac{1}{t+1}\sum_{k=0}^{t}\xi_{k+1}\Bigg)\\
&\quad+\mathcal{O}\!\Big(\sqrt{\epsilon_J(t)}+\epsilon_{m_Q}\Big).
\end{aligned}
\label{eq:appD_epsgi_pre_solve}
\end{equation}
Let $F_i(t)\triangleq \epsilon_{g,i}(t)$ and collect all terms other than $\sqrt{F_i(t)}\sqrt{N(t)}$ into
\begin{equation}
\begin{aligned}
R(t)
&\triangleq \mathcal{O}\!\Big(\frac{1}{(t+1)\alpha_{t+1}}\Big)
+\mathcal{O}\!\Bigg(\frac{1}{t+1}\sum_{k=0}^{t}\Big(
\bar{\epsilon}_{\mathrm{cri}}(k)+(k+1)^{-2}\\&+T_k\beta_{k-T_k+1}
+\alpha_{k+1}\Big)\Bigg)+\mathcal{O}\!\Bigg(\frac{1}{t+1}\sum_{k=0}^{t}
\frac{\beta_k^2}{\alpha_{k+1}}\Bigg)
\\&\quad+\mathcal{O}\!\Bigg(\frac{1}{t+1}\sum_{k=0}^{t}\xi_{k+1}\Bigg)
+\mathcal{O}\!\Big(\sqrt{\epsilon_J(t)}+\epsilon_{m_Q}\Big).
\end{aligned}
\label{eq:appD_R_def}
\end{equation}
Thus \eqref{eq:appD_epsgi_pre_solve} takes the form
\begin{equation}
F_i(t)\le c_1\sqrt{F_i(t)}\sqrt{N(t)}+c_2R(t)
\label{eq:appD_quadratic_ineq}
\end{equation}
for some constants $c_1,c_2>0$. Set $u\triangleq \sqrt{F_i(t)}\ge 0$. Solving this quadratic inequality yields
\begin{equation}
\begin{aligned}
u
&\le c_1\sqrt{N(t)}+\sqrt{c_2R(t)},\\
\epsilon_{g,i}(t)
&=\mathcal{O}\!\big(N(t)+R(t)\big).
\end{aligned}
\label{eq:appD_epsgi_intermediate}
\end{equation}
Substituting the definitions of $N(t)$ and $R(t)$ into \eqref{eq:appD_epsgi_intermediate} yields

\begin{equation}
\begin{aligned}
\epsilon_{g,i}(t)
&\le \mathcal{O}\!\Big(\frac{1}{(t+1)\alpha_{t+1}}\Big)\\
&\quad +\mathcal{O}\!\Bigg(\frac{1}{t+1}\sum_{k=0}^{t}\Big(
\bar{\epsilon}_{\mathrm{cri}}(k)+(k+1)^{-1}+T_k\beta_{k-T_k+1}\Big)\Bigg)\\
&\quad +\mathcal{O}\!\Bigg(\frac{1}{t+1}\sum_{k=0}^{t}\Big(
\beta_k^2(\alpha_{k+1}^{-2}+\alpha_{k+1}^{-1})+\alpha_{k+1}\Big)\Bigg)\\
&\quad +\mathcal{O}\!\Bigg(\frac{1}{t+1}\sum_{k=0}^{t}\xi_{k+1}\Bigg)
{}+\mathcal{O}\!\Big(\sqrt{\epsilon_J(t)}+\epsilon_{m_Q}\Big).
\end{aligned}
\label{eq:appD_epsgi_final_fixed_i}
\end{equation}

The accumulated gradient-tracking error $\epsilon_g(t)$ is bounded by the sum over the componentwise errors:
\begin{equation}
\epsilon_g(t)=\frac{1}{t+1}\sum_{k=0}^{t}\max_{0\le j\le I}\mathbb{E}\Big[\big\|\hat{\boldsymbol{g}}_j^k-\nabla_{\boldsymbol{\theta}}J_j(\boldsymbol{\theta}_k)\big\|_2^2\Big]
\le \sum_{j=0}^{I}\epsilon_{g,j}(t).
\end{equation}
The proof of Lemma~\ref{lemma-convergence-rate-surrogate} follows.

\bibliographystyle{IEEEtran}
\bibliography{RLreferences}

@article{wang2024sldac,
  title={Single-loop deep actor-critic for constrained reinforcement learning with provable convergence},
  author={Wang, Kexuan and Liu, An and Lin, Baishuo},
  journal={IEEE Transactions on Signal Processing},
  volume={72},
  pages={4871--4887},
  year={2024},
  publisher={IEEE}
}

@inproceedings{zhang2025prcrl,
  title={A Policy Reuse Reinforcement Learning Framework for Hard Latency Constrained Resource Scheduling},
  author={Zhang, Luyuan and Liu, An},
  booktitle={2025 IEEE Wireless Communications and Networking Conference (WCNC)},
  pages={1--6},
  year={2025},
  organization={IEEE}
}

@article{zhang2025hrl,
  title={A hybrid reinforcement learning framework for hard latency constrained resource scheduling},
  author={Zhang, Luyuan and Liu, An and Wang, Kexuan},
  journal={IEEE Internet of Things Journal},
  year={2025},
  publisher={IEEE}
}

@article{stolyar2001largestLWDF,
  title={Largest weighted delay first scheduling: Large deviations and optimality},
  author={Stolyar, Alexander L and Ramanan, Kavita},
  journal={Annals of Applied Probability},
  pages={1--48},
  year={2001},
  publisher={JSTOR}
}

@article{tassiulas1992stability,
  title={Stability properties of constrained queueing systems and scheduling policies for maximum throughput in multihop radio networks},
  author={Tassiulas, Leandros and Ephremides, Anthony},
  journal={IEEE Transactions on Automatic Control},
  volume={37},
  number={12},
  pages={1936--1948},
  year={1992},
  doi={10.1109/9.182479},
  publisher={IEEE}
}

@book{neely2010stochastic,
  title={Stochastic Network Optimization with Application to Communication and Queueing Systems},
  author={Neely, Michael J.},
  series={Synthesis Lectures on Learning, Networks, and Algorithms},
  publisher={Springer, Cham},
  year={2010},
  doi={10.1007/978-3-031-79995-2}
}

@article{shi2011wmmse,
  title={An Iteratively Weighted {MMSE} Approach to Distributed Sum-Utility Maximization for a {MIMO} Interfering Broadcast Channel},
  author={Shi, Qingjiang and Razaviyayn, Meisam and Luo, Zhi-Quan and He, Chen},
  journal={IEEE Transactions on Signal Processing},
  volume={59},
  number={9},
  pages={4331--4340},
  year={2011},
  doi={10.1109/TSP.2011.2147784},
  publisher={IEEE}
}

@article{chen2020RL4RRM,
  title={Age of information aware radio resource management in vehicular networks: A proactive deep reinforcement learning perspective},
  author={Chen, Xianfu and Wu, Celimuge and Chen, Tao and Zhang, Honggang and Liu, Zhi and Zhang, Yan and Bennis, Mehdi},
  journal={IEEE Transactions on wireless communications},
  volume={19},
  number={4},
  pages={2268--2281},
  year={2020},
  publisher={IEEE}
}

@article{zangooei2023RL4RRMsurvey,
  title={Reinforcement learning for radio resource management in RAN slicing: A survey},
  author={Zangooei, Mohammad and Saha, Niloy and Golkarifard, Morteza and Boutaba, Raouf},
  journal={IEEE Communications Magazine},
  volume={61},
  number={2},
  pages={118--124},
  year={2023},
  publisher={IEEE}
}

@inproceedings{fernandez2006policyreuse,
  title={Probabilistic policy reuse in a reinforcement learning agent},
  author={Fern{\'a}ndez, Fernando and Veloso, Manuela},
  booktitle={Proceedings of the fifth international joint conference on Autonomous agents and multiagent systems},
  pages={720--727},
  year={2006}
}

@article{fernandez2010policyreuse,
  title={Probabilistic policy reuse for inter-task transfer learning},
  author={Fern{\'a}ndez, Fernando and Garc{\'\i}a, Javier and Veloso, Manuela},
  journal={Robotics and Autonomous Systems},
  volume={58},
  number={7},
  pages={866--871},
  year={2010},
  publisher={Elsevier}
}

@article{taylor2009transferlearning,
  title={Transfer learning for reinforcement learning domains: A survey.},
  author={Taylor, Matthew E and Stone, Peter},
  journal={Journal of Machine Learning Research},
  volume={10},
  number={7},
  year={2009}
}

@book{CMDP,
  title={Constrained Markov decision processes},
  author={Altman, Eitan},
  volume={7},
  year={1999},
  publisher={CRC press}
}

@article{SCAOPO,
  title={Successive convex approximation based off-policy optimization for constrained reinforcement learning},
  author={Tian, Chang and Liu, An and Huang, Guan and Luo, Wu},
  journal={IEEE Transactions on Signal Processing},
  volume={70},
  pages={1609--1624},
  year={2022},
  publisher={IEEE}
}

@inproceedings{CPO,
  title={Constrained policy optimization},
  author={Achiam, Joshua and Held, David and Tamar, Aviv and Abbeel, Pieter},
  booktitle={ICML},
  pages={22--31},
  year={2017},
  organization={PMLR}
}

@ARTICLE{CSSCA,
  author={Liu, An and Lau, Vincent K. N. and Kananian, Borna},
  journal={IEEE Trans. Signal Process.},
  title={Stochastic Successive Convex Approximation for Non-Convex Constrained Stochastic Optimization},
  year={2019},
  volume={67},
  number={16},
  pages={4189-4203},
  doi={10.1109/TSP.2019.2925601}}

@article{Lemma3,
  title={Feasible direction methods for stochastic programming problems},
  author={Ruszczy{\'n}ski, Andrzej},
  journal={Mathematical Programming},
  number = {1},
  volume={19},
  pages={220--229},
  year={1980},
  publisher={Springer}}

@inproceedings{DQlearning,
  title={A finite-time analysis of {Q}-learning with neural network function approximation},
  author={Xu, Pan and Gu, Quanquan},
  booktitle={ICML},
  pages={10555--10565},
  year={2020},
  organization={PMLR}
}

@article{Cao2019b,
  title={Generalization bounds of stochastic gradient descent for wide and deep neural networks},
  author={Cao, Yuan and Gu, Quanquan},
  journal={Proc. Adv. Neural Inf. Process. Syst.},
  volume={32},
  pages = {10835--10845},
  year={2019}
}

@inproceedings{Allen2019b,
  title={A convergence theory for deep learning via over-parameterization},
  author={Allen-Zhu, Zeyuan and Li, Yuanzhi and Song, Zhao},
  booktitle={ICML},
  pages={242--252},
  year={2019},
  organization={PMLR}
}

@book{Gaussianpolicy,
  title={Reinforcement learning: An introduction},
  author={Sutton, Richard S and Barto, Andrew G},
  year={2018},
  publisher={MIT press}
}

@phdthesis{Slater,
  title={Successive convex approximation: Analysis and applications},
  author={Razaviyayn, Meisam},
  year={2014},
  school={University of Minnesota}
}

@article{linnerAC,
  title={On finite-time convergence of actor-critic algorithm},
  author={Qiu, Shuang and Yang, Zhuoran and Ye, Jieping and Wang, Zhaoran},
  journal={IEEE J. Sel. Areas Inf. Theory},
  volume={2},
  number={2},
  pages={652--664},
  year={2021},
  publisher={IEEE}
}

@article{tsitsiklis1999avgtd,
  title={Average cost temporal-difference learning},
  author={Tsitsiklis, John N and Van Roy, Benjamin},
  journal={Automatica},
  volume={35},
  number={11},
  pages={1799--1808},
  year={1999},
  publisher={Elsevier}
}

@inproceedings{Assumption31,
  title={A theoretical analysis of deep {Q}-learning},
  author={Fan, Jianqing and Wang, Zhaoran and Xie, Yuchen and Yang, Zhuoran},
  booktitle={Proc. Learn. Dyn. Control},
  pages={486--489},
  year={2020},
  organization={PMLR}
}

@inproceedings{Assumption32,
  title={Boosted fitted {Q}-iteration},
  author={Tosatto, Samuele and Pirotta, Matteo and d'Eramo, Carlo and Restelli, Marcello},
  booktitle={ICML},
  pages={3434--3443},
  year={2017},
  organization={PMLR}
}

@article{PPOLagTRPOLag,
  title={Benchmarking safe exploration in deep reinforcement learning},
  author={Ray, Alex and Achiam, Joshua and Amodei, Dario},
  journal={arXiv preprint arXiv:1910.01708},
  volume={7},
  number={1},
  pages={2},
  year={2019}
}

@article{RZF,
  title={A vector-perturbation technique for near-capacity multiantenna multiuser communication-part {I}: channel inversion and regularization},
  author={Peel, Christian B and Hochwald, Bertrand M and Swindlehurst, A Lee},
  journal={IEEE Trans. Commun.},
  volume={53},
  number={1},
  pages={195--202},
  year={2005},
  publisher={IEEE}
}

@article{TwoTimescaleAC,
  title={A finite-time analysis of two time-scale actor-critic methods},
  author={Wu, Yue Frank and Zhang, Weitong and Xu, Pan and Gu, Quanquan},
  journal={Advances in Neural Information Processing Systems},
  volume={33},
  pages={17617--17628},
  year={2020}
}

@inproceedings{average8,
  title={On-policy deep reinforcement learning for the average-reward criterion},
  author={Zhang, Yiming and Ross, Keith W},
  booktitle={International Conference on Machine Learning},
  pages={12535--12545},
  year={2021},
  organization={PMLR}
}

@inproceedings{average1,
  title={Model-free reinforcement learning in infinite-horizon average-reward markov decision processes},
  author={Wei, Chen-Yu and Jahromi, Mehdi Jafarnia and Luo, Haipeng and Sharma, Hiteshi and Jain, Rahul},
  booktitle={International conference on machine learning},
  pages={10170--10180},
  year={2020},
  organization={PMLR}
}

@ARTICLE{huang2024beamformingisac,
  author={Huang, Zhe and Liu, An},
  journal={Mobile Communications},
  title={Beamforming Optimization for Integrated Sensing and Communication Systems: A Deep Reinforcement Learning Approach},
  year={2024},
  volume={48},
  number={10},
  pages={41-48},
  doi={10.3969/j.issn.1006-1010.20241013-0003}}

\end{document}